\definecolor{buy}{RGB}{182,18,49}
\definecolor{sell}{HTML}{82e057}
\definecolor{lightblue}{HTML}{77CDFF}
\tikzset{
	vx/.style={circle,fill,black,minimum size=7pt,inner sep=0pt,outer sep=0pt},
	comp/.style={circle,line width=2pt,draw=black,minimum size=0.6cm,inner sep=0pt, outer sep=1pt},
	pac/.style={circle,line width=0pt,minimum size=0.2cm,inner sep=0pt, outer sep=-2pt},
	edge/.style={draw=black,line width=1.7pt},
	ledge/.style={draw=black!20!white,line width=1.4pt,dashed},
	mypath/.style={draw=black!20!white,line width=1.4pt},
}
\theoremstyle{plain}
\newtheorem{theorem}{Theorem}
\newtheorem{invariant}{Invariant}
\newtheorem{lemma}{Lemma}
\newtheorem{claim}{Claim}
\newtheorem{definition}{Definition}
\newtheorem{observation}{Observation}
\newtheorem{fact}{Fact}
\newtheorem{remark}{Remark}
\newtheorem{corollary}{Corollary}
\newcommand{\eps}{\varepsilon}
\newcommand{\ALG}{\mathsf{ALG}}
\newcommand{\RED}{\mathsf{RED}}
\newcommand{\OPT}{\mathsf{OPT}}
\newcommand{\opt}{\mathsf{opt}}
\newcommand{\APX}{\mathsf{APX}}
\newcommand{\cost}{\mathsf{cost}}
\newcommand{\credit}{\mathsf{cr}}
\newcommand{\poly}{\mathsf{poly}}
\newcounter{casei}
\newcounter{caseii}[casei]
\newcounter{caseiii}[caseii]
\newenvironment{caseanalysis}{\setcounter{casei}{0}}{}
\newcommand{\case}[1]{%
\refstepcounter{casei}%
\noindent\textbf{\boldmath(\alph{casei}) #1}\unboldmath%
\phantomsection%
}
\newcommand{\subcase}[1]{%
\refstepcounter{caseii}%
\noindent\textbf{\boldmath(\alph{casei}.\arabic{caseii}) #1}\unboldmath%
\phantomsection%
}
\newcommand{\subsubcase}[1]{%
\refstepcounter{caseiii}%
\noindent\textbf{\boldmath(\alph{casei}.\arabic{caseii}.\roman{caseiii}) #1}\unboldmath%
\phantomsection%
}
\def\DEBUG{true}
    \newcommand{\fel}[1]{\textcolor{teal}{#1}}
    \newcommand{\moh}[1]{\textcolor{purple}{#1}}
    \newcommand{\fabr}[1]{\todo{\textcolor{red}{$\bullet$ #1}}}
    \newcommand{\afr}[1]{\todo{\textcolor{olive}{$\bullet$ #1}}}
    \newcommand{\mig}[1]{\todo{\textcolor{blue}{$\bullet$ #1}}}
    \newcommand{\alex}[1]{\todo[backgroundcolor=white]{\textcolor{orange}{$\bullet$ #1}}}
    \newcommand{\felix}[1]{\todo[backgroundcolor=white]{\textcolor{teal}{$\bullet$ #1}}}
    \newcommand{\mohi}[1]{\todo{\textcolor{purple}{$\bullet$ #1}}}
    \newcommand{\fabr}[1]{}
    \newcommand{\mig}[1]{}
    \newcommand{\alex}[1]{}
    \newcommand{\felix}[1]{}
    \newcommand{\fel}[1]{}
    \newcommand{\afr}[1]{}
    \newcommand{\moh}[1][#1]
    \newcommand{\mohi}[1]{}
\title{A $5/4$-Approximation for Two-Edge Connectivity}
\author{Miguel Bosch-Calvo\thanks{\texttt{\{miguel.boschcalvo,fabrizio\}@idsia.ch}, IDSIA, USI-SUPSI, Lugano, Switzerland. Partially supported by the SNF Grant 200021\_200731 / 1.} \and Mohit Garg\thanks{\texttt{mohitgarg@iisc.ac.in}, Indian Institute of Science, Bengaluru, India. Supported by a fellowship from the Walmart Center for Tech Excellence at IISc (CSR Grant WMGT-23-
0001).} \and Fabrizio Grandoni\footnotemark[1] \and Felix Hommelsheim\thanks{\texttt{\{fhommels,linderal\}@uni-bremen.de}, Faculty of Mathematics and Computer Science, University of Bremen, Germany.} \and Afrouz Jabal Ameli\thanks{\texttt{a.jabal.ameli@tue.nl}, TU Eindhoven, Eindhoven, The Netherlands.} \and Alexander Lindermayr\footnotemark[3]}
\date{}
\begin{document}

\maketitle
\begin{abstract}
The $2$-Edge-Connected Spanning Subgraph problem (2ECSS) is among the most basic survivable network design problems: given an undirected and unweighted graph, the task is to find a spanning subgraph with the minimum number of edges that is 2-edge-connected (i.e., it remains connected after the removal of any single edge). 2ECSS is an NP-hard problem that has been extensively studied in the context of approximation algorithms. The best known approximation ratio for 2ECSS prior to this work was $1.3+\eps$, for any constant $\eps>0$ [Garg, Grandoni, Jabal-Ameli'23; Kobayashi, Noguchi'23]. In this paper, we present a $5/4$-approximation algorithm. Our algorithm is also faster for small values of $\eps$: its running time is $n^{O(1)}$ instead of $n^{O(1/\eps)}$.  
\end{abstract}

\thispagestyle{empty}

\newpage

\thispagestyle{empty}



\setcounter{page}{1}

\section{Introduction}

Real-world networks are prone to failures. The fundamental goal of \emph{survivable network design} is to build low-cost networks that provide the desired connectivity between pairs or groups of nodes despite the failure of a few edges or nodes. One of the most basic survivable network design problems is the 2-Edge-Connected Spanning Subgraph problem (2ECSS): we are given an undirected, unweighted graph $G=(V,E)$. A feasible solution is a subset of edges $S\subseteq E$ such that the subgraph $G'=(V,S)$ is 2-edge-connected (2EC)\footnote{We recall that a graph is $k$-edge-connected (kEC) if it remains connected after the removal of an arbitrary subset of $k-1$ edges.}. In this scenario, $G'$ ensures connectivity among all pairs of nodes even in the presence of a single edge fault. Our goal is to find an (optimal) feasible solution $\OPT=\OPT(G)$ of minimum cardinality (or \emph{size}) $\opt=\opt(G)$. 

2ECSS is obviously NP-hard: a graph with $n$ nodes has a Hamiltonian cycle iff it contains a 2EC spanning subgraph with $n$ edges. In fact, 2ECSS is APX-hard \cite{CL99,F98}, which rules out the existence of a PTAS for it unless $P=NP$. A significant amount of research has been devoted to designing algorithms for 2ECSS with small approximation ratios. 
It is easy to compute a $2$-approximation for this problem.
For example, one can compute a DFS tree and augment it by picking the highest back edge for each non-root vertex. Khuller and Vishkin \cite{KV94} found the first non-trivial $3/2$-approximation algorithm. Cheriyan, Seb{\"{o}}, and Szigeti \cite{CSS01} improved the approximation ratio to $17/12$. This was further improved to $4/3$ in two independent (also in terms of techniques) works by Hunkenschr{\"o}der, Vempala, and Vetta \cite{HVV19} and Seb{\"o} and Vygen \cite{SV14}. 
In a recent work, Garg, Grandoni, and Jabal Ameli \cite{GGJ23soda} obtained a $\frac{118}{89}+\eps<1.326$ approximation, based on a rather complex case analysis. Shortly afterward, Kobayashi and Noguchi \cite{KN23} observed that one can replace a 2-edge cover used in \cite{GGJ23soda} with a triangle-free 2-edge cover, simplifying the analysis in \cite{GGJ23soda} while obtaining an improved $(1.3 + \varepsilon)$-approximation. 
Until this work, this was the best known approximation ratio. The resulting shortened analysis remains very complex. Our main result is as follows.
\begin{theorem}\label{thr:main}
There is a deterministic $5/4$-approximation algorithm for 2ECSS that runs in polynomial time.
\end{theorem}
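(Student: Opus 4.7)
The plan is to follow the ``lower bound plus credit-based gluing'' framework developed in \cite{GGJ23soda,KN23} while sharpening the credit budget to exactly $1/4$ per vertex. First I would compute, in polynomial time, a lower-bound subgraph $F$ whose components are $2$-edge-connected and triangle-free and that satisfies $|F| \leq \opt$. A natural candidate is a minimum triangle-free $2$-edge cover, computed via matroid intersection; the triangle-free restriction is essential because in such a $2$-edge cover every component has at least $4$ vertices, which will drive the credit argument. Where $\opt$ itself may contain triangles, I would show by a short local-exchange lemma that $\opt$ can be transformed without increasing its cardinality into a witness certifying $|F|\leq\opt$ in the triangle-free setting.

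Next I would assign a credit of $1/4$ to every vertex of $G$, for a total credit of $n/4 \leq \opt/4$, using the bound $\opt \geq n$. The algorithm then augments $F$ into a $2$EC spanning subgraph by repeatedly merging components, spending one unit of credit per extra edge. Each merging step is realised by an ear (a path in $E \setminus F$) connecting two or more components of $F$; since every component of $F$ has at least $4$ vertices, every component brings at least $1$ unit of credit. The key combinatorial claim to establish is that a merging step can always be chosen so that the number of extra edges it uses is at most the credit released by the components it absorbs, i.e.\ at most a quarter of the vertices merged.

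The main obstacle will be a careful case analysis for small components, in particular $4$-cycles, where the credit budget is tight: such a component brings exactly one credit and hence must be merged via an ear contributing at most one new edge, ruling out most ear configurations. I would resolve these cases by a structural classification of adjacent $4$-cycles and their incident edges in $E \setminus F$. If no cheap merge is available, I would argue that $G$ contains a forbidden local substructure that either permits a swap that would reduce $|F|$ (contradicting minimality of the triangle-free $2$-edge cover) or enables a ``bulk'' merge of several small components that share neighbours, amortizing credit across them. This local rule replaces the $n^{O(1/\eps)}$ enumeration used in \cite{GGJ23soda,KN23} with a combinatorial argument that runs in $n^{O(1)}$ time.

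Summing up, the final subgraph contains at most $|F| + n/4 \leq \opt + \opt/4 = (5/4)\opt$ edges, proving the theorem. The polynomial running time follows because every subroutine (triangle-free $2$-edge cover by matroid intersection, iterative component merging, and the case analysis for small components) is polynomial and the number of merging steps is bounded by the initial number of components of $F$, which is at most $n/4$.
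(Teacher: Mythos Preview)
Your proposal captures the right high-level framework (triangle-free $2$-edge cover as lower bound, credit-based merging), but it has several genuine gaps that would prevent the argument from going through.

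First, a minimum triangle-free $2$-edge cover need not have $2$-edge-connected components: it can have \emph{complex} components containing bridges and multiple 2EC blocks. You assert that the components of $F$ are 2EC, but this is false in general, and the paper devotes an entire phase (``bridge covering'') to eliminating bridges before any merging begins. Also, a minimum triangle-free $2$-edge cover is not computed by matroid intersection; it goes through Hartvigsen's algorithm (or the recent PTAS) for maximum triangle-free $2$-matching, which is substantially harder than ordinary $2$-edge cover.

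Second, and more seriously, you omit the preprocessing reduction to \emph{structured} graphs. The paper's merging arguments rely on the $3$-Matching Lemma (and its local version restricted to a 2VC block of the component graph), on the absence of small $5/4$-contractible subgraphs, and on all $2$-vertex cuts being isolating. None of these hold for an arbitrary 2EC input. Without them, your ``forbidden local substructure'' argument has no teeth: for instance, a $4$-cycle component might be attached to the rest of $G$ through only two edges incident to a single vertex, and then no cheap merge and no cardinality-reducing swap exists. The paper's contractibility notion is exactly what rules this out, but it requires the reduction first. Your proposal also loses the additive ``$-2$'' in the bound ($\frac{5}{4}\opt-2$), which is what the reduction needs to avoid the $+\eps$ loss of prior work.

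Third, the ``key combinatorial claim'' you state is the entire technical content of the result, and you give no mechanism for it. The paper's gluing phase is a lengthy case analysis driven by the size of the 2VC block $B$ of the component graph containing a designated large component: it uses a Hamiltonian-Pairs Lemma for cycles of length $\le 7$, constructs cycles in $B$ with prescribed incidences (Lemma~\ref{lem:niceCycle}, Lemma~\ref{lem:shortcutC4localC5}), and for \emph{non-local} small cycles reaches into a second 2VC block to borrow credit (Corollaries~\ref{cor:shortcutNonLocalC5}, \ref{cor:shortcutNonLocalC6C7}). The tightest cases are not the $4$-cycles you single out but non-local $5$-cycles and the $|V(B)|=2$ case with a $6$-cycle adjacent to a $5$-cycle; these require deleting edges from the current cover and cannot be handled by a generic ``bulk merge'' amortisation. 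Your sketch does not indicate how any of this would be carried out.
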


\subsection{Related Work}

The $2$-Vertex-Connected Spanning Subgraph problem (2VCSS) is the node-connectivity version of 2ECSS. The input is the same as in 2ECSS, and the objective remains to minimize the number of edges in the chosen subgraph $G'$. However, in this case, $G'$ must be 2-vertex-connected (2VC)\footnote{We recall that a graph $G'$ is $k$-vertex-connected (kVC) if it remains connected after the removal of an arbitrary subset of $k-1$ nodes. In other words, $G'$ has no vertex cut of size at most $k-1$.}. In other words, $G'$ does not contain any \emph{cut node}. A 2-approximation for 2VCSS can be obtained in different ways. For example, one can compute an open ear decomposition of the input graph and remove the \emph{trivial} ears, i.e., those consisting of a single edge. The resulting graph is 2VC and contains at most $2n-3$ edges (while the optimum solution must contain at least $n$ edges). The first non-trivial $5/3$-approximation was obtained by Khuller and Vishkin \cite{KV94}. This was improved to $3/2$ by Garg, Vempala, and Singla \cite{GVS93} (see also an alternative $3/2$-approximation by Cheriyan and Thurimella \cite{CT00}), and further to $10/7$ by Heeger and Vygen \cite{HV17}. The current best $4/3$-approximation is due to Bosch-Calvo, Grandoni, and Jabal Ameli \cite{BGJ23}.

The k-Edge-Connected Spanning Subgraph problem (kECSS) is the natural generalization of 2ECSS to any connectivity $k\geq 2$ (see, e.g., \cite{CHNSS22, CT00,GG12, GGTW09}).

A survivable network design problem related to kECSS is the $k$-Connectivity Augmentation Problem (kCAP): given a $k$-edge-connected graph $G=(V,E)$ and a collection of extra edges $L$ (\emph{links}), compute a minimum-cardinality subset of links $S$ such that $G'=(V,E\cup S)$ is $(k+1)$-edge-connected. Multiple better-than-$2$ approximation algorithms are known for $k=1$ (hence for odd $k$ by a known reduction \cite{DKL76}), i.e., for the Tree Augmentation Problem (TAP) \cite{A19,CTZ21,CG18,CG18a,EFKN09,FGKS18,GKZ18,KN16,KN16b,N03}. The first such approximation was obtained only recently for an arbitrary $k$ by Byrka, Grandoni, and Jabal Ameli \cite{BGJ20} (later improved substantially in \cite{CTZ21}, see also \cite{GGJS19}). Grandoni, Jabal Ameli, and Traub \cite{GJT22} presented the first better-than-$2$ approximation for the Forest Augmentation Problem (FAP), i.e., the generalization of TAP where the input graph $G$ is a forest rather than a tree. Better approximation algorithms are known for the Matching Augmentation Problem (MAP), i.e., the special case of FAP when the input forest is a matching \cite{BDS22,CCDZ23,CDGKN20,GHM23}. The best known approximation ratio of $13/8$ for MAP is due to Garg, Hommelsheim, and Megow~\cite{GHM23}.
Better-than-$2$ approximation ratios are also known for the vertex-connectivity version of TAP \cite{AHS23,N20waoa}.

For all the mentioned problems, one can naturally define a weighted version. Here, a general result by Jain \cite{J01} gives a $2$-approximation, and this was the best known approximation ratio until very recently. In a recent breakthrough \cite{TZ21}, Traub and Zenklusen presented a $1.694$-approximation for the weighted version of TAP (later improved to $1.5+\eps$ by the same authors \cite{TZ22}). Partial results in this direction were achieved earlier in \cite{A19,CN13,FGKS18,GKZ18,N17}. Even more recently, Traub and Zenklusen obtained a $(1.5+\eps)$-approximation algorithm for the weighted version of kCAP \cite{TZ23}. Finding a better-than-$2$ approximation algorithm for the weighted version of 2ECSS remains a major open problem in the area.

\section{Overview of our Approach}
\label{sec:overview}

We use standard graph notation. Given a subset of edges $F\subseteq E$, we interchangeably use $F$ and the corresponding subgraph $G'=(W,F)$, where $W = \{v\in V(G) \mid v\in f \text{ for some } f \in F\}$. The meaning will be clear from the context. For example, we might say that $F$ is 2EC and denote by $|G'|$ the number $|F|$ of its edges. We sometimes denote paths and cycles as sequences of edges $e_1\ldots e_{h-1}$ or nodes $v_1\ldots v_h$.

At a high level, our approach is similar to \cite{GGJ23soda,KN23} and also vaguely similar to \cite{BGJ23, CCDZ23,CDGKN20,GHM23}, which address related problems. The first step of our construction (details in Section \ref{sec:reductionStructured}) is a reduction to a special class of instances with a specific structure. We use the notion of \emph{structured} graphs introduced in \cite{GGJ23soda}, with minor adaptations (see also Figure \ref{fig:structured}).
\begin{definition}
    Given a real number $\alpha > 1$, a graph $G$ is $\alpha$-structured if it is simple, 2VC, contains at least $\frac{4}{\alpha - 1}$ nodes, and satisfies the following conditions:
    \begin{enumerate}\itemsep0pt
        \item  It does not contain \emph{$\alpha$-contractible} subgraphs with at most $\frac{2}{\alpha - 1}$ nodes. A subgraph $C$ is $\alpha$-contractible if it is 2EC and every 2EC spanning subgraph of $G$ contains at least $|E(C)|/\alpha$ edges of $G[V(C)]$;
        \item It does not contain any \emph{irrelevant} edge. An edge $uv$ is irrelevant if $\{u,v\}$ is a $2$-vertex cut of $G$;
        \item All its $2$-vertex cuts are \emph{isolating}. A $2$-vertex cut $\{u,v\}$ is isolating if $G\setminus \{u,v\}$ has exactly two connected components, one of which contains exactly $1$ (isolated) node.
    \end{enumerate}
\end{definition}

\begin{figure}
    \centering
    \includegraphics[width=\linewidth]{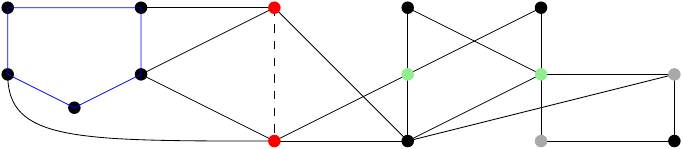}
    \caption{This figure is similar to figures in~\cite{BGJ23} and~\cite{GGJ23soda}, where the authors also use similar notions of $\alpha$-structured graphs. The subgraph induced by the blue edges is a $5/4$-contractible graph. The red and green (resp. gray) pairs of vertices form a non-isolating (resp. isolating) cut. The dashed edge is irrelevant.}
    \label{fig:structured}
\end{figure}

In \cite{GGJ23soda}, it is shown how to turn an $\alpha\geq 6/5$ approximation for 2ECSS on $\alpha$-structured graphs into an $(\alpha+\eps)$-approximation for 2ECSS in general, for any constant $\eps>0$. We use a slightly refined reduction that allows us to obtain a clean $5/4$-approximation algorithm, without an additive constant. Our reduction is also more efficient: it results in an overall running time of $n^{O(1)}$ instead of $n^{O(1/\eps)}$ as in \cite{GGJ23soda}.
\begin{lemma}\label{lem:preprocessing}
    For every constant $\alpha\geq 6/5$, if there exists a deterministic polynomial-time algorithm for 2ECSS on $\alpha$-structured graphs $G$ that returns a solution of size at most $\alpha\cdot\opt(G) - 2$, then there exists a deterministic polynomial-time $\alpha$-approximation algorithm for 2ECSS.
\end{lemma}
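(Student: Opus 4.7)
The approach is a recursive polynomial-time procedure $\APX$ that maintains the stronger invariant $|\APX(G)| \leq \alpha \cdot \opt(G) - 2$ for every valid input $G$ (2VC with at least $4/(\alpha-1)$ vertices), proceeding by strong induction on $|V(G)|$. The hypothesis directly provides this invariant when $G$ is $\alpha$-structured, so the task is to recognize when $G$ fails to be structured and to reduce to strictly smaller instances without breaking the invariant. Inputs that are only 2EC but not 2VC are handled by a preliminary block decomposition (the optimum decomposes additively over 2VC blocks), and inputs with $|V(G)| < 4/(\alpha-1)$ are solved by brute force in polynomial time since $4/(\alpha-1)$ is an absolute constant.

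The three structural obstructions are handled by standard reductions. \emph{Irrelevant edges:} an edge $uv$ whose endpoints form a 2-vertex cut can be removed without changing $\opt(G)$ via an exchange argument using the alternative $u$-$v$ connectivity offered by the two sides of the cut; recurse on $G \setminus \{uv\}$. \emph{$\alpha$-contractible subgraphs:} if $C$ is an $\alpha$-contractible subgraph with at most $2/(\alpha-1)$ nodes, contract $C$ to a single vertex in $G'$, recurse, and reattach by including every edge of $G[V(C)]$. The defining inequality $\opt(G) \geq \opt(G') + |E(C)|/\alpha$ exactly compensates for the $|E(C)|$ re-added edges under the factor-$\alpha$ budget, preserving the invariant. \emph{Non-isolating 2-vertex cuts:} split $G$ along $\{u,v\}$ into two subgraphs $G_1,G_2$, each augmented by the (possibly virtual) edge $uv$; recurse on both and glue, dropping redundant copies of $uv$ from the output. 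The key inequality $\opt(G_1)+\opt(G_2) \leq \opt(G)+2$, obtained by restricting $\OPT(G)$ to each side and adding the virtual $uv$, drives the accounting.

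The main obstacle is the bookkeeping for the split-cut reduction. Summing the inductive bounds gives $|\APX_1|+|\APX_2| \leq \alpha\cdot\opt(G) + 2\alpha - 4$, so reaching the target $\alpha\cdot\opt(G) - 2$ forces the gluing step to save $2\alpha-2$ edges; removing duplicate virtual copies of $uv$ can save up to $2$ (enough for $\alpha\leq 2$), but only if at least one recursive solution actually contains $uv$. The tight regime near $\alpha=6/5$ demands an additional argument --- either a careful choice of which cut to process first so that the sub-instance geometry forces the virtual edge into the solution, or a swap argument converting a sub-solution not using $uv$ into one that does at no extra cost. Verifying that the threshold $\alpha\geq 6/5$ together with the vertex lower bound $4/(\alpha-1)$ is exactly what allows the recursion to close, and that every recursive sub-instance remains 2VC with enough vertices for the inductive hypothesis to apply (compared to small sub-instances that must be routed to the brute-force base case), is the most delicate part of the analysis. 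A straightforward monovariant on $|V(G)|+|E(G)|$ guarantees polynomial running time.
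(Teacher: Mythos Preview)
Your plan correctly identifies the recursive structure and the three obstructions, and the treatment of irrelevant edges and $\alpha$-contractible subgraphs is fine. The genuine gap is in the non-isolating $2$-vertex-cut step, and you essentially admit this yourself: after deriving $|\APX_1|+|\APX_2|\le \alpha\cdot\opt(G)+2\alpha-4$ you need to save $2\alpha-2$ edges, and you only sketch two hopes (``choose the cut carefully'' or ``swap the virtual edge in'') without verifying either. In fact neither works in general. If both recursive solutions happen to avoid the virtual edge $uv$, each is already a 2EC spanning subgraph of its side (``type~A''), and there is nothing to drop; your accounting then misses the target by a fractional amount that rounds to one whole edge. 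There is no free swap forcing $uv$ into such a solution, and no ordering of cuts prevents this situation.

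The paper resolves this with a different construction, not a patch of yours. When both sides have more than $2/(\alpha-1)$ vertices, it \emph{contracts} $\{u,v\}$ to a single node on each side rather than adding a virtual edge; this gives $\opt(G_i')\le \opt_i$ with no $+1$ slack, so summing the inductive bounds and adding at most two fix-up edges lands exactly on $\alpha\cdot\opt(G)-2$. When one side $G_1$ is small ($|V(G_1)|\le 2/(\alpha-1)$), it is solved \emph{optimally by brute force} using a structural classification of how $\OPT$ can look on $G_1$ (2EC, a block-path from $u$ to $v$, or two 2EC pieces --- ``types A/B/C''). A key non-obvious claim, which crucially uses that all small $\alpha$-contractible subgraphs have already been eliminated, shows one may assume $\OPT\cap G_1$ is \emph{not} type~A; this is what lets the large-side instance be built (either $G_2\cup\{uv\}$ or $G_2$ plus a degree-$2$ dummy) so that the recursive bound closes. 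None of this machinery appears in your plan.

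Two smaller issues: the sub-instances produced by a $2$-cut split need not be 2VC, so the $1$-vertex-cut decomposition must live inside the recursion, not merely as a ``preliminary'' step; and $|V(G)|+|E(G)|$ is not a monovariant under the split (the cut vertices are duplicated), so the polynomial running-time argument also requires care.
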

One key property of structured graphs (for any $\alpha$) is the following $3$-Matching Lemma, which we will also need. 
\begin{lemma}[3-Matching Lemma~\cite{GGJ23soda}]\label{lem:matchingOfSize3}
Let $G=(V,E)$ be a 2VC simple graph without irrelevant edges and without non-isolating 2-vertex cuts. Consider any partition $(V_1,V_2)$ of $V$ such that for each $i\in\{1,2\}$, $|V_i|\geq 3$ and if $|V_i|=3$, then $G[V_i]$ is a triangle. Then, there exists a matching of size $3$ between $V_1$ and $V_2$ in $G$.
\end{lemma}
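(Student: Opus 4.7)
The plan is to use König's theorem on the bipartite graph $B$ whose vertex classes are $V_1$ and $V_2$ and whose edges are the edges of $G$ with one endpoint in each part. Suppose for contradiction that the maximum matching in $B$ has size at most $2$. Then $B$ admits a vertex cover $C$ with $|C|\leq 2$. I would first note that $|C|\geq 2$: if $|C|\leq 1$, then either there are no edges between $V_1$ and $V_2$ (contradicting that $G$ is connected, since both $V_i$ are non-empty), or $C=\{w\}$ is a single vertex whose removal separates $V_1\setminus\{w\}$ from $V_2\setminus\{w\}$ (one of which is still non-empty), contradicting 2-vertex-connectivity. So $C=\{u,v\}$ with $|C|=2$.

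The next step is a case analysis on which side of the partition contains $u$ and $v$. In each case, the key observation is that, because $C$ covers all edges between $V_1$ and $V_2$, deleting $\{u,v\}$ from $G$ removes all edges across the partition that are not incident to $u$ or $v$. I would then argue that $\{u,v\}$ is a $2$-vertex cut of $G$ and use the hypotheses to derive a contradiction.

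\textbf{Case A: $u\in V_1$ and $v\in V_2$ (or symmetrically).} After removing $\{u,v\}$, the sets $V_1\setminus\{u\}$ and $V_2\setminus\{v\}$ both contain at least $2$ vertices and no edge runs between them, so $\{u,v\}$ is a $2$-vertex cut. For it to be isolating we would need $G\setminus\{u,v\}$ to have exactly two components, one of which is a single node; but each of $V_1\setminus\{u\}$ and $V_2\setminus\{v\}$ contains $\geq 2$ vertices, and any component of $G\setminus\{u,v\}$ lies entirely in one side, so no component can be a singleton. This contradicts the assumption that every $2$-vertex cut is isolating.

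\textbf{Case B: $u,v\in V_1$ (the case $u,v\in V_2$ is symmetric).} Since $|V_2|\geq 3\geq 1$ and $|V_1\setminus\{u,v\}|\geq 1$, the pair $\{u,v\}$ is again a $2$-vertex cut separating $V_1\setminus\{u,v\}$ from $V_2$. For the cut to be isolating, the smaller side must be a single vertex; since $|V_2|\geq 3$, this forces $|V_1\setminus\{u,v\}|=1$, hence $|V_1|=3$. By hypothesis, $G[V_1]$ is then a triangle on $\{u,v,w\}$, so in particular $uv\in E$. But $\{u,v\}$ is a $2$-vertex cut, so by definition $uv$ is an irrelevant edge, contradicting the assumption that $G$ contains no irrelevant edges.

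All cases lead to a contradiction, so the maximum matching between $V_1$ and $V_2$ has size at least $3$. I expect the main subtlety (rather than a real obstacle) to be bookkeeping in Case A to verify that neither side can collapse to a single vertex, and in Case B to correctly invoke the triangle hypothesis to produce the irrelevant edge; both require using the size bound $|V_i|\geq 3$ together with the special triangle structure when $|V_i|=3$, which is exactly why these assumptions appear in the statement.
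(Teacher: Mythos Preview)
Your argument via K\"onig's theorem is correct in outline and would constitute a valid proof. Note that the paper does not actually prove this lemma; it is quoted from \cite{GGJ23soda}, so there is no in-paper proof to compare against.

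There is one small logical slip in Case~A worth tightening. You write that since each of $V_1\setminus\{u\}$ and $V_2\setminus\{v\}$ contains at least two vertices and every component of $G\setminus\{u,v\}$ lies entirely in one side, ``no component can be a singleton.'' That inference is not valid as stated: a side with two or more vertices could in principle split into several components, one of them a singleton. The correct way to finish is to use the \emph{exactly two components} clause in the definition of isolating. If $G\setminus\{u,v\}$ has exactly two components, then since each side is non-empty and no edge crosses, each side must be a single component; both then have at least two vertices, so neither is a singleton. If instead $G\setminus\{u,v\}$ has three or more components, the cut is non-isolating by definition. Either way you reach the contradiction. The same observation implicitly underlies Case~B (you need that there are exactly two components before concluding the singleton must be $V_1\setminus\{u,v\}$), but there the hypothesis ``all 2-vertex cuts are isolating'' supplies it directly.

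With that adjustment, the proof is complete and clean.
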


We next assume that our graph $G$ is $5/4$-structured. Recall that $G$ contains at least $16$ vertices; hence $\opt(G)\geq 16$. The main property of $G$ that we will need in this overview is that it does not contain $5/4$-contractible subgraphs (along with the 3-Matching Lemma, which implicitly exploits the other properties).

The second step of our construction (details in Section \ref{sec:lowerBound}) is the computation of a proper lower bound graph. Following \cite{KN23}, and unlike \cite{GGJ23soda}, we use a minimum triangle-free 2-edge cover $H$. A $2$-edge cover $H$ of a graph $G$ is a subset of edges $H\subseteq E(G)$ such that every node of $G$ is incident to at least $2$ edges of $H$. We say that $H$ is triangle-free if no connected component of $H$ is a triangle\footnote{We remark that a connected component with at least $4$ vertices is allowed to contain a triangle as an induced subgraph.}. For ease of exposition, in the following, we refer to a connected component of a $2$-edge cover $H$ simply as a component. 
Notice also that $|H|\leq \opt$, because every feasible solution to 2ECSS is a 2-edge cover and is triangle-free, since it is connected and contains more than $3$ vertices.

For technical reasons related to the final step of our algorithm, we impose that $H$ contains at least one component with at least $8$ vertices. This can be ensured as follows. We guess a subset $F$ of edges from an optimal solution $\OPT$ that induces a connected graph spanning $8$ vertices by considering all possible (polynomially many) options. 
Next, we compute a minimum triangle-free 2-edge cover $H$ containing $F$. This constraint can be imposed easily as follows. We subdivide each edge $ab\in F$, replacing it with a path $acb$, where $c$ is a distinct dummy vertex. Let $G'$ be the resulting graph.  We then compute an (unrestricted) minimum triangle-free $2$-edge cover $H'$ of $G'$. Notice that $H'$ must contain all edges incident to dummy vertices since these vertices have degree $2$ in $G'$. Finally, we obtain a triangle-free $2$-edge cover $H$ of $G$ containing $F$ by replacing both edges $ac$ and $cb$ incident to each dummy vertex $c$ with the corresponding edge $ab$. In the final step, we exploit the fact that $F$ induces a connected graph and that $|F|\geq 4$: this guarantees that replacing the dummy edges with $F$ does not create components that are triangles. We also remark that all vertices spanned by $F$ belong to the same component of $H$, hence this component contains at least $8$ vertices. Notice that $\OPT$ induces a triangle-free $2$-edge cover of $G'$ of size $\opt+|F|$. Thus, we have $|H'|\leq \opt+|F|$, which implies $|H|\leq \opt$, as required to obtain a valid lower bound.

It is possible to impose extra useful properties on $H$ without increasing its size. A component of $H$ is \emph{complex} if it is not 2EC. A \emph{2EC block} of $H$ is a maximal 2EC subgraph of $H$.  
\begin{definition}[Canonical $2$-edge cover]
    We say that a $2$-edge cover $H$ of a graph $G$ is canonical if:
    \begin{enumerate*}[label={(\arabic*)}]
        \item $H$ is triangle-free;
        \item every component of $H$ with less than $8$ edges is a cycle;
        \item every 2EC block of a complex component of $H$ contains at least $4$ edges;
        \item every complex component contains at least two 2EC blocks with at least $6$ edges each;
        \item $H$ contains at least one component with at least $8$ vertices.
    \end{enumerate*}
\end{definition}
\begin{restatable}{lemma}{lemCanonical}\label{lem:canonical2EdgeCover}
    Let $G$ be a $5/4$-structured graph $G$, and let $H$ be a triangle-free $2$-edge cover of $G$ containing at least one component with at least $8$ vertices. Then, one can compute in polynomial time a canonical $2$-edge cover $H'$ of $G$ such that $|H'|\leq |H|$.
\end{restatable}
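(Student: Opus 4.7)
The plan is to start from the given $H$ and iteratively apply local modifications, each of which preserves $|H|$ (or decreases it), maintains triangle-freeness, and does not destroy the component with at least $8$ vertices, until all canonical properties are satisfied. Properties (1) and (5) hold for $H$ by hypothesis and will be invariants of the whole procedure; the real work is in enforcing properties (2), (3), and (4). Termination in polynomial time will be guaranteed by a lexicographic potential of the form $(|H|, \text{number of complex components}, \text{number of 2EC blocks})$, which is polynomially bounded and strictly decreases at each step, while each individual modification can be located by exhaustive search over subgraphs of constant size.

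For property (2), suppose $C$ is a component of $H$ with $|E(C)|<8$ that is not a cycle. Since $C$ is a connected 2-edge cover, all its vertices have degree at least $2$, and at least one vertex has degree at least $3$, so the degree sum forces $|E(C)|\geq |V(C)|+1$; this gives $|V(C)|\in\{4,5,6\}$ (a $3$-vertex connected 2-edge cover must be a triangle, hence already a cycle). I then look for a replacement 2-edge-cover of $V(C)$ of size strictly less than $|E(C)|$, ideally a spanning cycle of $V(C)$ inside $G[V(C)]$. The existence of such a replacement is driven by two properties of the $5/4$-structured graph $G$: first, $G$ contains no $5/4$-contractible 2EC subgraph on $\leq 8$ nodes, which rules out the pathological obstructions; second, Lemma~\ref{lem:matchingOfSize3} supplies three disjoint edges between $V(C)$ and $V(G)\setminus V(C)$ (a bipartition with $|V(C)|\geq 4$, and with $V(G)\setminus V(C)$ containing the guaranteed component on $\geq 8$ vertices), so that $C$ can instead be merged into a neighboring component whenever the direct cycle substitution fails. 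Any newly created component spans at least $4$ vertices, so triangle-freeness is preserved.

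For properties (3) and (4), I focus on each complex component separately and modify its 2EC block decomposition. A 2EC block of fewer than $4$ edges is necessarily a triangle $T=\{u,v,w\}$; by block maximality, exactly one vertex, say $u$, is the cut vertex of the complex component at which $T$ is attached, and $v,w$ have degree exactly $2$ in the whole component. I perform an edge swap that removes one edge of $T$ (say $vw$) and, if needed, adds an edge provided by Lemma~\ref{lem:matchingOfSize3} from $\{v,w\}$ to the rest of $G$, eliminating the offending small block without growing $H$ and without introducing a triangle component. For (4), once (3) holds every 2EC block has $\geq 4$ edges, so a complex component with fewer than two blocks of size $\geq 6$ consists of at most one large block plus some blocks of $4$ or $5$ edges; each such small block can be detached into its own component (to be converted to a cycle by the step for (2)) or absorbed into the large block, again using the matching guaranteed by Lemma~\ref{lem:matchingOfSize3}. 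All modifications act on subgraphs of bounded size, so the distinguished component of property (5) is never destroyed.

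The main obstacle is property (2): one has to provide a replacement structure in $G[V(C)]$, or a valid merging strategy, for every possible small non-cycle $C$. This requires a short case analysis indexed by $|V(C)|\in\{4,5,6\}$ and by how $V(C)$ attaches to $V(G)\setminus V(C)$ through the matching edges, and it is where the absence of $5/4$-contractible subgraphs on up to $8$ vertices has to be combined with the absence of non-isolating 2-vertex cuts to ensure the replacement cycle (or merge) exists. All other steps are comparatively mechanical swaps controlled by Lemma~\ref{lem:matchingOfSize3}.
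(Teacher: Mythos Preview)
Your high-level strategy---iterative local modifications with a potential function---matches the paper's approach, but your treatment of properties (3) and (4) has real gaps.

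First, your structural claim for property (3) is false: it is \emph{not} true that a triangle block $T=\{u,v,w\}$ in a complex component has exactly one cut vertex. All three vertices of $T$ can be attached to the rest of $C$ via bridges, and $T$ is still a maximal 2EC subgraph. The paper exploits exactly this dichotomy: if at least two vertices of $T$, say $u_1,u_2$, are adjacent in $H'$ to $V(C)\setminus V(T)$, then simply deleting $u_1u_2$ keeps a triangle-free 2-edge cover with one fewer edge; the hard case is when $T$ is a \emph{leaf} block (only one vertex touches the rest of $C$), and that is where a separate argument is needed. Your proposed swap ``remove $vw$ and add one edge from $\{v,w\}$ to outside'' also does not preserve the 2-edge-cover property: you drop the degree of both $v$ and $w$ to $1$ but only restore one of them.

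Second, and more importantly, you are missing the key tool for property (4). Getting two blocks of size $\geq 6$ (not merely $\geq 4$) in every complex component is the crux, and the paper proves this via Claim~\ref{clm:blocksCanonical}: any leaf block of a complex component has $\geq 6$ edges once the local search terminates. That claim relies on the Hamiltonian Pairs Lemma (Lemma~\ref{lem:hamiltonianPairs}): a leaf block that is a small cycle admits a Hamiltonian $u_1,v_1$-pair where $v_1$ has an edge to $V(G)\setminus V(B)$, so one can replace the cycle by a Hamiltonian path and add $v_1w$, either merging components or covering a bridge---contradicting termination. Your sketch for (4) (``detach into its own component or absorb into the large block'') provides neither mechanism; in particular, detaching a block \emph{splits} a component and can destroy the coarsening invariant you need for property~(5). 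You should replace this part with the leaf-block argument via Lemma~\ref{lem:hamiltonianPairs}, after which property~(2) reduces to 2EC (non-complex) components as in the paper's case analysis on the smallest cycle $P_0$.
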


The third step of our construction consists of several iterations. We start with $S=H$, and we gradually transform $S$ by adding and sometimes deleting edges. At the end of the process, the final $S$ is the desired approximate solution. To control the size of $S$ during the process, we use a \emph{credit-based} argument. The idea is to assign a certain amount of credits $\credit(S)$ to $S$ based on its structure. Specifically, we define the following \emph{credit-assignment scheme}:
\begin{invariant}[Credit assignment scheme]\label{inv:creditInvariant}
    Let $S$ be a canonical $2$-edge cover of a $5/4$-structured graph $G$. Then:
    \begin{enumerate}\itemsep0pt
    \item Let $C$ be a 2EC component of $S$ with $|C|< 8$ (\emph{small  component}). Then $\credit(C) = |C|/4$.
    \item Let $C$ be a 2EC component of $S$ with $|C|\geq 8$ (\emph{large  component}). Then $\credit(C) = 2$.
    \item Let $C$ be a complex component of $S$ (i.e., not 2EC). Then $\credit(C) = 1$.
    \item Let $B$ be a 2EC block of a complex component $C$ of $S$. Then $\credit(B) = 1$.
    \item Let $b$ be a bridge\footnote{A \emph{bridge} of a graph $G'$ is an edge whose removal increases the number of connected components of $G'$.} of a complex component $C$ of $S$. Then $\credit(b) = 1/4$.
\end{enumerate}

We denote by $\credit(S)$ the total amount of the above credits assigned to the components, 2EC blocks, and bridges of $S$.

\end{invariant}
We define the \emph{cost} of $S$ as $\cost(S)=|S|+\credit(S)$. 
We can guarantee the following upper bound on the initial cost of $S$ (proof in Section \ref{sec:lowerBound}).    
\begin{restatable}{lemma}{lemCostCanonical}\label{lem:costCanonical}
    Let $H$ be a canonical $2$-edge cover of a $5/4$-structured graph $G$. Then, $\cost(H)\leq \frac{5}{4} |H|$. 
\end{restatable}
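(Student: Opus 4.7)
The plan is to show the stronger statement that for every component $C$ of $H$ one has $\credit(C) \leq |C|/4$, where $\credit(C)$ denotes the total credit assigned to $C$ together with all its 2EC blocks and bridges. Summing over all components then yields $\credit(H) \leq |H|/4$, and hence $\cost(H) = |H| + \credit(H) \leq \frac{5}{4}|H|$, as required.

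I would split the analysis into three cases according to the type of component, using the canonical structure of $H$. First, if $C$ is a small 2EC component (with $|C|<8$), the canonical property forces $C$ to be a cycle, so $\credit(C) = |C|/4$ by definition, and the bound holds with equality. Second, if $C$ is a large 2EC component (with $|C|\geq 8$), the definition gives $\credit(C) = 2 \leq |C|/4$, again immediately.

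The main case, and the one where the canonical properties really matter, is when $C$ is complex. Let $k$ be the number of 2EC blocks of $C$ and $b$ the number of bridges of $C$. The credit assigned to $C$ and its substructures equals
\[
\credit(C) \;=\; 1 + k + \frac{b}{4},
\]
combining the component credit, the per-block credit, and the per-bridge credits. To bound $|C|$ from below, I would invoke canonical properties (3) and (4): every 2EC block has at least $4$ edges, and at least two of them have at least $6$ edges. Therefore
\[
|C| \;\geq\; 2\cdot 6 + 4(k-2) + b \;=\; 4k + 4 + b,
\]
which rearranges to $|C|/4 \geq k + 1 + b/4 = \credit(C)$, as desired. (Note that canonical property (4) implicitly ensures $k\geq 2$, so the expression $k-2$ is nonnegative and the estimate is valid.)

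The main obstacle, really just a bookkeeping point rather than a mathematical difficulty, is making sure the lower bound on $|C|$ in the complex case is tight enough: one must use that at least two blocks are of size $\geq 6$, not just $\geq 4$, otherwise the calculation would give $|C|\geq 4k+b$ and leave a deficit of $1$. This is precisely why the definition of canonical 2-edge cover builds property (4) into it; without it the complex case would fail. Once that subtlety is observed, the proof is a straightforward case analysis as outlined above.
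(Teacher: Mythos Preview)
Your proof is correct and essentially the same as the paper's. The paper phrases it as a credit-redistribution argument (put $1/4$ on each edge and show these credits can be reassigned to cover the needs of every bridge, block, and component), while you phrase it as a per-component inequality $\credit(C)\le |C|/4$; these are dual presentations of the identical counting, and in particular both hinge on exactly the same use of canonical properties (3) and (4) in the complex case.
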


Throughout the iterations, we maintain the invariant that the cost of $S$ never increases. That is, at each step, we replace $S$ with $S'$ ensuring that $\cost(S')\leq \cost(S)$. 
The transformation process of $S$ consists of two main stages. The first stage, called \emph{bridge covering}, handles all the bridges of $S$. Here, we essentially replicate the analysis in \cite{GGJ23soda}, with minor modifications, to obtain the following (proof in Section~\ref{sec:bridgeCovering}).
\begin{restatable}[Bridge Covering]{lemma}{lemBridgeCovering}\label{lem:bridgeCovering}
Let $S$ be a canonical 2-edge cover of a $5/4$-structured graph $G$. Then, one can compute in polynomial time a canonical 2-edge cover $S'$ of $G$ such that all components of $S'$ are 2EC and $\cost(S')\leq \cost(S)$. 
\end{restatable}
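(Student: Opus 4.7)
The plan is to follow the general template of the bridge-covering procedure of \cite{GGJ23soda}, adapted to the $5/4$-contractibility threshold and to the above canonical structure. Starting from $S$, I iteratively pick a bridge $b$ of some complex component $C$ of $S$ and perform a local augmentation that either reduces the number of bridges in $S$ or merges two components, while keeping $S$ canonical and $\cost(S)$ non-increasing. Since each such step strictly decreases the combined count of bridges and components, and both are polynomially bounded, the process terminates in polynomial time, and by construction the final $S'$ has no bridge, so every component is 2EC.

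For a single step, I pick a leaf block $B$ in the block-cut tree of $C$, with cut vertex $v$ attaching $B$ to the rest of $C$ through $b$. By canonicality, $|E(B)|\geq 4$, so $|V(B)|\geq 4$; a similar counting (using that $C$ contains a second 2EC block with $\geq 6$ edges, or that $G$ contains the distinguished $\geq 8$-vertex component, and that $|V(G)|\geq 16$) gives $|V(G)\setminus V(B)|\geq 4$. Applying the 3-Matching Lemma to the partition $(V(B),V(G)\setminus V(B))$ yields three edges of $G$ crossing this partition; since the only edge of $S$ crossing it is $b$ itself, at least two matched edges lie in $E(G)\setminus S$, and at least one, call it $e=xy$ with $x\in V(B)\setminus\{v\}$ and $y\notin V(B)$, is not incident to $v$. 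I case-split on where $y$ lies: (i) in another complex component $C'$; (ii) in a 2EC component $C'$; or (iii) elsewhere inside $C$, either in another 2EC block or on another bridge. In each case I add $e$ (and possibly one or two further matched edges) to $S$, which creates a cycle through $b$ and therefore turns $b$ into a non-bridge; any edge that becomes redundant, i.e.\ whose endpoints both still have degree $\geq 3$, is removed to save cost.

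The main obstacle is the credit bookkeeping together with the preservation of canonicality. Each bridge carries $1/4$ credit and each 2EC block or complex-component marker carries $1$; when a complex component collapses into a single 2EC component (which is automatically large, hence credit $2$), the freed credit equals roughly the number of eliminated bridges plus the number of absorbed blocks, which more than pays for the $O(1)$ edges added. The harder subtask is to ensure that, while a component remains complex, the conditions ``every 2EC block has $\geq 4$ edges'' and ``at least two 2EC blocks have $\geq 6$ edges each'' are not violated by the local change, and that no component becomes a triangle. Whenever a problematic small configuration would appear, I invoke the absence of $5/4$-contractible subgraphs on at most $8$ nodes of $G$ to rule it out, or to argue that the 3-Matching Lemma supplies an alternative matched link avoiding it, or that a local edge swap restores canonicality. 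Finally, the distinguished component with $\geq 8$ vertices (inherited from the guessed connected set $F$ used to build $H$) is never destroyed by merges, so canonicality condition~(5) is preserved throughout.
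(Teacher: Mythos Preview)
Your proposal has a genuine gap in the core mechanism. You add a single matched edge $e=xy$ with $x\in V(B)\setminus\{v\}$ and $y\notin V(B)$, and claim this ``creates a cycle through $b$''. That is only true in your case (iii), when $y$ lies elsewhere in the same complex component $C$. In cases (i) and (ii), where $y$ belongs to a different component $C'$, adding $e$ simply attaches $C'$ to $C$ via the new bridge $e$; no cycle through $b$ is created and $b$ remains a bridge. Adding ``one or two further matched edges'' does not fix this: the 3-Matching Lemma only guarantees edges from $V(B)$ to $V(G)\setminus V(B)$, and these may land in three different components, still producing no cycle. Worse, the credit accounting fails already in the simplest instance of case (ii): attaching a $4$-cycle $C'$ to $B$ by one edge changes cost by $+1$ in size and $-\credit(C')+\credit(\text{new block})+\credit(\text{new bridge})=-1+1+\tfrac14=+\tfrac14$ in credit, for a net cost increase of $+\tfrac54$.

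The paper's proof avoids this by working not with single edges but with \emph{bridge-covering paths}: paths in the contracted graph $G_C$ (blocks of $C$ and all other components contracted to single nodes) whose two endpoints both lie in the bridge-tree $T_C$. Such a path genuinely closes a cycle through a set of bridges, and the cost change is $\tfrac14\cdot br+bl-2$, where $br$ is the number of bridges covered and $bl$ the number of blocks absorbed. When a single such path is not ``cheap'' (fewer than $2$ blocks and fewer than $4$ bridges), the paper takes a longest path $P'=bu_1\ldots u_\ell$ in $T_C$, studies the reachable set $R(b)$, and combines \emph{two} expensive bridge-covering paths from two leaf blocks whose $T_C$-paths intersect (their Lemmas on $R(X_C)$ and on merging two paths). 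This pairing is the missing idea in your outline; without it, the situation where all escape routes from $B$ lead only into small 2EC components or only cover one or two bridges cannot be handled within budget.
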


The second and final stage of our construction, called \emph{gluing}, gradually merges the 2EC components of $S$ into larger 2EC components. This is the most novel part of our analysis (details in Section~\ref{sec:gluing}). Notably, the most complicated part of the analysis in \cite{GGJ23soda,KN23} is a similar gluing stage. 
Our approach significantly simplifies this part of their analysis while 
reducing the approximation ratio from $13/10$ to $5/4$.

Recall that, by the definition of a canonical 2-edge cover and the fact that $S$ is bridgeless, $S$ consists of the following two types of components:
\begin{itemize}\itemsep0pt
    \item \emph{Small components}: A small component is a $k$-cycle with $k\in \{4,5,6,7\}$. Each small component $C$ has $\credit(C)=k/4$ credits.
    \item \emph{Large components}: A large component is a 2EC component containing at least $8$ edges. A large component $C$ has $\credit(C)=2$ credits.   
\end{itemize}
The high-level idea is as follows. Consider the \emph{component (multi)-graph} $\hat{G}_S$, that is obtained by contracting each 2EC component $C$ of $S$ into a single component-node $\hat{C}$ and removing any loops (parallel edges may still exist). 
The key idea is to compute a cycle $F$ in $\hat{G}_S$ (involving nodes $\hat{C}_1,\dots,\hat{C}_{|F|}$). 
We add $F$ to $S$, replacing $S$ with $S'=S\cup F$. As a result, all components $C_i$ merge into a single (large) 2EC component $C'$ of $S'$. The decrease in the solution cost (which must be non-negative) is given by
\begin{align*}    
\cost(S)-\cost(S') &= |S|-|S'|+\sum_{j=1}^{|F|}\credit(C_i)-\credit(C') \\
&=-|F|+\sum_{j=1}^{|F|}\credit(C_i)-2 \ .
\end{align*}

In many cases, this simple approach is sufficient. For example, if all the $C_i$'s are large (each owning $2$ credits), then, for any $|F|\geq 2$, we have $\cost(S')\leq \cost(S)$, as desired. 

The most problematic case arises when $|F|$ is small and involves small components of $S$, which own fewer credits. Here, we need to delete some edges from $S$; otherwise, the credits of the $C_i$'s are insufficient. A typical situation when we can perform such a deletion occurs when two edges $e,f\in F$ are adjacent to consecutive nodes $u$ and $v$ of some small component $C_i$. In this case, we can remove the edge $uv$ from $S'$ while keeping $C'$ 2EC. 
In other words, we can set $S'\leftarrow S' \setminus \{uv\}$. 
In other cases, we can ensure that $u$ and $v$, though not adjacent in $C_i$, are connected via a Hamiltonian path $P$ 
that spans all vertices $V(C_i)$ of $C_i$. In this case, we can still save one edge by setting $S'\leftarrow (S' \setminus E(C_i)) \cup E(P)$. In some cases, we can delete multiple edges this way. A key aspect of the case analysis is proving that any increase in $|S|$ can always be compensated by a corresponding decrease in $\credit(S)$. Specifically, we will prove the following lemma. 
\begin{restatable}[Gluing]{lemma}{lemgluingSimplified}\label{lem:gluingSimplified}
    Given a disconnected canonical $2$-edge cover $S$ of a $5/4$-structured graph $G$ such that every component of $S$ is 2EC. One can compute in polynomial time  a canonical $2$-edge cover $S'$ of $G$ such that every component of $S'$ is 2EC, $S'$ has fewer components than $S$, and $\cost(S')\leq \cost(S)$.
\end{restatable}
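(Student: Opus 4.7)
The plan is to contract each 2EC component $C$ of $S$ to a node $\hat{C}$ and work in the \emph{component multigraph} $\hat{G}_S$ (parallel edges allowed, loops removed). Since $G$ is 2-edge-connected, so is $\hat{G}_S$, and in particular it contains a simple cycle. My construction picks such a cycle $F$ through components $\hat{C}_1,\ldots,\hat{C}_{|F|}$, inserts the edges of $F$ into $S$ (merging the visited components into a single 2EC component $C'$), and deletes a carefully chosen set $D$ of edges from the resulting graph while preserving 2-edge-connectivity. Setting $S' = (S \cup F) \setminus D$, the bookkeeping reads
\[
\cost(S) - \cost(S') \;=\; |D| - |F| + \sum_{i=1}^{|F|} \credit(\hat{C}_i) - 2,
\]
since $C'$ absorbs $2$ credits as a new large component. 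When every $\hat{C}_i$ is large, $\sum \credit(\hat{C}_i) = 2|F| \ge |F|+2$ already works with $|D|=0$; the task is therefore to choose $F$ and $D$ when some $\hat{C}_i$ is small.

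The enabling observation for deletions is that if two edges of $F$ attach to a small $\hat{C}_i$ at two adjacent cycle-vertices $u,v$, the edge $uv\in E(C_i)$ is redundant in $S \cup F$ and may be placed into $D$; more generally, whenever $G[V(C_i)]$ contains a Hamilton path between two attachment vertices we can swap the cycle $C_i$ for that path and still save one edge. To make such attachments available I would invoke \cref{lem:matchingOfSize3} with $V_1 = V(C_i)$ and $V_2 = V \setminus V(C_i)$, which is legal because $|V(C_i)| \ge 4$ and $|V(G)| \ge 16$, so both sides clear the size hypothesis. The lemma supplies three independent edges leaving $C_i$; from these a simple pigeonhole argument for $k$-cycles with $k \le 5$ already produces an adjacent attachment pair (the maximum independent set of $C_5$ has size $2$), while for $k \in \{6,7\}$ the extra credit ($3/2$ and $7/4$ respectively) buys slack that the analysis trades against the harder adjacency requirement. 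The construction then chooses the ``worst'' small component $\hat{C}^*$ (a $4$-cycle first, then a $5$-cycle, and so on), applies the 3-Matching Lemma at $\hat{C}^*$, and follows two of the matching edges to trace a short cycle $F$ back to $\hat{C}^*$; within each resulting case one re-applies the 3-Matching Lemma at every small component visited by $F$ to extract enough deletions to meet the shortfall $|F|+2-\sum\credit(\hat{C}_i)$. Verifying that the resulting $S'$ is canonical reduces to noting that $C'$ is 2EC on at least $8$ vertices (two small components already span at least $8$ vertices, and adding more only increases this), so conditions (1), (2), and (5) of the canonical form follow immediately and the complex-block conditions (3)--(4) are vacuous.

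The hardest regime, as in \cite{GGJ23soda,KN23}, is short cycles $F$ of length $2$ or $3$ composed entirely of small components, where the credit shortfall is largest and adjacency-based deletions must happen at every visited small component simultaneously. For example, a length-$2$ cycle between two $4$-cycles has shortfall exactly $2$, forcing an adjacency-based deletion on \emph{both} sides; one must argue that among the three matching edges at each $4$-cycle there is always a choice (possibly after switching to a different pair of components in $\hat{G}_S$) producing compatible adjacent pairs on both sides at once. Analogous but finer arguments, using the mild extra credit of $5$-, $6$-, and $7$-cycles or Hamilton paths inside $G[V(C_i)]$, handle the other short-cycle configurations. I expect the bulk of the formal proof to consist of this case enumeration, leaning on the absence of $5/4$-contractible subgraphs in $G$ to rule out pathological configurations in which no valid deletion exists.
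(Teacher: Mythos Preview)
Your sketch captures the credit-accounting framework and the deletion mechanism correctly, but it misses the two structural ideas that actually make the argument close at $5/4$.

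\textbf{The large component must be in the cycle.} You propose to start from the ``worst'' small component and build a cycle from there; the paper instead always works inside a 2VC block $B$ of $\hat{G}_S$ that contains a \emph{large} component $C$ (guaranteed by canonicity) and insists that $F$ pass through $\hat{C}$. This single extra credit ($\credit(C)=2$ versus at least $1$ for a small component) is what makes the case analysis tractable. In your approach, an $|F|=2$ cycle between two $5$-cycles has shortfall $3/2$, forcing two deletions; but among three matching endpoints on a $C_5$ there may be only \emph{one} adjacent pair on each side, and those pairs need not come from the same two matching edges, so your ``compatible adjacent pairs on both sides'' argument breaks. With the large component on one side you need only one deletion, which \cref{lem:shortcutC4localC5} supplies.

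\textbf{The 2VC block structure and the local/non-local distinction.} Applying \cref{lem:matchingOfSize3} to $(V(C_i),V\setminus V(C_i))$ may return edges that leave $C_i$ toward components lying in different 2VC blocks of $\hat{G}_S$; if $\hat{C}_i$ is a cut vertex, two such edges cannot be completed to a cycle in $\hat{G}_S$ at all, so ``follow two of the matching edges to trace a short cycle back to $\hat{C}^*$'' can simply fail. The paper handles this by (i) proving a \emph{Local} 3-Matching Lemma (\cref{lem:local3matching}) that keeps the matching inside one block, and (ii) treating non-local small cycles separately: for a non-local $5$-cycle $C_1$, it takes one cycle $F$ in the block $B$ containing $C$ and a second cycle $F'$ in another block $B'$ containing $\hat{C}_1$, then uses \cref{cor:shortcutNonLocalC5} to delete an edge of $C_1$ once two internally disjoint paths across $C_1$ exist. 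This two-block manoeuvre (and its $C_6/C_7$ analogue via \cref{cor:shortcutNonLocalC6C7}) is the heart of the proof and has no counterpart in your proposal. Without it, the case where $B$ has only two nodes, or where the only small neighbours of $C$ are $6$- or $7$-cycles with no Hamiltonian pair inside $B$, does not go through.
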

The $5/4$-approximation algorithm then easily follows.
\begin{proof}[Proof of Theorem \ref{thr:main}]
    By Lemma~\ref{lem:preprocessing}, it suffices to describe an algorithm that, given a $5/4$-structured graph $G$, finds a solution of cost at most $5/4 \cdot\opt(G)-2$. We guess a set $F$ of edges belonging to some optimal solution $\OPT$ and induce a connected graph with $8$ nodes. We then compute a minimum triangle-free $2$-edge cover $\hat{H}$ of $G$ that contains $F$, as described earlier. We transform $\hat{H}$ into a canonical $2$-edge cover $H$ of no larger size using Lemma \ref{lem:canonical2EdgeCover}. Recall that $|H|\leq \opt(G)$.     
    
    By first applying Lemma~\ref{lem:bridgeCovering} and then iteratively applying Lemma~\ref{lem:gluingSimplified} a polynomial number of times, we obtain a 2-edge cover $\APX$ consisting of a single 2EC component, which we return as the feasible solution. By construction, $\cost(\APX)\leq \cost(H)$. By Lemma~\ref{lem:costCanonical}, $\cost(H)\leq 5/4 \cdot |H|\leq 5/4 \cdot\opt(G)$. Since $\APX$ contains exactly one large 2EC component, we have $|\APX|=\cost(\APX) - \credit(\APX) = \cost(\APX) - 2$, and the claim follows.
\end{proof}

It remains to prove Lemma \ref{lem:gluingSimplified}. The key insight of our approach is that it is beneficial to focus on the 2VC blocks\footnote{A 2VC block $B$ of a graph $G'$ is a maximal subgraph that is 2VC.} $B$ of $\hat G_S$. More precisely, we will consider one such block that contains a large component of $S$. Recall that, by the definition of canonical, $S$ must contain a component with at least $8$ vertices and, since $S$ is a $2$-edge cover,  with at least $8$ edges. Thus, a large component is guaranteed to exist. A crucial tool in our analysis is the following \emph{local} version of the 3-Matching Lemma, which applies to a single 2VC block of $\hat G_S$.
\begin{lemma}[Local $3$-Matching Lemma]\label{lem:local3matching}
Let $B$ be a 2VC block of $\hat{G}_S$ and let $(\hat V_1, \hat V_2)$ be a partition of the nodes of $B$ such that $\hat V_1 \neq \emptyset \neq \hat V_2$. Let $V_i=\cup_{\hat{C}_i\in \hat{V}_i}V(C_i)$ be the set of vertices of $G$ that correspond to $\hat V_i$, for $i\in\{1, 2\}$. Then, there is a matching of size $3$ between $V_1$ and $V_2$ in $G$.
\end{lemma}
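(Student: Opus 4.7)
The plan is to reduce the local statement to the global 3-Matching Lemma (Lemma~\ref{lem:matchingOfSize3}) by extending the partition $(\hat V_1,\hat V_2)$ of $V(B)$ to a partition of all of $V(G)$, and then showing that the matching obtained does not involve vertices outside $V_1\cup V_2$. To begin, observe that $\hat G_S$ is connected because $G$ is 2VC, so the block-cut structure of $\hat G_S$ around the block $B$ is well defined.

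The key structural step is to show that every connected component $D$ of $\hat G_S\setminus V(B)$ has exactly one neighbor $\hat c(D)\in V(B)$. This is a standard property of (maximal) 2VC blocks: if $D$ had two distinct neighbors $\hat c_1,\hat c_2\in V(B)$, then a path $P$ in $D$ from a neighbor of $\hat c_1$ to a neighbor of $\hat c_2$, together with the two connecting edges and all of $B$, would form a 2VC subgraph of $\hat G_S$ strictly containing $B$, contradicting the maximality of the block $B$. Having fixed such a $\hat c(D)$ for each component $D$, I define
\[
V_i^\star \;=\; V_i\;\cup\bigcup_{D\,:\,\hat c(D)\in\hat V_i}\;\bigcup_{\hat C\in D}V(C),\qquad i\in\{1,2\}.
\]
This is a partition of $V(G)$ with $V_i\subseteq V_i^\star$. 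Since $\hat V_i\neq\emptyset$ and $V_i^\star$ contains the vertex set of at least one 2EC, simple, triangle-free component of the canonical 2-edge cover $S$, we have $|V_i^\star|\geq 4$.

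As $G$ is $5/4$-structured (simple, 2VC, no irrelevant edges, every 2-vertex cut isolating), Lemma~\ref{lem:matchingOfSize3} applies to $(V_1^\star,V_2^\star)$ and yields a matching $M$ of size $3$ in $G$ between $V_1^\star$ and $V_2^\star$. To finish, I argue that every edge of $M$ actually has its endpoints in $V_1$ and $V_2$. Consider $uv\in M$ with $u\in V_i^\star\setminus V_i$; then $u\in V(C)$ for some $\hat C$ lying in a component $D$ of $\hat G_S\setminus V(B)$ with $\hat c(D)\in \hat V_i$. If $v\in V(C)$, then $v\in V_i^\star$, contradicting that $uv$ crosses the partition. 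Otherwise $v\in V(C')$ for some $\hat C'\neq \hat C$, so $(\hat C,\hat C')$ is an edge of $\hat G_S$; by the structural claim, $\hat C'$ lies in $D$ or equals $\hat c(D)$. In either case $\hat C'$ is on the same side as $\hat C$, so $v\in V_i^\star$, again a contradiction. Hence every edge of $M$ has its $V_i^\star$-endpoint in $V_i$, and $M$ is a matching of size $3$ between $V_1$ and $V_2$. The main obstacle is the unique-attachment claim for branches of $\hat G_S$ at $B$; once that is established, the rest is a direct invocation of the global 3-Matching Lemma.
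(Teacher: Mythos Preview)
Your proof is correct and follows essentially the same approach as the paper: extend the partition $(\hat V_1,\hat V_2)$ of $V(B)$ to a partition of all of $V(\hat G_S)$ using the block-cut structure, apply the global 3-Matching Lemma, and observe that no matching edge can have an endpoint outside $V_1\cup V_2$. The only cosmetic difference is that you work with components of $\hat G_S\setminus V(B)$ and their unique attachment vertices, whereas the paper phrases the same decomposition via connectivity in $\hat G_S\setminus E(B)$.
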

\begin{proof}
Let $\hat V_i'$ be the set of nodes in $V(\hat G_S)\setminus V(B)$ connected to $\hat V_i$ by some path contained in $\hat G_S\setminus E(B)$, for $i\in\{1, 2\}$. Since $B$ is a maximal 2VC subgraph of $\hat G_S$, $\hat V_1'$ and $\hat V_2'$ are disjoint. Moreover, nodes in $\hat V_1'$ are not adjacent to nodes of $\hat V_2'$, as that would imply they also belong to $\hat V_2'$. Thus, there are no edges from $\hat V_1'$ to $\hat V_2\cup\hat V_2'$ or from $\hat V_2'$ to $\hat V_1\cup\hat V_1'$. Let $V_1'$ and $ V_2'$ be the set of nodes of $G$ corresponding to $\hat V_1'$ and $\hat V_2'$, resp.

The 3-Matching Lemma (Lemma~\ref{lem:matchingOfSize3}) guarantees that there is a matching $M$ of size $3$ between $V_1\cup V'_1$ and $V_2\cup V'_2$. However, the vertices in $V'_1$ (resp., $V'_2$) are not adjacent to those in $V_2\cup V'_2$ (resp., $V_1\cup V'_1$). As a result, the endpoints of $M$ must be in $V_1 \cup V_2$. 
\end{proof}

\section{Gluing}
\label{sec:gluing}

This section is devoted to 
proving 
Lemma~\ref{lem:gluingSimplified}. Before delving into the proof, 
we first establish a series of structural lemmas that hold for structured graphs. In particular, many of these lemmas rely on the structure of the 2VC blocks of $\hat G_S$.  
Specifically, we begin with the following lemmas that show that small components are ``well connected'' to the rest of the graph.  

\begin{observation}\label{obs:relationContractability}
For $\alpha\geq \beta\geq 1$, a 2EC subgraph $C$ of a 2EC graph $G$ that is $\beta$-contractible is also $\alpha$-contractible.
\end{observation}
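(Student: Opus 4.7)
The observation is essentially a one-line unfolding of the definition of $\alpha$-contractibility, so my plan is to just verify the inequality directly. Recall that a 2EC subgraph $C$ of $G$ is $\beta$-contractible precisely when every 2EC spanning subgraph $T$ of $G$ satisfies $|E(T)\cap E(G[V(C)])| \geq |E(C)|/\beta$. The claim to prove is that the same $C$ also satisfies the analogous lower bound $|E(T)\cap E(G[V(C)])| \geq |E(C)|/\alpha$ for every 2EC spanning subgraph $T$.

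The plan is: fix an arbitrary 2EC spanning subgraph $T$ of $G$, apply the hypothesis that $C$ is $\beta$-contractible to obtain $|E(T)\cap E(G[V(C)])| \geq |E(C)|/\beta$, and then chain this with the numerical inequality $|E(C)|/\beta \geq |E(C)|/\alpha$, which holds since $\alpha\geq\beta\geq 1$ and $|E(C)|\geq 0$. Since $T$ was arbitrary, $C$ meets the definition of $\alpha$-contractibility (it remains 2EC by assumption). There is no real obstacle here; the only thing to check is that the definition of contractibility is indeed monotone in $\alpha$ in the direction claimed, which it is because a larger $\alpha$ yields a weaker lower bound $|E(C)|/\alpha$. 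No structural property of $5/4$-structured graphs, the 3-Matching Lemma, or canonical $2$-edge covers is needed.
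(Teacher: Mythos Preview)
Your proposal is correct and matches the paper's treatment: the paper states this as an observation without proof, relying on exactly the monotonicity you spell out, namely that $\alpha\geq\beta$ implies $|E(C)|/\alpha\leq |E(C)|/\beta$, so the $\beta$-contractibility lower bound immediately yields the $\alpha$-contractibility one.
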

\begin{lemma}\label{lem:6-cycleContractibility}
    Let $G$ be a $5/4$-structured graph, and let $C=x_1x_2x_3x_4x_5x_6$ be a $6$-cycle in $G$.  
    If there is at most one pair of distinct vertices $\{x,x'\}\subseteq\{x_1, x_3, x_5\}$ that has a $x,x'$-Hamiltonian path in $G[V(C)]$, then there are at least two distinct edges in $G\setminus G[V(C)]$ incident to nodes in $\{u_2,u_4,u_6\}$.
\end{lemma}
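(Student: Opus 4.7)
The plan is to argue by contradiction. Suppose that at most one edge of $G\setminus G[V(C)]$ is incident to $\{x_2,x_4,x_6\}$.

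Since $G$ is $5/4$-structured with $|V(C)|=6\leq 8 = 2/(5/4-1)$, the $6$-cycle $C$ cannot be $5/4$-contractible. As $|E(C)|/(5/4)=24/5$, this means there exists a $2$EC spanning subgraph $T$ of $G$ with $|F|\leq 4$, where $F:=E(T)\cap E(G[V(C)])$. Because $T$ is $2$EC, every vertex has $T$-degree at least $2$; combined with the contradiction hypothesis, at least two vertices of $\{x_2,x_4,x_6\}$ have no external edge in $T$. WLOG these are $x_4$ and $x_6$, so $\deg_F(x_4),\deg_F(x_6)\geq 2$, and $\deg_F(x_2)\geq 1$.

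The key step is to show that $F$ must contain at least one of the three chords $x_2x_4, x_4x_6, x_2x_6$ joining two even-indexed vertices of $C$. Suppose not. Then at each of $x_2,x_4,x_6$ the only non-cycle edges of $G[V(C)]$ available to $F$ are the single chords $x_2x_5$, $x_1x_4$, and $x_3x_6$ respectively, since every other chord at these vertices is among the three excluded ones. Let $E_2,E_4,E_6\subseteq F$ denote the $F$-edges incident to $x_2,x_4,x_6$. These three sets are pairwise disjoint, because any common edge would have both endpoints in $\{x_2,x_4,x_6\}$ and therefore be one of the excluded chords. Hence
\[
|F|\;\geq\;|E_2|+|E_4|+|E_6|\;\geq\;1+2+2=5,
\]
contradicting $|F|\leq 4$.

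Therefore $G[V(C)]$ contains at least one of $x_2x_4,\,x_4x_6,\,x_2x_6$, and a direct inspection shows that each such chord by itself (together with the cycle edges of $C$) already gives Hamiltonian paths between two distinct pairs of $\{x_1,x_3,x_5\}$: the chord $x_2x_4$ gives $x_1x_6x_5x_4x_2x_3$ and $x_3x_4x_2x_1x_6x_5$; the chord $x_4x_6$ gives $x_3x_2x_1x_6x_4x_5$ and $x_1x_2x_3x_4x_6x_5$; the chord $x_2x_6$ gives $x_1x_2x_6x_5x_4x_3$ and $x_1x_6x_2x_3x_4x_5$. This contradicts the hypothesis that at most one such Hamiltonian path pair exists in $G[V(C)]$, completing the proof. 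The main obstacle is recognizing the clean ``even-even chord'' dichotomy; once found, the disjointness bound forces one of those three chords into $F\subseteq G[V(C)]$, and the required Hamiltonian paths follow by inspection.
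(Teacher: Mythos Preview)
Your proof is correct and follows essentially the same approach as the paper. Both arguments hinge on the same dichotomy: either $G[V(C)]$ contains one of the three ``even--even'' chords $x_2x_4,\,x_4x_6,\,x_2x_6$ (which immediately yields two Hamiltonian pairs among $\{x_1,x_3,x_5\}$), or the $F$-edges at $x_2,x_4,x_6$ are pairwise disjoint and sum to at least $5$, contradicting non-contractibility. The only difference is the order of presentation: the paper first rules out the three chords in $G$ via the Hamiltonian-path observation and then concludes that every 2EC spanning subgraph uses at least five edges of $G[V(C)]$ (so $C$ is $6/5$-contractible), whereas you start from non-$5/4$-contractibility, pick a witness $T$ with $|F|\le 4$, and deduce that $F$ (hence $G[V(C)]$) must contain one of the chords. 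These are contrapositives of one another.
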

\begin{proof}
    Assume there is at most one such edge in $G$. We first show that there is no edge between any pair of nodes in $\{x_2, x_4, x_6\}$. W.l.o.g.\ assume that $x_2x_6\in E(G)$. Then the paths $x_1x_2x_6x_5x_4x_3$ and $x_1x_6x_2x_3x_4x_5$ are Hamiltonian paths in $G[V(C)]$ between $x_1, x_3$ and $x_1, x_5$, respectively, a contradiction. 
    
    Thus, every 2EC spanning subgraph of $G$ contains at least $5$ edges of $G[V(C)]$, as it must contain at least $2$ edges incident to each vertex in $\{x_2,x_4,x_6\}$. Therefore, $C$ is $\frac{6}{5}$-contractible, a contradiction to the fact that $G$ is $\frac{5}{4}$-structured.
\end{proof}

\begin{lemma}[Hamiltonian Pairs Lemma]\label{lem:hamiltonianPairs}
    Let $G$ be a $5/4$-structured graph, and let $C$ be a $k$-cycle in $G$. A Hamiltonian pair $\{u,v\}$ of $C$ consists of distinct nodes $u,v\in V(C)$ such that: (1) they are both adjacent to nodes in $V(G)\setminus V(C)$ and (2) there is a Hamiltonian $u, v$-path in $G[V(C)]$. If $k \leq 6$, then $C$ has at least $2$ distinct Hamiltonian pairs $\{u_1,v_1\},\{u_2,v_2\}$, in particular $|\{u_1,v_1\}\cap \{u_2,v_2\}|\leq 1$. If $k=7$, then $C$ has at least one such pair.      
\end{lemma}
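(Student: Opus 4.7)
My plan is to start by applying the 3-Matching Lemma (Lemma~\ref{lem:matchingOfSize3}) to the bipartition $(V(C),V(G)\setminus V(C))$ of $V(G)$; this is valid because $G$ is $5/4$-structured and therefore contains at least $16$ vertices, so $|V(G)\setminus V(C)|\geq 9$, and $|V(C)|=k\geq 4$. The resulting matching of size~$3$ certifies that the set $T\subseteq V(C)$ of vertices with at least one neighbor in $V(G)\setminus V(C)$ satisfies $|T|\geq 3$. I will use throughout the simple observation that, for any cycle-adjacent pair $\{x_i,x_{i+1}\}$ in $C$, the path $x_ix_{i-1}\cdots x_{i+2}x_{i+1}$ is already a Hamiltonian $x_i,x_{i+1}$-path inside $C$ (and thus inside $G[V(C)]$), so any cycle-adjacent pair contained in $T$ is automatically a Hamiltonian pair of $C$.

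I would then case-analyze on $k\in\{4,5,6,7\}$ and on the combinatorial placement of $T$ inside $C$. The case $k=4$ is immediate, since any three vertices of a $4$-cycle contain two cycle-adjacent pairs that share exactly one vertex. For $k\in\{5,6,7\}$ with $|T|\geq 4$, the cycle's independence number is $\lfloor k/2\rfloor\leq 3$, so $T$ is not independent in $C$; a short inspection on the placement of $T$ then shows it always contains two cycle-adjacent pairs sharing at most one vertex (and for $k=7$ even a single cycle-adjacent pair already suffices). This reduces the problem to the \emph{bad} subcases where $|T|=3$ and $T$ has at most one cycle-adjacent pair: up to cyclic symmetry of $C$, these are $T\in\{\{x_1,x_2,x_4\},\{x_1,x_3,x_5\}\}$ for $k=5$, $T\in\{\{x_1,x_2,x_4\},\{x_1,x_2,x_5\},\{x_1,x_3,x_5\}\}$ for $k=6$, and the independent triples of $C$ (such as $\{x_1,x_3,x_5\}$) for $k=7$.

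For the $k=6$ fully-independent subcase $T=\{x_1,x_3,x_5\}$ (and its symmetric counterpart), I would apply Lemma~\ref{lem:6-cycleContractibility} directly: if at most one pair in $\{x_1,x_3,x_5\}$ admitted a Hamiltonian path in $G[V(C)]$, that lemma would produce at least two edges from $\{x_2,x_4,x_6\}$ to $V(G)\setminus V(C)$, contradicting the assumption that $\{x_2,x_4,x_6\}\cap T=\emptyset$; since the three potential pairs in $\{x_1,x_3,x_5\}$ pairwise share exactly one vertex, the two Hamiltonian paths provided by the lemma yield two Hamiltonian pairs as required. For the remaining bad subcases, I would use a contractibility-style argument analogous in spirit to the proof of Lemma~\ref{lem:6-cycleContractibility}: assume for contradiction that the required number of Hamiltonian pairs is missing, and show that every $2$EC spanning subgraph $S'$ of $G$ must satisfy $|S'\cap G[V(C)]|\geq \lceil 4k/5\rceil$, so that $C$ is a $5/4$-contractible subgraph on at most $8$ nodes, contradicting that $G$ is $5/4$-structured. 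Concretely, I would contract $V(G)\setminus V(C)$ in $S'$ to a single super-node $v_\infty$ to obtain a $2$EC multigraph on $V(C)\cup\{v_\infty\}$, and then exploit (i) that every vertex of $I:=V(C)\setminus T$ has all its neighbors in $V(C)$, forcing its degree in $S'\cap G[V(C)]$ to be at least~$2$, and (ii) that the absence of a further Hamiltonian path forbids certain chords of $C$ from existing, which in turn forces the desired lower bound on $|S'\cap G[V(C)]|$.

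I expect the main obstacle to be the $k=7$ independent subcase (e.g., $T=\{x_1,x_3,x_5\}$), where $|I|=4$, the Hamiltonian-path condition is the most demanding (a path of length~$6$ visiting all $7$ vertices), and the combinatorics of which chord of $C$ forces which Hamiltonian path is most intricate. Here, I anticipate that the contractibility count alone is not quite enough and that one must also invoke the isolating-cut property of $5/4$-structured graphs: if certain chords of $C$ were missing, some pair $\{u,v\}\subseteq V(C)$ would become a non-isolating $2$-vertex cut of $G$, again contradicting that $G$ is $5/4$-structured. Together, the contractibility lower bound and the isolating-cut restriction should force enough chords of $C$ to exist to produce the required Hamiltonian pair.
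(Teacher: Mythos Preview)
Your approach is essentially the paper's: apply the 3-Matching Lemma to get $|T|\geq 3$, dispatch the cases where $T$ contains enough cycle-adjacent pairs, and in the residual configurations use non-contractibility of $C$ to force a chord of $G[V(C)]$ that yields the missing Hamiltonian path. One correction on your $k=7$ worry: contractibility alone already suffices and the isolating-cut property is never invoked---with $T=\{x_1,x_3,x_5\}$ and $I=\{x_2,x_4,x_6,x_7\}$, if $E(G[I])=\{x_6x_7\}$ then any 2EC spanning subgraph uses at least $2+2+3=7$ edges of $G[V(C)]$ (so $C$ is even $1$-contractible), while every other possible chord in $G[I]$ (namely $x_2x_4$, $x_2x_6$, $x_2x_7$, $x_4x_6$, $x_4x_7$) directly produces a Hamiltonian path between two vertices of $T$.
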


\begin{proof}
    Let $M = \{u_1v_1, u_2v_2, u_3v_3\}$ be  a matching 
    obtained by applying the $3$-Matching Lemma~\ref{lem:matchingOfSize3} to the partition $(V(C), V(G)\setminus V(C))$, 
    where $u_i\in V(C)$ and $v_i\in V(G)\setminus V(C)$ for all $i\in\{1, 2, 3\}$ (observe that the partition satisfies the conditions of Lemma~\ref{lem:matchingOfSize3}). Notice that, for $k=4$, at least one node $u_i$, say $u_3$, is adjacent to the other two nodes $u_1$ and $u_2$. Hence the pairs $\{u_1,u_3\}$ and $\{u_2,u_3\}$ satisfy the claim. 
    
    We next assume that $C=x_1x_2\dots x_k$ with $k\in\{5, 6, 7\}$. We can assume w.l.o.g.\ that there is at most one Hamiltonian pair  
    with both its vertices from $\{u_1, u_2, u_3\}$, otherwise we are done. In particular, $\{u_1,u_2,u_3\} \neq \{x_1,x_2,x_3\}$. Moreover, if such a pair exists, we assume w.l.o.g.\ that it is $\{u_1,u_2\}$ (while $\{u_1,u_3\}$ and $\{u_2,u_3\}$ are not).
    We now consider different cases based on the value of $k$:
    
    \begin{caseanalysis}
    \case{$k=5$.} We can assume w.l.o.g.\ that $u_1 = x_1, u_2 = x_2, u_3 = x_4$. $\{x_1,x_2\}$ is one of the desired Hamiltonian pairs. We can assume that $x_3$ (resp., $x_5$) has no neighbors in $V(G)\setminus V(C)$, otherwise the second pair is $\{x_2,x_3\}$ (resp., $\{x_1,x_5\}$). 
    Now, the edge $x_3x_5$ must exist in $G$, otherwise every 2EC subgraph of $G$ must 
    contain at least $4$ edges of $G[V(C)]$, $2$ for $x_3$, and $2$ for  $x_5$,   
    making $C$ 
    $5/4$-contractible, a contradiction to the fact that $G$ is $5/4$-structured. Then, the path $x_1x_2x_3x_5x_4$ is a Hamiltonian $x_1, x_4$-path in $G[V(C)]$, and hence $\{x_1,x_4\}$ is the other desired Hamiltonian pair.

    \case{$k=6$.}
    
    \subcase{$u_1$ is adjacent to $u_2$ in $C$.} We can assume w.l.o.g.\ that $u_1 = x_1, u_2 = x_2, u_3 = x_4$. 
    $\{x_1, x_2\}$ is one of the desired pairs.
    Moreover, we can assume that $x_3$, $x_5$, and $x_6$ have no neighbors in $V(G)\setminus V(C)$, otherwise $\{x_3,x_2\}$, $\{x_5,x_4\}$, and $\{x_6,x_1\}$, respectively, is the other desired pair. Notice that $x_3$ must be adjacent to either $x_5$ or $x_6$ in $G$, otherwise every 2EC subgraph of $G$ must 
    contain at least $5$ edges of $G[V(C)]$ ($2$ for $x_3$ and $3$ in total for $x_5$ and $x_6$), making $C$  
    $6/5$-contractible, and hence $5/4$-contractible by Observation~\ref{obs:relationContractability}, a contradiction to the fact that $G$ is $5/4$-structured. If 
    $x_3x_5\in E(G)$, then $x_2x_1x_6x_5x_3x_4$ is a Hamiltonian $x_2, x_4$-path in $G[V(C)]$, and hence $\{x_2,x_4\}$ is the other desired Hamiltonian pair. Similarly, if  
    $x_3x_6\in E(G)$, then $\{x_1,x_4\}$ is the other desired pair.
    
    \subcase{$u_1$ is not adjacent to $u_2$ in $C$.}
    We can assume w.l.o.g.\ $u_1 = x_1, u_2 = x_3, u_3 = x_5$. Moreover, we can assume that $x_2$, $x_4$, and $x_6$ have no neighbors in $V(G)\setminus V(C)$, otherwise the two desired pairs are $\{\{x_2,x_1\},\{x_2,x_3\}\}$, $\{\{x_4,x_3\},\{x_4,x_5\}\}$, and $\{\{x_6,x_5\},\{x_6,x_1\}\}$, respectively. Now, applying Lemma~\ref{lem:6-cycleContractibility}, either there must be two Hamiltonian pairs with vertices from $\{x_1,x_3,x_5\}$ and we are done or there must exist edges $xy$ where $x\in\{x_2,x_4,x_6\}$ and $y\notin V(C)$, a contradiction.

    \case{$k=7$.} In this case, we only need to prove the existence of one Hamiltonian pair. 
    Hence, we can assume that $u_1$, $u_2$, and $u_3$ are not pairwise adjacent in $C$. Thus, w.l.o.g., we can assume $u_1 = x_1$, $u_2 = x_3$, and $u_3 = x_5$. Moreover, we can assume that $x_2$, $x_4$, $x_6$, and $x_7$ do not have neighbors in $V(G)\setminus V(C)$, otherwise the pair $\{x_2,x_1\}$, $\{x_4,x_5\}$, $\{x_6,x_5\}$, and $\{x_7,x_1\}$, resp., satisfies the claim. If $E(G[\{x_2, x_4, x_6, x_7\}]) = \{x_6x_7\}$, then every 2EC subgraph of $G$ must use at least $7$ edges of $G[V(C)]$ ($2$ for node $x_2$, $2$ for node $x_4$, and $3$ in total for the nodes $x_6$ and $x_7$ together).  Then $C$ is $1$-contractible, hence $5/4$-contractible by Observation \ref{obs:relationContractability}, contradicting the fact that $G$ is $5/4$-structured. Thus there is at least another edge besides $x_6x_7$ between two nodes in $\{x_2, x_4, x_6, x_7\}$. If $x_2x_4\in E(G)$, then $x_3x_4x_2x_1x_7x_6x_5$ is a Hamiltonian $x_3, x_5$-path in $G[V(C)]$, and thus, $\{x_3,x_5\}$ is the desired pair. If $x_2x_6\in E(G)$, then $x_1x_7x_6x_2x_3x_4x_5$ is a Hamiltonian $x_1, x_5$-path in $G[V(C)]$, and thus, $\{x_1,x_5\}$ is the desired pair. If $x_2x_7\in E(G)$, then $x_1x_2x_7x_6x_5x_4x_3$ is a Hamiltonian $x_1, x_3$-path in $G[V(C)]$, and thus, $\{x_1,x_3\}$ is the desired pair. The remaining cases follow symmetrically.
    \end{caseanalysis}
\end{proof}

In the following, we assume we are given a canonical $2$-edge cover $S$ of $G$, and we focus on the 2VC blocks of the component graph $\hat G_S$. First 
we 
introduce some terminology. We say that a component $C$ of $S$ is \emph{local} if $\hat C$ belongs to exactly one 2VC block of $\hat G_S$; otherwise, it is \emph{non-local}. That is, a component $C$ of $S$ is non-local if and only if $\hat C$ is a cut vertex of $\hat G_S$.

Given two components $C_1, C_2$ of $S$ and a node $u_1\in V(C_1)$, we use the notation $u_1\hat C_2$ to indicate an edge between $u_1$ and some $u_2\in V(C_2)$ (specifically, when identifying such a $u_2$ is not required in our arguments). In this case, we also say that $u_1$ is adjacent to $\hat C_2$. Similarly, we use the notation $\hat C_1\hat C_2$ to indicate an edge between some $u_1\in V(C_1)$ and some $u_2\in V(C_2)$.

Given a 2VC block $B$ of $\hat G_S$, and $\hat{C}\in V(B)$, we say that $C$ is a component of $B$. Furthermore, we say that we apply the local $3$-Matching Lemma~\ref{lem:local3matching} to $C$ to mean that we apply it to $B$ and to the partition $(\{\hat C\}, V(B)\setminus \{\hat C\})$. 

We now present some lemmas that allow us to find cycles in $\hat G_S$ with convenient properties.

\begin{lemma}\label{lem:niceCycle}
    Let $B$ be a 2VC block of $\hat{G}_S$ and let $C_1$ and $C_2$ be two 
    distinct components of $B$. Given edges $u_1\hat X_1, u_2\hat X_2$, where $u_1\in V(C_1), u_2\in V(C_2), \hat X_1\in V(B)\setminus\{\hat C_1\}, \hat X_2\in V(B)\setminus\{\hat C_2\}$, one can compute in polynomial time  a cycle $F$ in $B$ containing $\hat C_1$ and $\hat C_2$, such that $F$ is incident on $u_1$ and another node in $C_1$, and incident on $u_2$ and another node in $C_2$.
\end{lemma}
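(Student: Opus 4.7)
The plan is to reduce the problem to an application of Menger's theorem on a suitably augmented multigraph. Construct $B^{\#}$ from $B$ by splitting $\hat C_1$ into two vertices $a_1, b_1$, where $a_1$ inherits every $B$-edge at $\hat C_1$ whose $V(C_1)$-endpoint is $u_1$ and $b_1$ inherits the remaining $B$-edges at $\hat C_1$; symmetrically split $\hat C_2$ into $a_2, b_2$; finally add two new vertices $s, t$ with edges $\{sa_1, sb_1\}$ and $\{ta_2, tb_2\}$ respectively. Both $s$ and $t$ have degree exactly $2$ in $B^{\#}$, which will be the key leverage.

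The first technical step is to verify that $B^{\#}$ is $2$-vertex-connected. The local $3$-Matching Lemma~\ref{lem:local3matching} applied at $\hat C_1$ (and at $\hat C_2$) guarantees that at least two of the three matching edges have $V(C_1)$-endpoint distinct from $u_1$ (resp., $V(C_2)$-endpoint distinct from $u_2$), so $b_1$ and $b_2$ each inherit at least two external edges in $B^{\#}$; the input edges $e_1$ and $e_2$ guarantee that $a_1$ and $a_2$ each inherit at least one external edge. Combined with the $2$-vertex-connectivity of $B$, these facts imply that removing any single vertex of $B^{\#}$ leaves a connected graph. The subtle case is the removal of some $v \in V(B) \setminus \{\hat C_1, \hat C_2\}$: if this disconnected the gadget $\{a_1, b_1, s\}$ from the rest, then in $B$ all edges at $\hat C_1$ would be incident on $v$, contradicting that $\hat C_1$ has at least two distinct neighbors in $B$.

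Applying Menger's theorem in $B^{\#}$ yields two internally vertex-disjoint paths $P$ and $Q$ from $s$ to $t$. Since $s$ and $t$ have degree $2$, together $P$ and $Q$ use all four edges $sa_1, sb_1, ta_2, tb_2$, and so $P \cup Q$ is a cycle $F^{\#}$ in $B^{\#}$ passing through $s, a_1, b_1, t, a_2, b_2$. The desired cycle $F$ in $B$ is obtained by deleting $s$, $t$, and their four incident edges from $F^{\#}$, and then merging $a_i$ with $b_i$ back into $\hat C_i$ for each $i \in \{1,2\}$. By construction, the two cycle edges of $F$ at $\hat C_1$ are precisely the unique external cycle edges of $a_1$ and $b_1$ in $F^{\#}$: the former has $V(C_1)$-endpoint $u_1$ and the latter has $V(C_1)$-endpoint in $V(C_1) \setminus \{u_1\}$, as required; the analogous statement holds at $\hat C_2$. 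All steps (the splitting, the computation of two vertex-disjoint $s$--$t$ paths via a standard flow subroutine, and the merging) can be performed in polynomial time. The main obstacle is the $2$-vertex-connectivity verification of $B^{\#}$; once this is in place, the degree-$2$ structure at $s$ and $t$ automatically forces the cycle to traverse both halves of each split, yielding the required endpoint-distinctness at $\hat C_1$ and $\hat C_2$.
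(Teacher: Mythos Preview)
Your approach is correct and genuinely different from the paper's. The paper proceeds by an explicit four-stage rerouting: it first computes an arbitrary cycle through $\hat C_1,\hat C_2$, then successively repairs the incidence at $u_1$, then at a second node of $C_1$, then at $u_2$, and finally at a second node of $C_2$, each time using 2-vertex-connectivity of $B$ to route around the offending vertex. Your construction instead encodes all four constraints simultaneously into the auxiliary graph $B^{\#}$ and discharges them with a single invocation of Menger's theorem; the degree-$2$ structure at $s$ and $t$ is what forces the resulting cycle to use both halves of each split. This is more conceptual and arguably cleaner than the paper's case-by-case rerouting, at the cost of front-loading the work into the $2$-vertex-connectivity verification of $B^{\#}$.

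That verification, however, is the one place where your writeup is too sketchy. Your ``subtle case'' argument for removing $v\in V(B)\setminus\{\hat C_1,\hat C_2\}$ only rules out that the gadget $\{s,a_1,b_1\}$ is separated from everything else; it does not, as written, rule out that the non-gadget part of $B^{\#}-v$ falls apart internally. The clean way to handle this case is to observe that in $B^{\#}-v$ both gadgets $\{s,a_1,b_1\}$ and $\{t,a_2,b_2\}$ are still internally connected, so contracting each of them yields exactly $B-v$, which is connected since $B$ is 2VC; hence $B^{\#}-v$ is connected. For the removal of $s$ (symmetrically $t$), contract only the $t$-gadget to obtain $B-\hat C_1$ with $a_1,b_1$ attached; since $B-\hat C_1$ is connected and both $a_1$ and $b_1$ retain external edges (the input edge for $a_1$, two matching edges for $b_1$), connectedness follows. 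The removals of $a_i,b_i$ are handled the same way. With these details supplied, the argument is complete.
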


\begin{proof}
    We start by 
    computing a cycle $F$ in $B$ that contains the component-nodes $\hat C_1, \hat C_2$, which can be done efficiently since $B$ is 2VC and both $\hat C_1$ and $\hat C_2$ belong to $B$. Let $F$ be the union of $2$ internally vertex-disjoint paths $P_v, P_w$ between $\hat C_1$ and $\hat C_2$, where $P_v$ is incident to $v_i\in V(C_i)$ and $P_w$ is incident to $w_i \in V(C_i)$, for $i\in\{1, 2\}$.
    We remark that we might have $v_i=w_i$ for some $i\in\{1, 2\}$. 
    Let  $P_v = \hat C^v_1\hat C^v_2\dots\hat C^v_{k_v}$ and $P_w = \hat C^w_1\hat C^w_2\dots\hat C^w_{k_{w}}$, with $\hat C^v_1 = \hat C^w_1 = \hat C_1$ and $\hat C^v_{k_v} =  \hat C^w_{k_w} = \hat C_2$. 
    
    We first show that we can assume that $v_1=u_1$. Assume this is not the case, i.e., $u_1\notin\{v_1,w_1\}$. 
    Since $B$ is 2VC, we can find in polynomial time a path $P_{X_1}$ in $\hat G_S$ from $\hat X_1$ to a node in $V(F)\subseteq V(B)$ not going through $\hat C_1$. Assume w.l.o.g.\ that $P_{X_1}$ has $\hat C^v_i, 1 < i \leq k_v$, as an endpoint. Then the path $\{u_1\hat X_1\}\cup P_{X_1}\cup\hat C^v_i \hat C^v_{i+1}\dots\hat C^v_{k_v}$ is a path between $\hat C_1$ and $\hat C_2$, internally vertex-disjoint with $P_w$ and incident to $u_1$ in $C_1$.

    We now show that we can assume $v_1 = u_1$ and $w_1 \neq u_1$. To get a contradiction, assume $v_1=u_1=w_1$. 
    By the local $3$-Matching Lemma (Lemma~\ref{lem:local3matching}), there must exist an edge $u\hat X$, 
    where $u\in V(C_1)\setminus\{u_1\}$ and $ \hat X\in V(B)\setminus\{\hat C_1\}$. Since $B$ is 2VC, we can find in polynomial time a path $P_X$ in $\hat G_S$ from $\hat X$ to a node in $V(F)\subseteq V(B)$ not going through $\hat C_1$. Assume w.l.o.g.\ that $P_X$ has $\hat C^w_i, 1 < i \leq k_w$, as an endpoint. Then the path $\{u\hat X\}\cup P_X\cup\hat C^w_i \hat C^w_{i+1}\dots\hat C^w_{k_w}$ is a path between $\hat C_1$ and $\hat C_2$, incident to $u\neq u_1$ in $C_1$, and internally vertex-disjoint with $P_v$ which is incident to $u_1$ in $C_1$.  

    We have shown that $v_1 = u_1$ and $w_1\neq u_1$.
    Furthermore, we now show that we can assume 
    either $v_2=u_2$ or $w_2=u_2$. Assume this is not the case, i.e., $u_2\notin\{v_2,w_2\}$.  
    Then, since $B$ is 2VC, we can find in polynomial time a path $P_{X_2}$ in $\hat G_S$ from $\hat X_2$ to a node in $V(F)\subseteq V(B)$ not going through $\hat C_2$. Assume first that $P_{X_2}$ has $\hat C^w_i, 1 \leq i < k_w$, as an endpoint, and if $i=1$, then $P_{X_2}$ is not incident to $u_1$. Then the path $\{u_2\hat X_2\}\cup P_{X_2}\cup\hat C^w_i \hat C^w_{i-1}\dots\hat C^w_1$ is a path between $\hat C_2$ and $\hat C_1$, incident to $u_2$ in $C_2$, not incident to $u_1$ in $C_1$, and internally vertex-disjoint with $P_v$. Since $P_v$ is incident to $u_1$ in $C_1$ the claim follows. Assume now $P_{X_2}$ has $\hat C^v_i, 1 \leq i < k_v$, as an endpoint, and if $i=1$, then $P_{X_2}$ is incident to $u_1$. Then the path $\{u_2\hat X_2\}\cup P_{X_2}\cup\hat C^v_i \hat C^v_{i-1}\dots\hat C^v_1$ is a path between $\hat C_2$ and $\hat C_1$, incident to $u_1$ in $C_1$, incident to $u_2$ in $C_2$, and internally vertex-disjoint with $P_w$. Since $P_w$ is not incident to $u_1$ in $C_1$, the claim follows.

    Finally, we can now assume that $v_1 = u_1, w_1\neq u_1$, and $u_2=v_2=w_2$, otherwise we are done. By the local $3$-Matching Lemma~\ref{lem:local3matching}, there must exist an edge $u\hat X$ with $u\in V(C_2)\setminus\{u_2\}$ and $\hat X\in V(B)\setminus\{\hat C_2\}$. Since $B$ is 2VC, we can find in polynomial time a path $P_X$ in $\hat G_S$ from $\hat X$ to a node in $V(F)\subseteq V(B)$ not going through $\hat C_2$. Assume first that $P_X$ has $\hat C^w_i, 1 \leq i < k_w$, as an endpoint, and if $i=1$, then $P_X$ is not incident to $u_1$. Then the path $\{u\hat X\}\cup P_X\cup\hat C^w_i\hat C^w_{i-1}\dots\hat C^w_1$ is a path between $\hat C_2$ and $\hat C_1$, not incident to $u_1$ in $C_1$, not incident to $u_2$ in $C_2$, and internally vertex-disjoint with $P_v$. Since $P_v$ is incident to $u_1$ in $C_1$ and $u_2$ in $C_2$, the lemma follows. Assume now that $P_X$ has $\hat C^v_i, 1 \leq i < k_v$, as an endpoint, and if $i=1$, then $P_X$ is incident to $u_1$. Then the path $\{u\hat X\}\cup P_X\cup\hat C^v_i\hat C^v_{i-1}\dots\hat C^v_1$ is a path between $\hat C_2$ and $\hat C_1$, incident to $u_1$ in $C_1$, not incident to $u_2$ in $C_2$, and internally vertex-disjoint with $P_w$. Since $P_w$ is incident to $w_1\neq u_1$ in $C_1$ and $u_2$ in $C_2$, the lemma follows. 
\end{proof}

\begin{corollary}\label{cor:cycleSize3}
    Let $B$ be a 2VC block of $\hat{G}_S$ and let $C_1, C_2$ be two different components of $B$. Given an edge $u_1\hat X$, $u_1\in V(C_1), \hat X\in V(B)\setminus\{\hat C_1\}$, one can compute in polynomial time a cycle $F$ in $B$ containing $\hat C_1$ and $\hat C_2$ such that $F$ is incident on $u_1$ and another node in $C_1$ 
    and $|F|\geq \min\{3, |V(B)|\}$.
\end{corollary}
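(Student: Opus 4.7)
My plan is to invoke Lemma~\ref{lem:niceCycle} on a carefully chosen pair of incident edges and, if the resulting cycle has length only $2$, to lengthen it using the structure of $B$.

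First, I would produce a second edge $u_2\hat X_2$ with $u_2\in V(C_2)$ and $\hat X_2\in V(B)\setminus\{\hat C_2\}$. When $|V(B)|\geq 3$, I would additionally insist on $\hat X_2\neq\hat C_1$: by the 2-vertex-connectivity of $B$, $B-\hat C_1$ is connected and contains $\hat C_2$ together with at least one further component-node, so $\hat C_2$ must have a neighbor in $V(B)\setminus\{\hat C_1,\hat C_2\}$ and any corresponding $G$-edge furnishes the desired $u_2\hat X_2$. When $|V(B)|=2$, I just apply the local $3$-Matching Lemma~\ref{lem:local3matching} to $C_2$ to obtain an arbitrary such edge (necessarily with $\hat X_2=\hat C_1$). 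Feeding $u_1\hat X$ and $u_2\hat X_2$ into Lemma~\ref{lem:niceCycle} produces, in polynomial time, a cycle $F\subseteq B$ through $\hat C_1,\hat C_2$ incident on $u_1$ and on a second node of $V(C_1)$. Since $F$ contains two distinct component-nodes, $|F|\geq 2$, which already settles the case $|V(B)|\leq 2$.

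Now suppose $|V(B)|\geq 3$ and $|F|=2$; the returned cycle is then a pair of parallel edges $e_1,e_2$ in $\hat G_S$ between $\hat C_1$ and $\hat C_2$, with $e_1=u_1v$ incident on $u_1$ and $e_2=u_1'w$ for some $u_1'\ne u_1$ in $V(C_1)$. I would lengthen $F$ by detouring through $\hat X_2$: since $\hat X_2\notin\{\hat C_1,\hat C_2\}$ and $B-\hat C_2$ is connected by 2-vertex-connectivity, there is a path from $\hat C_1$ to $\hat X_2$ inside $B-\hat C_2$; appending the edge $u_2\hat X_2$ yields a $\hat C_1$-to-$\hat C_2$ path $Q$ through $\hat X_2$ of length at least $2$. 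The cycle $\{e_1\}\cup Q$ is then simple, has length at least $3$, traverses the three distinct component-nodes $\hat C_1,\hat C_2,\hat X_2$, and is incident on $u_1$ via $e_1$.

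The main obstacle I anticipate is guaranteeing that this new cycle is also incident on some $u_1''\neq u_1$ in $V(C_1)$, since $Q$ might enter $\hat C_1$ exactly at $u_1$. To resolve this I would apply the local $3$-Matching Lemma to $C_1$: it produces three pairwise-disjoint edges from $V(C_1)$ to $V(B)\setminus\{\hat C_1\}$, of which at least two involve vertices of $V(C_1)\setminus\{u_1\}$, giving in particular an edge $u_1'\hat Y$ with $u_1'\ne u_1$. I would then force $Q$ to start at $\hat C_1$ through $u_1'\hat Y$ by connecting $\hat Y$ to $\hat X_2$ inside $B-\hat C_1$ (connected by 2-vertex-connectivity). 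A short sub-case analysis, based on whether $\hat Y$ coincides with $\hat X_2$ or $\hat C_2$, handles the configurations in which this rerouting must be chosen to avoid $\hat C_2$ so that the resulting cycle remains simple; in spirit this mirrors the case distinctions already present in the proof of Lemma~\ref{lem:niceCycle}.
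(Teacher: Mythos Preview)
Your approach is essentially the same as the paper's---apply Lemma~\ref{lem:niceCycle}, and if the resulting cycle has length~$2$, lengthen it by detouring through a third component-node of $B$---with one cosmetic difference: the paper takes the detour from the $C_1$ side (it finds an edge $w_1\hat C$ with $w_1\in V(C_1)$ and $\hat C\in V(B)\setminus\{\hat C_1,\hat C_2\}$, then a $\hat C_2$--$\hat C$ path in $B-\hat C_1$, and pairs $w_1\hat C$ with $u_1u_2$ or $v_1v_2$ according to whether $w_1=u_1$), whereas you take it from the $C_2$ side via your pre-chosen edge $u_2\hat X_2$.

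Your finishing argument, however, is more complicated than necessary and leaves a small gap. You already have $e_2$ incident on some $u_1'\neq u_1$ in $V(C_1)$ from the length-$2$ cycle, so there is no need to invoke the local $3$-Matching Lemma again to manufacture a fresh edge $u_1'\hat Y$. The clean way to finish (and the exact mirror of the paper's proof) is: take a path $P$ from $\hat X_2$ to $\hat C_1$ in $B-\hat C_2$, let $z\in V(C_1)$ be the vertex at which $P$ enters $\hat C_1$, and return $\{e_1,\,u_2\hat X_2\}\cup P$ if $z\neq u_1$, and $\{e_2,\,u_2\hat X_2\}\cup P$ otherwise. In both cases the cycle has length at least~$3$ and is incident on $u_1$ and a second vertex of $C_1$. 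Your proposed rerouting through $u_1'\hat Y$ and a $\hat Y$--$\hat X_2$ path in $B-\hat C_1$ is not obviously sound as written: if $\hat Y=\hat C_2$ then starting $Q$ with $u_1'\hat Y$ yields a $\hat C_1$--$\hat C_2$ path of length~$1$, so $\{e_1\}\cup Q$ is again only a $2$-cycle, and your ``short sub-case analysis'' does not say how this is repaired. It \emph{can} be repaired (exactly by the $e_1$/$e_2$ switch above), but you should state that explicitly rather than gesture at it.
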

\begin{proof}
    Let $F'$ be the cycle in $B$ found by applying Lemma~\ref{lem:niceCycle} to $C_1$ and the edge $u_1\hat X$ and $C_2$ and an arbitrary edge in $B$ incident on $C_2$. Let $u_1, v_1$ be the distinct nodes of $C_1$ incident with $F'$. We can assume that $|F'|<\min\{3, |V(B)|\}$, otherwise we are done. Thus, $|F'|=2$ and $|V(B)|\geq 3$, and $F' = \{u_1u_2, v_1v_2\}$, where $u_2, v_2\in V(C_2)$. Since $B$ is 2VC and $|V(B)|\geq 3$, it must contain an edge $w_1\hat C$, where $w_1\in V(C_1)$ and $\hat C\in V(B)\setminus\{\hat C_1, \hat C_2\}$. Because $B$ is 2VC, there must exist a path $P$ in $B$ from $C_{2}$ to $C$ not going through $C_{1}$. If $w_1\neq u_1$, let $F:=P\cup \{u_1u_2, w_1\hat C\}$. This way, $F$ is a cycle in $B$ incident to distinct nodes $u_1, w_1$ of $C_1$, and $|F|\geq 3$. Otherwise, if $w_1=u_1$, $F:=P\cup \{v_1v_2, w_1\hat C\}$ is the desired cycle.
\end{proof}

\begin{lemma}\label{lem:shortcutC4localC5}
    Let $B$ be a 2VC block of $\hat{G}_S$ and let $C_1, C_2$ be two different components of $B$ such that $C_1$ is a $4$-cycle or a local $5$-cycle. Given an edge $u_2\hat X$, where $u_2\in V(C_2)$ and $ \hat X\in V(B)\setminus\{\hat C_2\}$, one can compute in polynomial time  a cycle $F$ in $B$ incident to distinct nodes $u_i, v_i\in V(C_i)$ for $i\in\{1, 2\}$ such that there is a Hamiltonian $u_1, v_1$-path in $G[V(C_1)]$.
\end{lemma}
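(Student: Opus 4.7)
The plan is to first identify a Hamiltonian pair $\{u_1,v_1\}$ of $C_1$ consisting of two cycle-adjacent vertices (so that the Hamiltonian $u_1,v_1$-path in $G[V(C_1)]$ is automatic), and then to construct the cycle $F$ with the required incidences by extending the rerouting argument of Lemma~\ref{lem:niceCycle} so that both $\hat C_1$-side incidences are prescribed simultaneously.

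First, I apply the Local $3$-Matching Lemma (Lemma~\ref{lem:local3matching}) with partition $(\{\hat C_1\}, V(B)\setminus\{\hat C_1\})$. This yields three distinct nodes $a_1,a_2,a_3\in V(C_1)$, each incident (in $G$) to an edge whose other endpoint lies in $V(B)\setminus V(C_1)$; because both endpoints of each such edge lie in $V(B)$ and $B$ is a maximal 2VC subgraph of $\hat G_S$, these edges lie in $E(B)$. Since the independence number of both $C_4$ and $C_5$ equals $2$, some two of $\{a_1,a_2,a_3\}$ are adjacent in the cycle $C_1$; label this pair $\{u_1,v_1\}$. Deleting the edge $u_1v_1$ from $C_1$ yields a Hamiltonian $u_1,v_1$-path in $G[V(C_1)]$ using only cycle edges, which is precisely the desired Hamiltonian-path condition.

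Second, let $e_a=u_1\hat Y_a$ and $e_b=v_1\hat Y_b$ in $E(B)$ be the two edges witnessing the outside adjacencies of $u_1$ and $v_1$. I start with an arbitrary cycle $F=P_v\cup P_w$ in $B$ through $\hat C_1$ and $\hat C_2$ (which exists since $B$ is 2VC), decomposed into two internally vertex-disjoint paths from $\hat C_1$ to $\hat C_2$. Following Lemma~\ref{lem:niceCycle}, I first reroute $P_v$ via $e_a$, using a path in $B\setminus\{\hat C_1\}$ from $\hat Y_a$ to the interior of $F$, so that $P_v$ becomes incident on $u_1$ at the $\hat C_1$ side. I then perform an analogous reroute of $P_w$ using $e_b$ and a path in $B\setminus\{\hat C_1\}$ from $\hat Y_b$ to the interior of $P_w$, forcing the $\hat C_1$-side incidence of $P_w$ to be $v_1$. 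Finally, I reroute on the $\hat C_2$ side using the given edge $u_2\hat X$ exactly as in Lemma~\ref{lem:niceCycle}, to ensure $F$ is incident on $u_2$ and some distinct node $v_2\in V(C_2)$.

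The main obstacle is compatibility between the two $\hat C_1$-side reroutings: the splice used to reroute $P_w$ must remain internally vertex-disjoint from the already-fixed $P_v$ and must not disturb the incidence on $u_1$. This requires a case analysis paralleling that of Lemma~\ref{lem:niceCycle}, but with an additional layer because we prescribe both $\hat C_1$-side incidences rather than only one. The crucial observation is that, if the first candidate connector from $\hat Y_b$ ends on $P_v$ rather than $P_w$, then the 2-vertex-connectivity of $B$ together with the local $3$-matching provides an alternative connector that ends on $P_w$, preserving disjointness; the same kind of surgery handles the $\hat C_2$-side reroute.
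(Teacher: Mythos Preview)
Your plan has a genuine gap in the second step. You commit in advance to a specific adjacent pair $\{u_1,v_1\}$ of $C_1$ and then try to force \emph{both} $\hat C_1$-side incidences of $F$ to equal $u_1$ and $v_1$. This double prescription is strictly stronger than what Lemma~\ref{lem:niceCycle} provides, and your ``crucial observation'' that 2-vertex-connectivity of $B$ plus the local $3$-matching supplies an alternative connector to $P_w$ is not justified. Here is a configuration where it fails. Take $C_1=x_1x_2x_3x_4$, let $V(B)=\{\hat C_1,\hat A,\hat C_2\}$, and suppose the only $B$-edges leaving $C_1$ are $x_1a_1,x_2a_2,x_3a_3$ with $a_1,a_2,a_3\in V(A)$ together with a single edge $x_4c$ with $c\in V(C_2)$, plus some edge $\hat A\hat C_2$ (so $B$ is 2VC). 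The Local $3$-Matching Lemma may return $\{x_1a_1,x_2a_2,x_3a_3\}$, and either adjacent pair you extract, $\{x_1,x_2\}$ or $\{x_2,x_3\}$, has both witness edges landing in the same component-node $\hat A\neq\hat C_2$. Since every cycle of $B$ through $\hat C_1$ and $\hat C_2$ has the form $x_i\hat A\hat C_2 x_4$ for some $i\in\{1,2,3\}$, no such cycle is incident on both members of your chosen pair; your reroute of $P_w$ via $e_b$ necessarily first meets $P_v$ at $\hat A$, and the third matching edge also goes to $\hat A$, so there is no alternative connector. Note also that your argument never uses the locality hypothesis on the $5$-cycle, which is a warning sign: the paper needs it.

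The paper avoids this obstruction by \emph{not} pre-committing to both $\hat C_1$-incidences. It first invokes Lemma~\ref{lem:niceCycle} to obtain a cycle incident on some two distinct vertices of $C_1$ (and on $u_2$ and another vertex of $C_2$), and only then asks whether that pair admits a Hamiltonian path in $G[V(C_1)]$. If not, say the incidences are $x_1,x_3$, it looks for a vertex $y\in V(C_1)\setminus\{x_1,x_3\}$ with an outside edge and Hamiltonian paths to \emph{both} $x_1$ and $x_3$; a single reroute through $y$ then fixes whichever side it lands on. For the $4$-cycle this $y$ is forced to be $x_2$ or $x_4$, both adjacent to $x_1$ and $x_3$; for the $5$-cycle the locality assumption enters via the Hamiltonian Pairs Lemma to produce the needed extra outside edges. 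The essential difference is that the paper changes only one incidence at $C_1$ per reroute and lets the pair be determined by where the connector lands, whereas you try to dictate both simultaneously.
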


\begin{proof}
We start by computing a cycle $F$ incident to distinct nodes of $C_1$ and to $u_2, v_2$ in $C_2$ via Lemma~\ref{lem:niceCycle}, where $v_2$ is a node in $V(C_2)\setminus\{u_2\}$. Let $C_1 = x_1x_2\dots x_k, k\in\{4, 5\}$. We can assume that $F$ is incident to $x_1$ and $x_3$, and there is no Hamiltonian $x_1, x_3$-path in $G[V(C)]$, otherwise we are done. Let $P_u$ and $P_v$ be the two internally vertex-disjoint paths from $C_1$ to $C_2$ such that their union is $F$, they are incident to $x_1$ and $x_3$ in $C_1$, and to $u_2$ and $v_2$ in $C_2$. Let $P_u = \hat C^u_1\hat C^u_2\dots\hat C^u_{k_{u}}$ and $P_v = \hat C^v_1\hat C^v_2\dots\hat C^v_{k_v}$, where $\hat C^u_1 = \hat C^v_1 = \hat C_1$ and $\hat C^u_{k_u} =  \hat C^v_{k_v} = \hat C_2$.

First, assume that there is an edge $y\hat Y$, where $y\in V(C_1)\setminus\{x_1, x_3\}$ and $\hat Y\in V(B)\setminus\{\hat C_1\}$ such that there is a Hamiltonian $y, x_i$-path in $G[V(C_1)]$ for each $i\in\{1,3\}$. 
Since $B$ is 2VC, we can find a path $P_Y$ in $\hat G_S$ from $\hat Y$ to a node in $V(F)\subseteq V(B)$ not going through $\hat C_1$. Assume first that $\hat C^u_i, 1 < i \leq k_u$, is an endpoint of $P_Y$, and if $i = k_u$, then $P_Y$ is incident to $u_2$. Then, the path $\{y\hat Y\}\cup P_Y\cup \hat C^u_i\hat C^u_{i+1}\dots\hat C^u_{k_{u}}$ is incident to $y$ in $C_1$, incident to $u_2$ in $C_2$, and internally vertex-disjoint with $P_v$. Since $P_v$ is incident to $x_3$ in $C_1$ and to $v_2$ in $C_2$, the lemma follows. We can now assume $\hat C^v_i, 1 < i \leq k_v$, is an endpoint of $P_Y$, and if $i = k_v$, then $P_Y$ is not incident to $u_2$. Then, the path $\{y\hat Y\}\cup P_Y\cup \hat C^v_i\hat C^v_{i+1}\dots\hat C^v_{k_{v}}$ is incident to $y$ in $C_1$, not incident to $u_2$ in $C_2$, and internally vertex-disjoint with $P_u$. Since $P_u$ is incident to $x_1$ in $C_1$ and to $u_2$ in $C_2$, the lemma follows.

We can now assume that there is no such edge. In particular, this implies that there is no edge $x_2\hat X_2$, where $\hat X_2\in V(B)\setminus\{\hat C_1\}$. Then, by the local $3$-Matching Lemma~\ref{lem:local3matching}, there must be an edge $y\hat Y$ such that $y\in V(C_1)\setminus\{x_1, x_2, x_3\}$. If $C_1$ is a $4$-cycle $y=x_4$, then, since $x_1x_4, x_3x_4\in E(C_1)$, $x_4\hat Y$ is an edge of the excluded type. Thus, it must be that $C_1$ is a $5$-cycle, and by symmetry, we can assume w.l.o.g.\ $y=x_4$. Thus, there is an edge $x_4\hat X_4$, where $\hat X_4\in V(B)\setminus\{\hat C_1\}$. By our previous assumption, there is no Hamiltonian $x_1, x_4$-path in $G[V(C_1)]$. Since there is no edge $x_2\hat X_2$, where $\hat X_2\in V(B)\setminus\{\hat C_1\}$, by the Hamiltonian pairs lemma (Lemma~\ref{lem:hamiltonianPairs}), and since $C_1$ is local by assumption of the lemma, there must also exist an edge $x_5\hat X_5$, 
where $\hat X_5\in V(B)\setminus\{\hat C_1\}$. 
    
Since $B$ is 2VC, we can find in polynomial time a path $P_{X_4}$ in $\hat G_S$ from $\hat X_4$ to a node in $V(F)\subseteq V(B)$ not going through $\hat C_1$. Assume first $P_{X_4}$ has $\hat C^u_i, 1 < i \leq k_u$, as an endpoint, and if $i = k_u$, then it is incident to $u_2$. Then $\{x_4\hat X_4\}\cup P_{X_4}\cup \hat C^u_i\hat C^u_{i+1}\dots\hat C^u_{k_u}$ is incident to $x_4$ in $C_1$, incident to $u_2$ in $C_2$, and internally vertex-disjoint with $P_v$. Since $P_v$ is incident to $x_3$ in $C_1$ and to $v_2$ in $C_2$, the lemma follows. We can then assume that $P_{X_4}$ has $\hat C^v_i$ as an endpoint for some $1 < i \leq k_v$, and if $i = k_v$, then $P_{X_4}$ is not incident to $u_2$. Similarly, we define $P_{X_5}$, and by a similar argument it has $\hat C^u_j$ as an endpoint for some $1 < j \leq k_u$, and if $j = k_u$, then $P_{X_5}$ is incident to $u_2$.

If the paths $P_{X_4}, P_{X_5}$ are not internally vertex-disjoint, then we can efficiently compute a path $P'_{X_4}$ in $\hat G_S$ from $\hat X_4$ to a node in $V(F)\subseteq V(B)$ having $C^u_j, 1< j\leq k_u$, as an endpoint and satisfying that if $j=k_u$, then $P'_{X_4}$ is incident to $u_2$, so by the same arguments as above the claim follows. Otherwise, that is, $P_{X_4}$ and $P_{X_5}$ are internally vertex-disjoint, the paths $\{x_4\hat X_4\}\cup P_{X_4}\cup \hat C^v_i\hat C^v_{i+1}\dots\hat C^v_{k_v}$ and $\{x_5\hat X_5\}\cup P_{X_5}\cup \hat C^u_j\hat C^u_{j+1}\dots\hat C^u_{k_u}$ are internally vertex-disjoint and they have endpoints $x_4$ and $x_5$ in $C_1$ and $w_2$ and $u_2$ in $C_2$, where $w_2\in V(C_2)\setminus\{u_2\}$. Thus the lemma follows also in this case. 
\end{proof}

Finally, the following lemma is crucial to deal with non-local components of $S$.

\begin{lemma}\label{lem:shortcutNonLocal}
    Let $F$ be a 2EC graph and let $C$ be a cycle of $F$. Assume that there are paths $P_{u_1v_1}$ and $P_{u_2v_2}$ in $F\setminus C$
    between nodes $u_i$ and $v_i\in V(C)$ for $i\in \{1, 2\}$. Assume $u_1v_i\in C$ for some $i\in\{1, 2\}$. Let $d_{C'}(u, v)$ denote the length of the $u, v$-path in $C':= C\setminus\{u_1v_i\}$, for every $u, v\in V(C)$. If $d_{C'}(u_1, u_i)\leq d_{C'}(u_1, v_1)$, then $F\setminus\{u_1v_i\}$ is 2EC.
\end{lemma}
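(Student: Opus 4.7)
The plan is to argue by contradiction. Assume that $F^{\ast}:=F\setminus\{u_1v_i\}$ fails to be 2EC. Since $F$ is 2EC, $F^{\ast}$ is still connected, so the failure of 2-edge-connectivity yields an edge $f\neq u_1v_i$ such that $\{u_1v_i,f\}$ is a 2-edge cut of $F$. Let $(A,\bar A)$ be the induced vertex bipartition, with $u_1\in A$ and $v_i\in\bar A$, so that $u_1v_i$ and $f$ are the only edges of $F$ between $A$ and $\bar A$.

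The first step is to locate $f$. The cycle $C$ crosses the cut $(A,\bar A)$ via the edge $u_1v_i$; since a cycle crosses any cut an even number of times, it must cross again, and the only other cut edge is $f$. Hence $f\in C'=C\setminus\{u_1v_i\}$. Writing $f=ab$ with $a$ the endpoint closer to $u_1$ along the path $C'$, the initial subpath of $C'$ from $u_1$ up to $a$ lies in $A$ and the final subpath from $b$ to $v_i$ lies in $\bar A$; in particular $d_{C'}(u_1,a)<d_{C'}(u_1,b)$.

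The second step uses the two given paths. Since $P_{u_1v_1}$ and $P_{u_2v_2}$ are contained in $F\setminus C$, they use neither $u_1v_i$ nor $f$, so each of them stays on a single side of the cut. Applied to $P_{u_1v_1}$, this forces $v_1\in A$; and since $v_1\in V(C')$, the $A$-part of $C'$ is exactly the initial subpath up to $a$, giving $d_{C'}(u_1,v_1)\leq d_{C'}(u_1,a)$. If $i=1$, then $v_i=v_1$ would have to belong to $\bar A$ as well, an immediate contradiction. If $i=2$, applying the same reasoning to $P_{u_2v_2}$ gives $u_2\in\bar A$ (because $v_2=v_i\in\bar A$), whence $d_{C'}(u_1,u_2)\geq d_{C'}(u_1,b)$. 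Chaining the inequalities yields
\[
d_{C'}(u_1,v_1)\;\leq\;d_{C'}(u_1,a)\;<\;d_{C'}(u_1,b)\;\leq\;d_{C'}(u_1,u_2),
\]
which directly contradicts the hypothesis $d_{C'}(u_1,u_i)\leq d_{C'}(u_1,v_1)$.

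I do not anticipate any real obstacle: the whole proof is a routine 2-edge-cut analysis. The only mildly delicate point is recognizing that the hypothesized distance inequality is tailored exactly to preclude the position $u_2$ would otherwise occupy on $C'$ relative to $v_1$ and the putative second cut edge, and that in the easier case $i=1$ the contradiction is obtained for free from the mere existence of $P_{u_1v_1}$ outside $C$.
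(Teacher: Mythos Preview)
Your proof is correct and follows essentially the same route as the paper's: both argue that the putative bridge must lie on $C'$ (you via cycle--cut parity, the paper by noting that a bridge outside $C$ would already be a bridge in $F$), then use the two detour paths $P_{u_jv_j}\subseteq F\setminus C$ to pin down its position on $C'$ and derive contradictory distance inequalities. The only cosmetic difference is that you work with the bipartition $(A,\bar A)$ of the 2-edge cut while the paper phrases the same constraints via closed trails; your formulation is arguably a bit cleaner, and your explicit handling of the trivial case $i=1$ makes the role of the hypothesis more transparent.
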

\begin{proof}
    Let $F' := F\setminus\{u_1v_i\}$. Assume to get a contradiction $F'$ contains a bridge $e$. If $e\notin C$, then let $C_1, C_2$ be the two connected components of $F'\setminus \{e\}$ such that $C\setminus\{u_1v_i\}\subseteq C_1$. Since the edge $u_1v_i$ has both endpoints in $V(C_1)$, $e$ is also a bridge in $F$, a contradiction.

    Thus, we can assume that $e\in C$. Since $C'$ is a $u_1, v_i$-path, there are nodes $u, v\in V(C)$ such that $e=uv$ and $d_{C'}(v_i, u) = d_{C'}(v_i, v) + 1$. Let $P_j$ be the $u_j, v_j$-path in $C'$ for $j\in \{1, 2\}$. If $e\in P_j$ for some $j\in\{1, 2\}$, then $e$ belongs to the closed trail $P_j\cup P_{u_jv_j}$ in $F'$, a contradiction to the fact that $e$ is a bridge of $F'$. Thus it must be that $e\notin P_1\cup P_2$. Since $e\notin P_1$ and $C'$ is a $u_1, v_i$-path, one has that $d_{C'}(u_1, v_1)\leq d_{C'}(u_1, u)$. Since $e\notin P_i$ and $C'$ is a $u_1, v_i$-path, one has that $d_{C'}(u_i, v_i)\leq d_{C'}(v_i, v)$. It holds that:
    \begin{align*}
        &d_{C'}(u_1, u_i)\leq d_{C'}(u_1, v_1)\leq d_{C'}(u_1, u) = d_{C'}(u_1, v_i) - d_{C'}(v_i, u) <\\
        &d_{C'}(u_1, v_i) - d_{C'}(v_i, v) \leq d_{C'}(u_1, v_i) - d_{C'}(u_i, v_i) = d_{C'}(u_1, u_i),
    \end{align*}
    a contradiction. The first inequality above follows by the assumption of the lemma, the second inequality follows from having $d_{C'}(u_1, v_1)\leq d_{C'}(u_1, u)$, and the last inequality follows from having $d_{C'}(u_i, v_i)\leq d_{C'}(v_i, v)$. The two equalities follow from the fact that $C'$ is a $u_1, v_i$-path, and the strict inequality follows from having $d_{C'}(v_i, u) = d_{C'}(v_i, v) + 1$.
\end{proof}

\begin{corollary}\label{cor:shortcutNonLocalC5}
    Let $F$ be a 2EC graph and let $C$ be a $5$-cycle in $F$. Assume that there are paths $P_{u_1v_1}$ and $P_{u_2v_2}$ in $F\setminus C$ between nodes $u_i, v_i\in V(C)$ such that $u_i\neq v_i$ for $i\in \{1, 2\}$. If $u_2\notin \{u_1, v_1\}$, then we can find in polynomial time an edge $e\in C$ such that $F\setminus\{e\}$ is 2EC.
\end{corollary}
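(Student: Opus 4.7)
The plan is to deduce Corollary~\ref{cor:shortcutNonLocalC5} directly from Lemma~\ref{lem:shortcutNonLocal} by choosing an appropriate labeling of the path endpoints. The hypothesis of Lemma~\ref{lem:shortcutNonLocal} is symmetric with respect to naming the two paths and the two endpoints within each path, so I am free to interchange $P_{u_1v_1}$ with $P_{u_2v_2}$ and to swap $u_j\leftrightarrow v_j$ within either path without altering the underlying paths in $F\setminus C$. The key combinatorial observation is that $u_1, v_1, u_2$ are three pairwise distinct vertices of $C$ (by the corollary's hypotheses $u_1\neq v_1$ and $u_2\notin\{u_1, v_1\}$), and since the independence number of a $5$-cycle equals $2$, at least one of the three pairs $\{u_1, v_1\}$, $\{u_1, u_2\}$, $\{v_1, u_2\}$ is an edge of $C$.

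I would then split into two cases. First, if $\{u_1, v_1\}\in E(C)$ or $\{u_2, v_2\}\in E(C)$, then after possibly swapping the two paths I may assume $\{u_1, v_1\}\in E(C)$; I apply Lemma~\ref{lem:shortcutNonLocal} with $i=1$ and $e := u_1 v_1$, and the distance condition $d_{C'}(u_1, u_1)=0\le d_{C'}(u_1, v_1)$ holds trivially. Otherwise, neither $\{u_1, v_1\}$ nor $\{u_2, v_2\}$ is in $E(C)$; by the observation above, one of $\{u_1, u_2\}, \{v_1, u_2\}$ is an edge, and after possibly swapping $u_1\leftrightarrow v_1$ within $P_{u_1v_1}$ I may assume $\{u_1, u_2\}\in E(C)$. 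I then swap $u_2\leftrightarrow v_2$ within $P_{u_2v_2}$ so that the new $v_2$ coincides with the old $u_2$, and hence $u_1 v_2 \in E(C)$; finally I invoke Lemma~\ref{lem:shortcutNonLocal} with $i=2$ and $e := u_1 v_2$.

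The one step that still requires an explicit check is the distance condition $d_{C'}(u_1, u_2)\le d_{C'}(u_1, v_1)$ in the second case. To verify it, I would parameterize $C = x_0 x_1 x_2 x_3 x_4$ with $u_1 = x_0$ and $v_2 = x_1$ (so that $e = x_0 x_1$); the assumptions that $\{u_1, v_1\}$ and $\{u_2, v_2\}$ are not edges of $C$ then force $v_1\in\{x_2, x_3\}$ and $u_2\in\{x_3, x_4\}$. In $C' = x_0 x_4 x_3 x_2 x_1$ one has $d_{C'}(x_0, x_4)=1$, $d_{C'}(x_0, x_3)=2$, and $d_{C'}(x_0, x_2)=3$, so a short enumeration of the four possible pairs $(v_1, u_2)$ confirms the inequality. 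The main obstacle is bookkeeping rather than depth: the argument is short, but one must track the relabelings consistently so that the lemma is applied with the correct identities of $u_1, v_1, u_2, v_2$.
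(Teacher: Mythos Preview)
Your proposal is correct and follows essentially the same route as the paper's proof: both split into the case where some $\{u_j,v_j\}$ is an edge of $C$ (delete it and apply Lemma~\ref{lem:shortcutNonLocal} with $i=1$), and otherwise use that three distinct vertices on a $5$-cycle cannot be pairwise non-adjacent to locate an edge between $u_2$ and one of $u_1,v_1$, then invoke Lemma~\ref{lem:shortcutNonLocal} with $i=2$ after a suitable relabeling. The only cosmetic difference is that the paper swaps the two paths (setting $u_1:=u_2$, $v_1:=v_2$, $u_2:=u_1$, $v_2:=v_1$) and verifies the distance condition via the direct bound $d_{C'}(u_2,u_1)\le 2\le d_{C'}(u_2,v_2)$, whereas you swap endpoints within each path and verify the inequality by coordinatizing $C$ and enumerating the four admissible positions of $(v_1,u_2)$.
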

\begin{proof}
    If $u_j$ is adjacent to $v_j$ in $C$ for some $j\in\{1, 2\}$, then $F\setminus\{u_jv_j\}$ is 2EC by Lemma~\ref{lem:shortcutNonLocal} (with $u_1:=u_j, v_1:=v_j, u_2:=u_{3-j}, v_2:=v_{3-j}$). Otherwise, since $C$ is a $5$-cycle, $u_1$ is not adjacent to $v_1$, and $u_2\notin\{u_1, v_1\}$, therefore $u_2$ is adjacent to either $u_1$ or $v_1$, say w.l.o.g.\ to $v_1$. Also because $C$ is a $5$-cycle, $u_1$ is not adjacent to $v_1$ and $u_2\notin\{u_1, v_1\}$, it holds that the $u_1, u_2$-path in $C\setminus\{u_2v_1\}$ has length at most $2$. Thus, $F\setminus\{u_2v_1\}$ is 2EC follows from Lemma~\ref{lem:shortcutNonLocal} (with $u_1:=u_2, v_1:=v_2, u_2:=u_1$, $v_2:=v_1$) by observing that $d_{C\setminus\{u_2v_1\}}(u_2, u_1)\leq 2\leq d_{C\setminus\{u_2v_1\}}(u_2, v_2)$, where the second inequality follows from the fact that $u_2$ is not adjacent to $v_2$ in $C$.
\end{proof}

\begin{corollary}\label{cor:shortcutNonLocalC6C7}
    Let $F$ be a 2EC graph and let $C$ be a $6$-cycle or a $7$-cycle in $F$. Assume that there are paths $P_{u_1v_1}$ and $ P_{u_2v_2}$ in $F\setminus C$ between nodes $u_i, v_i\in V(C)$ such that $u_i\neq v_i$ for $i\in \{1, 2\}$. If $u_2\notin \{u_1, v_1\}$ and $u_2$ is in a shortest $u_1, v_1$-path in $C$, then we can find in polynomial time an edge $e\in C$ such that $F\setminus\{e\}$ is 2EC.
\end{corollary}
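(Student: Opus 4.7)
My plan is to reduce to Lemma~\ref{lem:shortcutNonLocal} via a short case analysis, splitting on whether the edge $u_2v_2$ belongs to $C$. First I would observe that $u_1v_1\notin C$: otherwise the unique shortest $u_1,v_1$-path in $C$ would be the single edge $u_1v_1$, which has no internal vertex, so no $u_2\in V(C)\setminus\{u_1,v_1\}$ could lie on it. When $u_2v_2\in C$ I would apply Lemma~\ref{lem:shortcutNonLocal} directly, after swapping the roles of the two paths (i.e., taking lemma's $(u_1,v_1,u_2,v_2):=(u_2,v_2,u_1,v_1)$) and with $i=1$: the edge to remove is then $u_2v_2$, and the required distance condition $d_{C'}(u_2,u_2)=0\le d_{C'}(u_2,v_2)$ is trivial.

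The interesting case is $u_2v_2\notin C$. Since $|C|\in\{6,7\}$ and $u_1v_1\notin C$, a shortest $u_1,v_1$-path in $C$ has length $2$ or $3$, and since $u_2$ is an internal vertex of such a path with $u_2\notin\{u_1,v_1\}$, $u_2$ must be adjacent in $C$ to $u_1$ or to $v_1$. By the symmetry $u_1\leftrightarrow v_1$ I may assume $u_1u_2\in C$. I would then apply Lemma~\ref{lem:shortcutNonLocal} with the relabeling $(u_1,v_1,u_2,v_2):=(u_2,v_2,v_1,u_1)$ and $i=2$, so that the edge to be removed is $u_1u_2$ and the required condition becomes $d_{C'}(u_2,v_1)\le d_{C'}(u_2,v_2)$ with $C':=C\setminus\{u_1u_2\}$.

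To verify this condition I would argue as follows. Because the shortest $u_1,v_1$-arc through $u_2$ begins with the edge $u_1u_2$, removing $u_1u_2$ leaves the remainder of this arc intact in $C'$, giving $d_{C'}(u_2,v_1)=d_C(u_1,v_1)-1\in\{1,2\}$. Let $w$ denote the neighbor of $u_2$ in $C$ other than $u_1$; then $w$ is the unique neighbor of $u_2$ in the Hamiltonian path $C'$. If $d_C(u_1,v_1)=2$, the condition reduces to $u_2\neq v_2$, which holds by hypothesis. If $d_C(u_1,v_1)=3$, the condition becomes $v_2\notin\{u_2,w\}$; but $u_2w\in C$, so $v_2=w$ would force $u_2v_2\in C$, contradicting the case hypothesis. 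All of the above is clearly constructive in polynomial time, as the witness edge is determined by inspecting the positions of $u_1,v_1,u_2,v_2$ on $C$. The step I expect to require the most care is the bookkeeping behind the identity $d_{C'}(u_2,v_1)=d_C(u_1,v_1)-1$, together with translating the negative hypothesis $u_2v_2\notin C$ into the positive $d_{C'}$ lower bound; once those are in hand, the rest is a direct invocation of Lemma~\ref{lem:shortcutNonLocal}.
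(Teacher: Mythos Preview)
Your proposal is correct and follows essentially the same approach as the paper: both case-split on whether one of the pairs $(u_j,v_j)$ is adjacent in $C$ (you first note that $u_1v_1\notin C$, so this reduces to whether $u_2v_2\in C$), and in the remaining case both exploit that $u_2$, being an internal vertex of a length-$2$ or length-$3$ shortest $u_1,v_1$-arc, is adjacent to one of $u_1,v_1$, then invoke Lemma~\ref{lem:shortcutNonLocal} with the roles of the two paths swapped. The only cosmetic differences are your explicit observation $u_1v_1\notin C$, your WLOG choice $u_1u_2\in C$ versus the paper's $u_2v_1\in C$, and your finer sub-case split on $d_C(u_1,v_1)\in\{2,3\}$ where the paper uses the single bound $d_{C'}(u_2,u_1)\le 2\le d_{C'}(u_2,v_2)$.
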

\begin{proof}
    If $u_j$ is adjacent to $v_j$ in $C$ for some $j\in\{1, 2\}$, then $F\setminus\{u_jv_j\}$ is 2EC by Lemma~\ref{lem:shortcutNonLocal} (with $u_1:=u_j, v_1:=v_j, u_2:=u_{3-j}, v_2:=v_{3-j}$). Otherwise, since $C$ is a $6$-cycle or a $7$-cycle, $u_2\notin\{u_1, v_1\}$ and $u_2$ is in a shortest $u_1, v_1$-path in $C$, it must be that $u_2$ is adjacent in $C$ to either $u_1$ or $v_1$. Say w.l.o.g.\ $u_2$ is adjacent to $v_1$ in $C$. Since $C$ is a $6$-cycle or a $7$-cycle and $u_2$ is in a shortest $u_1, v_1$-path in $C$, it holds that the $u_1, u_2$-path in $C\setminus\{u_2v_1\}$ has length at most $2$. Then, $F\setminus\{u_2v_1\}$ is 2EC follows from Lemma~\ref{lem:shortcutNonLocal} (with $u_1:=u_2, v_1:=v_2, u_2:=u_1, v_2:=v_1$) by observing that $d_{C\setminus\{u_2v_1\}}(u_2, u_1)\leq 2\leq d_{C\setminus\{u_2v_1\}}(u_2, v_2)$ 
    where, the second inequality follows from the fact that $u_2$ is not adjacent to $v_2$ in $C$.
\end{proof}

\subsection{Proof of Lemma~\ref{lem:gluingSimplified}}

We are now ready to prove Lemma~\ref{lem:gluingSimplified}, which we restate next. 

\lemgluingSimplified*

By the definition of canonical, $S$ contains a large component $C$. Let $B$ be any 2VC block of $\hat G_S$ that contains $C$ (if there are multiple such blocks, we choose $B$ arbitrarily). We separate the proof into a few lemmas.

\begin{lemma}\label{lem:gluingAdjacent}
    Let $C_1$ and $C_2$ be two components of $B$ such that $|C_1|\leq 7$ and $C_1\neq C$. If there are edges $u_1\hat C$ and $v_1\hat C_2$ such that $u_1, v_1\in V(C_1)$ and there is a Hamiltonian $u_1, v_1$-path in $G[V(C_1)]$, then one can compute in polynomial time a canonical $2$-edge cover $S'$ of $G$ such that every component of $S'$ is 2EC, $S'$ has fewer components than $S$, and $\cost(S')\leq \cost(S)$.
\end{lemma}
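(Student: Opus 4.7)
The plan is to exploit the two ``hooks'' $u_1\hat C$ and $v_1\hat C_2$ out of $C_1$, together with the Hamiltonian $u_1,v_1$-path $P$ in $G[V(C_1)]$, to merge $C_1$ with $C$, $C_2$, and possibly a few intermediate components into a single large 2EC component, saving one edge overall by replacing the cycle $C_1$ with the path $P$. Concretely, I would first apply Lemma~\ref{lem:niceCycle} inside the 2VC block $B$ to the two distinct components $C$ and $C_2$, passing the given edges $cu_1$ and $c_2v_1$ (where $c \in V(C)$ and $c_2 \in V(C_2)$ are the endpoints of $u_1\hat C$ and $v_1\hat C_2$, respectively) as inputs with $\hat X_1 = \hat X_2 = \hat C_1$. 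This is legitimate because $\hat C_1 \in V(B)\setminus\{\hat C,\hat C_2\}$; only $\hat X_i \neq \hat A_i$ is required by the lemma. The resulting cycle $F$ in $B$ uses both input edges and therefore passes through $\hat C_1$; writing it as $\hat C, \hat C_1, \hat C_2, \hat D_1, \dots, \hat D_k, \hat C$, we have $|F| = k+3$ and the two $F$-edges incident to $\hat C_1$ are exactly $u_1 c$ and $v_1 c_2$.

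Next I would define
\[
S' := (S \setminus E(C_1)) \cup E(P) \cup F.
\]
Every vertex of $C_1$ still has degree at least $2$ in $S'$ (internal vertices of $P$ via $P$ alone; endpoints $u_1, v_1$ via one $P$-edge and one $F$-edge), and all vertices outside $C_1$ are untouched, so $S'$ is a 2-edge cover. To verify that the merged region is 2EC, contract each of $C, C_2, D_1, \dots, D_k$ to a single node; the contracted subgraph contains two internally vertex-disjoint $\hat C,\hat C_2$-paths—one going around $F$ through $\hat D_1,\dots,\hat D_k$, and one through $u_1$, $P$, $v_1$—which together form a cycle. Combined with the internal 2EC structure of each contracted component, the whole merged piece is 2EC. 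All untouched components remain 2EC, and the merged component contains $V(C)$ and hence at least $8$ vertices, so $S'$ is canonical. For the accounting,
\[
|S'| - |S| = -|C_1| + (|C_1|-1) + (k+3) = k+2,
\]
while the new large component (credit $2$) replaces $C$ (credit $2$) together with $C_1, C_2, D_1, \dots, D_k$. Since each of these $k+2$ components has credit at least $1$ under Invariant~\ref{inv:creditInvariant} (small 2EC components have at least $4$ edges, large ones have credit $2$), the total credit decrease is at least $k+2$, giving $\cost(S') \le \cost(S)$. The number of components strictly drops.

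The step I expect to require the most care is confirming that Lemma~\ref{lem:niceCycle} may indeed be invoked with \emph{both} input edges pointing into the same third component $\hat C_1$; once the conditions $\hat X_i \neq \hat A_i$ are checked to hold, this forces $\hat C_1$ onto $F$ with the two designated $F$-edges at $\hat C_1$ being precisely $u_1 c$ and $v_1 c_2$, which is what makes the swap of $C_1$ for $P$ cost-neutral. Should the lemma's hypotheses also admit the degenerate case $C = C_2$, Lemma~\ref{lem:niceCycle} would not directly apply, but a direct construction $(S\setminus E(C_1)) \cup E(P) \cup \{u_1 c, v_1 c_2\}$ already attaches $V(C_1)$ to the 2EC $C$ via two internally vertex-disjoint routes in the contraction, and the same bookkeeping yields $\cost(S') \le \cost(S)$.
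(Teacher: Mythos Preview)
Your overall strategy---replace the cycle $C_1$ by the Hamiltonian $u_1,v_1$-path $P$ and glue via a cycle in $B$ that enters $\hat C_1$ precisely along the two given edges $u_1\hat C$ and $v_1\hat C_2$---is exactly right, and once such a cycle $F$ is in hand your accounting matches the paper's. The gap is in how you obtain $F$.

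You invoke Lemma~\ref{lem:niceCycle} with $C$ and $C_2$ as the two components and the edges $c u_1$, $c_2 v_1$ as inputs, and then assert that ``the resulting cycle $F$ in $B$ uses both input edges and therefore passes through $\hat C_1$.'' This is not what Lemma~\ref{lem:niceCycle} guarantees. Its conclusion is only that $F$ is \emph{incident on the specified vertices} $c\in V(C)$ and $c_2\in V(C_2)$ (and on one further vertex of each); it says nothing about which edge of $F$ realizes that incidence. Inspecting the lemma's proof confirms this: if the initially chosen cycle already happens to be incident to $c$ via some edge other than $c u_1$, no rerouting occurs and the input edge is never used. So $F$ need not contain $cu_1$ or $c_2v_1$, need not pass through $\hat C_1$, and your swap of $E(C_1)$ for $E(P)$ would then leave the vertices of $C_1$ stranded.

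The fix is much simpler than going through Lemma~\ref{lem:niceCycle}, and it is what the paper does: since $B$ is 2VC, there is a $\hat C_2$--$\hat C$ path in $B$ avoiding $\hat C_1$; concatenating this path with the two given edges $u_1\hat C$ and $v_1\hat C_2$ yields a cycle $F$ in $B$ that visits $\hat C_1$ with its two $\hat C_1$-incident edges being exactly $u_1\hat C$ and $v_1\hat C_2$. This also absorbs your degenerate case $C=C_2$ (the path is then trivial and $|F|=2$). From here your computation $|S'|-|S|=|F|-1$ and $\credit(S)-\credit(S')\ge |F|-1$ goes through verbatim.
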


\begin{proof}
    Let $F$ be the cycle in $B$ formed by the edges $u_1\hat C, v_1\hat C_2$, and a path in $B$ from $C_2$ to $C$ not going through $C_1$. Note that such a path exists because $B$ is 2VC. Let $P$ be the Hamiltonian $u_1, v_1$-path in $G[V(C_1)]$ (which can be found in polynomial time because $|C_1|\leq 7$). We set $S':= (S\setminus C_1)\cup F\cup P$. In $S'$ there is a large component $C'$ spanning the nodes of all the components of $B$ incident on $F$. Moreover, every such component yields at least $1$ credit, while $C$ yields $2$ credits, so we collect at least $|F|+1$ credits from these components. One has $\cost(S) - \cost(S') \ge |S|-|S'|+|F|+1-\credit(C') = -|F| + 1 + |F| + 1 - 2 = 0$. We remark that $S'$ is canonical and it contains fewer components than $S$, so the lemma follows.
\end{proof}

\begin{lemma}\label{lem:gluingC4LocalC5}
    Let $C_1$ be a component of $B$ that is either a $4$-cycle or a local $5$-cycle. Then, in polynomial time one can compute a canonical $2$-edge cover $S'$ of $G$ such that every component of $S'$ is 2EC, $S'$ has fewer components than $S$, and $\cost(S')\leq \cost(S)$.
\end{lemma}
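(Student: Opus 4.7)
The plan is to apply Lemma~\ref{lem:shortcutC4localC5} with $C_2 := C$, the large 2EC component that the hypothesis of the overall gluing stage guarantees lies in $S$ (and hence in $B$, since $B$ was chosen to contain $C$). First observe that $C_1 \neq C$, because $|C_1| \le 5 < 8 \le |C|$. Since $B$ is 2VC and $\hat C \in V(B)$, the component-node $\hat C$ has degree at least $2$ in $B$, so some edge $u_2\hat X$ of $B$ satisfies $u_2 \in V(C)$ and $\hat X \in V(B) \setminus \{\hat C\}$. Feeding $C_1$, $C_2 := C$, and this edge into Lemma~\ref{lem:shortcutC4localC5} produces, in polynomial time, a cycle $F$ in $B$ incident to distinct nodes $u_1, v_1 \in V(C_1)$ (as well as to two distinct nodes of $V(C)$), together with a Hamiltonian $u_1, v_1$-path $P$ in $G[V(C_1)]$.

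Given this, I would set
\begin{equation*}
S' := (S \setminus E(C_1)) \cup E(F) \cup E(P).
\end{equation*}
The cycle $F$ merges the $m := |F| \ge 2$ components of $S$ that it touches into a single new component $C'$ of $S'$, while components untouched by $F$ remain unchanged. The verification that $C'$ is 2EC is routine: removing any edge of $P$ leaves its two halves joined through $F$; removing any edge of $F$ leaves the touched components strung along a path, each internally 2EC (with $C_1$ replaced by the connected path $P$); and removing any edge internal to a touched component other than $C_1$ is absorbed by that component's 2EC structure. Since $C' \supseteq C$ has at least $8$ vertices, $C'$ is a large component of $S'$.

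The credit bookkeeping is the central step. Let $k := |C_1| \in \{4,5\}$. Then $|S'| - |S| = -k + m + (k-1) = m - 1$. By Invariant~\ref{inv:creditInvariant}, the $m$ touched components of $S$ jointly contribute to $\credit(S)$ at least $k/4 + 2 + (m-2) \cdot 1 = k/4 + m$ credits, namely $k/4$ from the small component $C_1$, $2$ from the large component $C$, and at least $1$ from each of the other $m-2$ 2EC components; in $S'$ these are replaced by only the $2$ credits of $C'$. Hence
\begin{equation*}
\cost(S') - \cost(S) \;\le\; (m-1) + 2 - (k/4 + m) \;=\; 1 - k/4 \;\le\; 0.
\end{equation*}
Furthermore, $S'$ has strictly fewer components than $S$; triangle-freeness is preserved since $|V(C')| \ge |V(C_1)| + |V(C)| \ge 12$; and the remaining conditions of canonicity are inherited directly from $S$.

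The main obstacle I foresee is the sharpness of the credit inequality when $k = 4$: the bound reads $1 - 1 = 0$, with no slack. The argument therefore crucially relies on the Hamiltonian $u_1, v_1$-path supplied by Lemma~\ref{lem:shortcutC4localC5}, which saves exactly one edge inside $C_1$; without this saving, the $k/4 = 1$ credit released by $C_1$ could not absorb the $m-1$ additional $F$-edges together with the $2$ credits required by the merged large component. The only other subtlety is producing the auxiliary edge $u_2\hat X$, which is immediate from the 2-vertex-connectivity of $B$.
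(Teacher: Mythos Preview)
Your proof is correct and follows essentially the same approach as the paper: apply Lemma~\ref{lem:shortcutC4localC5} with $C_2:=C$ to obtain the cycle $F$ and Hamiltonian path $P$, set $S':=(S\setminus C_1)\cup F\cup P$, and verify the credit balance. Your accounting is slightly more refined (you track $\credit(C_1)=k/4$ explicitly, whereas the paper just uses $\credit(C_1)\ge 1$), but this yields the same conclusion $\cost(S')\le\cost(S)$, tight at $k=4$.
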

\begin{proof}
    Apply Lemma~\ref{lem:shortcutC4localC5} with $C_1:=C_1, C_2:=C$ and with $u_2\hat X$ being an arbitrary edge of $B$ incident to $C$. This way one can find in polynomial time a cycle $F$ in $B$ incident to $C$ and to nodes $u_1, v_1$ of $C_1$ such that there is a Hamiltonian $u_1, v_1$-path $P$ in $G[V(C_1)]$. Let $S':= (S\setminus C_1)\cup F\cup P$. In $S'$ there is a large component $C'$ spanning the nodes of all components of $B$ incident on $F$. Also, every such component yields at least $1$ credit, while $C$ yields $2$ credits, so we collect at least $|F|+1$ credits from these components. Thus, one has $\cost(S)-\cost(S') \geq |S| -|S'|+|F|+1 -\credit(C')= -|F| + 1 + |F|+1 -2=0$. We remark that $S'$ is canonical and it contains fewer components than $S$, so the lemma follows.
\end{proof}

\begin{lemma}\label{lem:gluingNonLocalC5}
    Let $C_1$ be a component of $B$ that is a non-local $5$-cycle. Then, one can compute in polynomial time a canonical $2$-edge cover $S'$ of $G$ such that every component of $S'$ is 2EC, $S'$ has fewer components than $S$, and $\cost(S')\leq \cost(S)$.
\end{lemma}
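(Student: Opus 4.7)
The plan is to exploit the non-locality of $C_1$ to produce two independent paths out of $V(C_1)$ meeting $V(C_1)$ at three distinct vertices, then invoke Corollary~\ref{cor:shortcutNonLocalC5} to delete one edge of $C_1$ from a merged 2EC structure. I would begin with the structural setup: since $\hat{C}_1$ is a cut vertex of $\hat{G}_S$, there is a connected component $T$ of $\hat{G}_S \setminus \{\hat{C}_1\}$ disjoint from $V(B) \setminus \{\hat{C}_1\}$; let $V_T$ be its vertex set in $G$. By 2VC of $G$, at least two distinct vertices $a_1, b_1 \in V(C_1)$ must have neighbors $a_2, b_2 \in V_T$, say via edges $a_1 a_2, b_1 b_2$; otherwise a single node of $V(C_1)$ would be a cut vertex of $G$. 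Moreover, $T$ is connected in $\hat{G}_S \setminus \{\hat{C}_1\}$ and its $S$-components are 2EC, so there is a path in $G$ from $a_2$ to $b_2$ obtained from internal paths in $S$-components of $T$ together with inter-component edges of $\hat{G}_S$ restricted to $T$.

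Next I would apply the Local $3$-Matching Lemma~\ref{lem:local3matching} to $C_1$ within $B$ to obtain a matching of size three from $V(C_1)$ to $V(B) \setminus V(C_1)$ at three distinct vertices $w_1, w_2, w_3 \in V(C_1)$. Pick $w \in \{w_1, w_2, w_3\} \setminus \{a_1, b_1\}$, which is nonempty since $|\{w_1, w_2, w_3\}|=3$ and $|\{a_1,b_1\}|=2$. Invoke Lemma~\ref{lem:niceCycle} with $C_2 := C$ and the controlled input edge being the matching edge at $w$, yielding a cycle $F$ in $B$ that passes through $\hat{C}$ and touches $V(C_1)$ at $w$ and at some $v_1 \in V(C_1) \setminus \{w\}$. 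Form $S'' := S \cup E(F) \cup \{a_1 a_2, b_1 b_2\} \cup Q$, where $Q$ comprises the edges along the $a_2, b_2$-path through $V_T$ that are not in $S$. All components on $F$ together with those on the $V_T$-path merge into a single large 2EC component $C^*$ of $S''$. Inside $C^*$, the $5$-cycle $C_1$ is reached by two paths in $S'' \setminus E(C_1)$: one going $w \to v_1$ through $F$ and one going $a_1 \to b_1$ through $V_T$. Since $w \notin \{a_1, b_1\}$, at least one of $a_1, b_1$ lies outside $\{w, v_1\}$, so (after possibly relabeling) Corollary~\ref{cor:shortcutNonLocalC5} supplies an edge $e \in E(C_1)$ with $S' := S'' \setminus \{e\}$ still 2EC.

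A direct credit accounting gives $\cost(S') - \cost(S) \leq \tfrac{7}{4} - t$, where $t$ is the number of $T$-components touched by the $V_T$-path: we add $|F| + 2 - 1$ edges, merge at least $|F| + t$ components (contributing credit at least $2 + \tfrac{5}{4} + (|F|-2) + t$ from $C$, $C_1$, the other components on $F$, and the $t$ components in $T$), and create a large component with credit $2$. When $t \geq 2$ this quantity is already non-positive, and the lemma follows. I expect the main obstacle to be the subcase $t = 1$: there $a_2, b_2$ lie in a common component $C_2 \in T$, and when $C_2$ is large its extra credit $2$ closes the deficit, but when $C_2$ is a small cycle I would need to save an additional edge within $C_2$, for example by a Hamiltonian $a_2, b_2$-path substitution in $G[V(C_2)]$. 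Justifying that such a substitution is always available---possibly after re-choosing the pair $(a_1,b_1)$ from a different connected component of $\hat{G}_S \setminus \{\hat C_1\}$, or by selecting a different matching edge out of $V(C_2)$ in the block containing $\hat C_1$ and $\hat C_2$---is the technical crux, and I would handle it by invoking the Hamiltonian Pairs Lemma~\ref{lem:hamiltonianPairs} together with the $5/4$-contractibility forbiddance of $G$ applied locally to $C_2$.
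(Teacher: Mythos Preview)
Your accounting has a concrete error that breaks the argument. You write ``we add $|F|+2-1$ edges'', but you have also added the $|Q|$ inter-component edges needed to walk from $a_2$ to $b_2$ through $T$. If that walk visits $t$ component-nodes of $T$, then $|Q|=t-1$, so the net edge change is $|F|+2+(t-1)-1=|F|+t$, not $|F|+1$. With the corrected edge count,
\[
\cost(S')-\cost(S)\;\le\;(|F|+t)+\Bigl(2-\bigl(2+\tfrac{5}{4}+(|F|-2)+t\bigr)\Bigr)\;=\;\tfrac{3}{4},
\]
independent of $t$. So the case $t\ge 2$ does not close at all, and your ``main obstacle'' is in fact every case. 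Even if you first exclude Lemma~\ref{lem:gluingC4LocalC5} so that every $B$-component on $F$ other than $C$ carries credit $\ge 5/4$, you still only get $\cost(S')-\cost(S)\le 5/4-|F|/4$, which fails for $|F|\le 4$; and you have no control over the credits of the $T$-components (they can all be $4$-cycles).

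The paper's proof avoids this by never using an arbitrary path through the ``other side'' $T$. Instead it fixes a second $2$VC block $B'\ni\hat C_1$ and produces a \emph{cycle} $F'$ in $B'$ via Lemma~\ref{lem:niceCycle} or Lemma~\ref{lem:shortcutC4localC5}. Working inside a $2$VC block is what makes those lemmas applicable and lets one force $F'$ to hit a prescribed vertex $w_1\in V(C_1)\setminus\{u_1,v_1\}$ and, when needed, a Hamiltonian pair of a $4$-cycle on $F'$ (recovering an extra edge there). On the $B$-side, the paper does a case split on $|V(B)|$ to guarantee $\sum_{\hat X\in V(F)}\credit(X)\ge |F|+7/4$ before invoking Corollary~\ref{cor:shortcutNonLocalC5}. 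Your plan has the right endgame (two disjoint detours meeting $C_1$ at three vertices, then delete one edge of $C_1$), but the ``free path through $T$'' does not carry enough structure to pay for itself; you would essentially have to rebuild the paper's $B'$-cycle machinery and its $|V(B)|$ case analysis to make the credits balance.
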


\begin{proof}
    We assume that the conditions of Lemmas~\ref{lem:gluingAdjacent} and~\ref{lem:gluingC4LocalC5} do not hold, otherwise we are done. In particular, we can assume $B$ contains no $4$-cycle nor local $5$-cycle. We consider the following cases:

    \begin{caseanalysis}
    \case{$|V(B)|=2$.} We show by contradiction that this case is not possible. It must be that $B$ contains exactly $2$ components, $C$ and $C_1$. Since $C_1$ is a $5$-cycle, by applying the local $3$-matching lemma (Lemma~\ref{lem:local3matching}) to $C_1$ in $B$, one can find a pair of edges $u_1\hat C, v_1\hat C$ such that $u_1v_1\in E[C_1]$. This is a contradiction to the fact that the conditions of Lemma~\ref{lem:gluingAdjacent} do not hold (with $C_1:=C_1, C_2:=C$).

    \case{$|V(B)|=3$.} Let $C_2$ be the component of $B$ other than $C$ and $C_1$. 

    \subcase{$C_2$ is a $5$-cycle.}\label{case:gluingTwoC5} We show by contradiction that this case is not possible. See Figure~\ref{fig:gluingTwoC5} for an illustration. Apply the local $3$-matching lemma (Lemma~\ref{lem:local3matching}) to $C$ and $B$ to see that there must exist a matching $M$ of size $2$ between $C$ and either $C_1$ or $C_2$. W.l.o.g.\ assume that $M = \{uu_1, vv_1\}$, where $u, v\in V(C)$ and $u_1, v_1\in V(C_1)$. Notice that it must be that $u_1$ and $v_1$ are not adjacent in $C_1$, otherwise the edges $uu_1, vv_1$ imply that the conditions of Lemma~\ref{lem:gluingAdjacent} hold (with $C_1:=C_1, C_2:=C$), a contradiction.  Now, apply the local $3$-matching lemma (Lemma~\ref{lem:local3matching}) to $C_1$ and $B$ to see that there must exist an edge $w_1\hat X$, where $w_1\in V(C_1)\setminus\{u_1, v_1\}$ and $\hat X\in V(B)\setminus \{\hat C_1\}$. But then, since $C_1$ is a $5$-cycle, $w_1\notin\{u_1, v_1\}$ and $u_1, v_1$ are not adjacent in $C_1$, either $u_1$ or $v_1$ is adjacent to $w_1$ in $C_1$. Say w.l.o.g.\ $u_1$ is adjacent to $w_1$ in $C_1$. Then, the edges $uu_1, w_1\hat X$ imply that $C_1$ and $X$ satisfy the conditions of Lemma~\ref{lem:gluingAdjacent} (with $C_1:=C_1, C_2:=X$, $u_1:=v_1$ and $v_1:=w_1$), a contradiction.

    \begin{figure}
        \centering
        \includegraphics[scale=0.65]{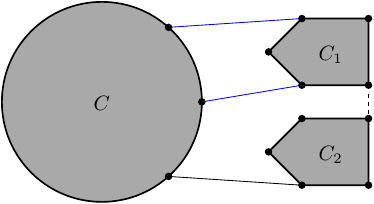}
        \caption{An illustration of Case~\ref{case:gluingTwoC5}. The solid blue edges are the matching $M$, while the dashed edge is the edge $w_1\hat X$. In the figure, one has that $X = C_2$.}
        \label{fig:gluingTwoC5}
    \end{figure}

    \subcase{Otherwise $|C_2|\geq 6$}.\label{case:gluingOtherBlock} Apply Corollary~\ref{cor:cycleSize3} with $C_1:=C_1, C_2:=C$ and with $u_1\hat X$ being an arbitrary edge of $B$ incident to some $u_1\in V(C_1)$. This way one can find in polynomial time a cycle $F$ in $B$ incident to distinct nodes $u_1, v_1$ of $C_1$ and such that $|F| = 3$. 
    Thus, $V(F) = \{\hat C, \hat C_1, \hat C_2\}$. 
    Now, since $|C_2|\geq 6$, $\credit(C_2)\geq 6/4$. Also, since $\credit(C) = 2$ and $\credit(C_1)=5/4$, one has that $\sum_{\hat X\in V(F)}\credit(X) \geq |F| + 7/4$. In the following, only the fact that $F$ is a cycle in $B$ incident to $C$ and to distinct nodes $u_1, v_1$ of $C_1$ such that $\sum_{\hat X\in V(F)}\credit(X) \geq |F| + 7/4$ is needed. This will be useful to avoid duplicated arguments in later cases.
    
    Since $C_1$ is a non-local $5$-cycle, it also belongs to some 2VC block $B'$ of $\hat G_S$ distinct from $B$. Apply the local $3$-matching lemma (Lemma~\ref{lem:local3matching}) to $C_1$ in $B'$ to find an edge $w_1\hat C_1'$ with $\hat C_1'\in V(B')$ and $w_1\in V(C_1)\setminus\{u_1, v_1\}$. Consider now the following subcases:

    \subsubcase{$C'_1$ is a $4$-cycle.}\label{case:gluingOtherBlockC4} We illustrate this case in Figure~\ref{fig:gluingOtherBlockC4}. Apply Lemma~\ref{lem:shortcutC4localC5} with $C_1:=C_1', C_2:=C_1$ and with $u_2\hat X:=w_1\hat X$. This way one can find in polynomial time a cycle $F'$ in $B'$ incident to nodes $w_1, x_1$ in $C_1$ and to nodes $u_1', v_1'$ in $C_1'$ such that there is a Hamiltonian $u_1', v_1'$-path $P$ in $G[V(C_1')]$. Here $x_1$ is a node of $C_1$ distinct from $w_1$. Set $S'':= (S\setminus C_1')\cup F\cup F'\cup P$. In $S''$ there is a large component $C''$ spanning the nodes of all components of $S$ incident on $F$ or $F'$. Notice that $|S''| = |S| + |F| + |F'| - 1$. One has that $\sum_{\hat X\in V(F)}\credit(X) \geq |F| + 7/4$, and every component of $B'$ incident with $F'$ yields at least $1$ credit. Since $B$ and $B'$ are both 2VC blocks of $\hat G_S$, the only component present in both $B$ and $B'$ is $C_1$. Thus, the components of $S$ incident on either $F$ or $F'$ yield at least $|F| + 7/4 + |F'| - 1$ credits. Thus, $\credit(S'') \leq \credit(S) - |F| - |F'| - 3/4 + \credit(C'') = \credit(S) - |F| - |F'| + 5/4$. 

    Observe that $C''\setminus C_1$ contains a $u_1, v_1$-path via $F$ and a $w_1, x_1$-path via $F'$. Since $w_1\notin \{u_1, v_1\}$, $u_1\neq v_1$, and $w_1\neq x_1$, $C''$ and $C_1$ satisfy the conditions of Corollary~\ref{cor:shortcutNonLocalC5} (with $F:=C''$ and $C:=C_1$). Therefore, one can find an edge $e\in C_1$ such that $C''\setminus\{e\}$ is 2EC. Set $S':=S''\setminus\{e\}$. In $S'$ there is a large component $C'$ spanning the nodes of all components of $S$ incident on $F$ or $F'$. Moreover, $\credit(S) = \credit(S'')$, so one has $\cost(S) - \cost(S') = |S| - |S'| + \credit(S) - \credit(S') \geq - |F| - |F'| + 2 + |F| + |F'| - 5/4 > 0$. We remark that $S'$ is canonical and it contains fewer components than $S$, so this subcase follows.

    \begin{figure}
        \centering
        \includegraphics[scale=0.65]{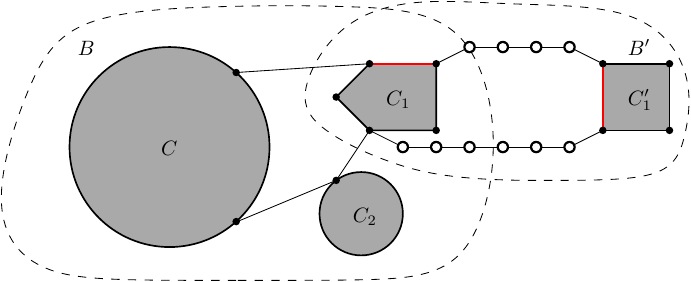}
        \caption{An illustration of Case~\ref{case:gluingOtherBlockC4}. The red edges are removed from $S$. The red edge in $C_1$ is removed through the application of Corollary~\ref{cor:shortcutNonLocalC5}.}
        \label{fig:gluingOtherBlockC4}
    \end{figure}

    \subsubcase{Otherwise $|C_1'|\geq 5$.}
    
    \label{case:gluingOtherBlockOther} We illustrate this case in Figure~\ref{fig:gluingOtherBlockOther}. We have $\credit(C_1')\geq 5/4$. Apply Lemma~\ref{lem:niceCycle} with $C_1:=C_1, C_2:=C_1'$, $u_1\hat X_1:=w_1\hat X$ and $u_2\hat X_2$ being an arbitrary edge of $B$ incident on $C_1'$. This way, one can find in polynomial time, a cycle $F'$ in $B'$ incident on distinct nodes $w_1$ and $x_1$ in $C_1$ and on $C_1'$.
    Set $S'':= S\cup F\cup F'$. In $S''$ there is a large component $C''$ spanning the nodes of all components of $S$ incident on $F$ or $F'$. Notice that $|S''| = |S| + |F| + |F'|$. One has that $\sum_{\hat X\in V(F)}\credit(X) \geq |F| + 7/4$, and every component of $B'$ incident on $F'$ yields at least $1$ credit, while $C_1'$ yields at least $5/4$ credits. Since $B$ and $B'$ are both 2VC blocks of $\hat G_S$, the only component present in both $B$ and $B'$ is $C_1$. Thus, the components of $S$ incident on either $F$ or $F'$ yield at least $|F| + 7/4 + (|F'| - 2) + 5/4 = |F| + |F'| +1$ credits. Thus, $\credit(S'') \leq \credit(S) - |F| - |F'| - 1 + \credit(C'') = \credit(S) - |F| - |F'| + 1$. 

    Observe that $C''\setminus C_1$ contains a $u_1, v_1$-path via $F$ and a $w_1, x_1$-path via $F'$. Since $w_1\notin \{u_1, v_1\}$, $u_1\neq v_1$, and $w_1\neq x_1$, $C''$ and $C_1$ satisfy the conditions of Corollary~\ref{cor:shortcutNonLocalC5} (with $F:=C''$ and $C:=C_1$). Thus, we can find an edge $e\in C_1$ such that $C''\setminus\{e\}$ is 2EC. Set $S':=S''\setminus\{e\}$. In $S'$ there is a large component $C'$ spanning the nodes of all components of $S$ incident on $F$ or $F'$. Moreover, $\credit(S) = \credit(S'')$, so one has $\cost(S) - \cost(S') = |S| - |S'| + \credit(S) - \credit(S') \geq - |F| - |F'| + 1 + |F| + |F'| - 1 = 0$. We remark that $S'$ is canonical and it contains fewer components than $S$, so this subcase follows.

    \begin{figure}
        \centering
        \includegraphics[scale=0.65]{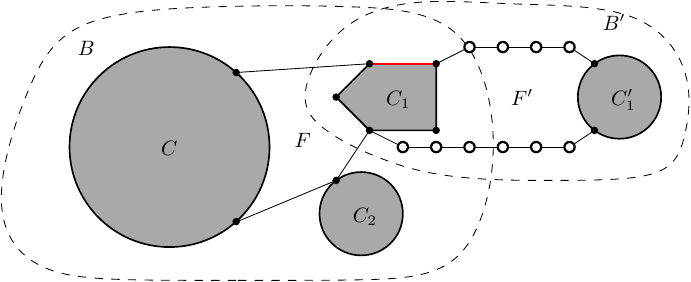}
        \caption{An illustration of Case~\ref{case:gluingOtherBlockOther}. The red edge in $C_1$ is removed through the application of Corollary~\ref{cor:shortcutNonLocalC5}.}
        \label{fig:gluingOtherBlockOther}
    \end{figure}

    \case{$|V(B)|\geq 4$.} Since $B$ is 2VC and $|V(B)| > 3$ one can find in polynomial time a cycle $F$ in $B$ containing $\hat C$ and $\hat C_1$ such that $|F|\geq 3$. We first show that we can assume that $|F|\geq 4$. Assume this is not the case, i.e., $|F|=3$. Since $|V(B)|\geq 4$, there exists an edge $\hat X\hat Y\in B$ such that $\hat X\in V(F), \hat Y\in V(B)\setminus V(F)$. Since $B$ is 2VC, we can find in polynomial time a path $P$ from $\hat Y$ to some node $\hat X'\in V(F)\setminus\{\hat X\}$ not going through $\hat X$. The cycle $(F\setminus\{\hat X\hat X'\})\cup P\cup \{\hat X\hat Y\}$ has length at least $4$ and contains the nodes $C$ and $C_1$ (because it contains every node in $V(F)$), so it is our desired cycle. 
        
    We now show that we can assume that $|F| = 4$ and every component of $B$ incident on $F$ other than $C$ is a non-local $5$-cycle. Indeed, every component of $B$ incident on $F$ yields at least $5/4$ credits if it is a $5$-cycle, and at least $6/4$ otherwise (while $C$ yields $2$ credits). Therefore, if $|F|\geq 5$ or $|F|=4$ and contains at least one component with at least $6$ edges, one has that $\sum_{\hat X\in V(F)}\credit(X)\geq |F| + 2$. Hence, in $S':=S\cup F$ there is a large component $C'$ spanning the nodes of all components of $B$ incident on $F$, and $\cost(S)-\cost(S') = |S|-|S'|+\credit(S)-\credit(S')\geq -|F| + |F| + 2 -\credit(C') = 0$. Clearly, $S'$ is canonical and has fewer components than $S$, so it satisfies the claim of the lemma.

    By our above assumptions, we can assume w.l.o.g.\ that $F=\{\hat C\hat C_2, \hat C\hat C_3, \hat C_1\hat C_2, \hat C_1\hat C_3\}$, where $C_2$ and $C_3$ are components of $B$ that are non-local $5$-cycles. We will show how to find a cycle $F'$ in $B$ such that $F'$ is incident to $C$ and to distinct nodes $u_1, v_1$ of $C_1$, and $|F'|\geq 4$. This is enough to prove the claim of the lemma, because if $|F'|\geq 4$, since Lemma~\ref{lem:gluingC4LocalC5} does not hold, every component of $B$ incident with $F'$ contains at least $5$ edges and thus yields at least $5/4$ credits, while $C$ yields $2$ credits, then $\sum_{\hat X\in V(F)}\credit(X) \geq |F| + 7/4$. Therefore, we can apply the exact same arguments as in Case~\ref{case:gluingOtherBlock} above.

    Assume $F$ does not satisfy these properties. Since $|F|=4$ and is incident to $C$ and $C_1$, the edges $\hat C_1\hat C_2$ and $\hat C_1\hat C_3$ must be both incident to the same node in $C_1$, say $u_1$. By applying the local $3$-matching lemma (Lemma~\ref{lem:local3matching}) to $C_1$ in $B$, and since $F$ is incident to $u_1$, there always exists a  matching $\{u_1\hat X_1, u_2\hat X_2, u_3\hat X_3\}$ in $B$ such that $u_2, u_3\in V(C_1)$ and $\hat X_i\in V(B)\setminus V(C_1)$, for $i\in \{1, 2, 3\}$. If, for some $i\in \{2, 3\}$ $X_i = C_2$, then the cycle $F':=(F\setminus\{u_1\hat C_2\})\cup \{u_i\hat C_2\}$ has the desired properties. A symmetric argument works when $X_i = C_3$ for some $i\in\{2, 3\}$.

    \begin{figure}
        \centering
        \includegraphics[scale=0.65]{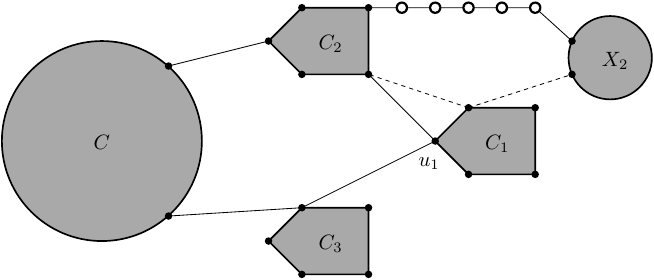}
        \caption{We illustrate how to compute a cycle of size at least $4$ that is incident to distinct nodes of $C_1$. In the figure, any of the dashed edges yields a cycle of the desired type.}
        \label{fig:gluingLast}
    \end{figure}

    Suppose now $X_2 = X_3 = C$. Then, there is a matching of size $2$ between $C$ and $C_1$, and by the same arguments as in Case~\ref{case:gluingTwoC5}, this leads to a contradiction to the fact that Lemma~\ref{lem:gluingAdjacent} does not hold. Thus, we can assume w.l.o.g.\ that $X_2\notin \{C, C_1, C_2, C_3\}$. Since $B$ is 2VC, one can find in polynomial time a path $P$ in $B$ from $X_2$ to a node in $V(F)$ not going through $C_1$. If $P$ has $C_2$ as an endpoint, the cycle $F':=\{\hat C\hat C_2, \hat C\hat C_3, u_1\hat C_3, u_2\hat X_2\}\cup P$ has the desired properties. A symmetric argument works if $P$ has $C_3$ as an endpoint. Finally, if $P$ has $C$ as an endpoint, the cycle $F':=\{\hat C\hat C_2, u_1\hat C_2, u_2\hat X_2\}\cup P$ has the desired properties (see Figure~\ref{fig:gluingLast}).
    \end{caseanalysis}
\end{proof}

\begin{lemma}\label{lem:gluingC6C7}
    Let $|V(B)|=2$ and let $C_1$ be the component of $B$ distinct from $C$. If $|C_1|\geq 6$, then one can compute in polynomial time a canonical $2$-edge cover $S'$ of $G$ such that every component of $S'$ is 2EC, $S'$ has fewer components than $S$, and $\cost(S')\leq \cost(S)$.
\end{lemma}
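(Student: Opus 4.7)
The plan is to case-split on $|C_1|$.

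If $|C_1|\ge 8$, the argument is direct. Since $B$ is $2$-vertex-connected with $|V(B)|=2$, $\hat G_S$ contains at least two parallel edges between $\hat C$ and $\hat C_1$; let $F$ consist of any two of them. Setting $S':=S\cup F$ merges $C$ and $C_1$ into a single $2$EC component $C'$ with at least $18$ edges, hence large with $\credit(C')=2$. A direct calculation gives $\cost(S)-\cost(S')=-|F|+\credit(C)+\credit(C_1)-\credit(C')=-2+2+2-2=0$, and $S'$ is canonical with one fewer component.

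If instead $|C_1|\in\{6,7\}$, then $\credit(C_1)=|C_1|/4<2$ and a bare $2$-cycle no longer pays for itself; one must additionally delete one cycle edge of $C_1$. The plan is to reduce to Lemma~\ref{lem:gluingAdjacent} with $C_2:=C$, which requires a Hamiltonian pair $\{u_1,v_1\}$ of $C_1$ whose endpoints are both adjacent to $V(C)$. First invoke the local $3$-matching Lemma~\ref{lem:local3matching} on the partition $(\{\hat C_1\},\{\hat C\})$ of $V(B)$ to obtain three distinct vertices $u_1,u_2,u_3\in V(C_1)$ each adjacent to $V(C)$. If two of them are consecutive on $C_1$, the remaining $k-1$ cycle edges form a Hamiltonian path between them and Lemma~\ref{lem:gluingAdjacent} applies at once. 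Otherwise the $u_i$'s are independent in $C_1$, so up to relabeling $\{u_1,u_2,u_3\}=\{x_1,x_3,x_5\}$ in $C_1=x_1\cdots x_k$. Each remaining vertex $x_i\notin\{x_1,x_3,x_5\}$ has at least one cycle neighbor in $\{x_1,x_3,x_5\}$; so if any such $x_i$ is also adjacent to $V(C)$, it forms with that neighbor the required Hamiltonian pair, and again Lemma~\ref{lem:gluingAdjacent} applies.

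The remaining subcase is that exactly $\{x_1,x_3,x_5\}$ are the vertices of $V(C_1)$ adjacent to $V(C)$ and no Hamiltonian $x,x'$-path exists in $G[V(C_1)]$ for any $\{x,x'\}\subseteq\{x_1,x_3,x_5\}$. For $k=6$, Lemma~\ref{lem:6-cycleContractibility} then forces at least two edges from $\{x_2,x_4,x_6\}$ into $V(G)\setminus V(C_1)$; by assumption none go to $V(C)$, so they enter some other $2$VC block $B'$ through $\hat C_1$, making $C_1$ non-local. For $k=7$ a parallel chord/contractibility argument mirroring Case $k=7$ of the proof of Lemma~\ref{lem:hamiltonianPairs} forces the same non-local conclusion, since otherwise $C_1$ would be $5/4$-contractible. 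This residual non-local configuration is the main obstacle. The plan to handle it is to combine the $2$-cycle $F$ in $B$ with a cycle $F'$ in $B'$ through $\hat C_1$ obtained from Lemma~\ref{lem:niceCycle} applied inside $B'$, and then to shortcut one cycle edge of $C_1$ in the combined structure via Corollary~\ref{cor:shortcutNonLocalC6C7}, thereby gluing $C$, $C_1$, and the components of $B'$ touched by $F'$ into a single large $2$EC component. The delicate part is the credit accounting, balancing the $|F|+|F'|-1$ added edges against the $2$ credits of $C$, the $|C_1|/4$ credits of $C_1$, and the at-least-$5/4$ credits of the other components of $B'$ touched by $F'$, while also verifying that $S'$ remains canonical.
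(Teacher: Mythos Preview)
Your overall structure matches the paper's: handle $|C_1|\ge 8$ directly, reduce $|C_1|\in\{6,7\}$ to Lemma~\ref{lem:gluingAdjacent} when possible, and in the residual non-local case glue $C,C_1$ together with a cycle $F'$ in a neighboring block $B'$, saving one edge of $C_1$ via Corollary~\ref{cor:shortcutNonLocalC6C7}. However, your final paragraph leaves genuine gaps that the paper has to work hard to close.

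First, the assertion that the other components of $B'$ touched by $F'$ carry ``at-least-$5/4$ credits'' is unjustified and in fact false: $B'$ is a different $2$VC block from $B$, so nothing rules out $4$-cycles there, which carry only $1$ credit. The paper treats this as a separate case: when the smallest component $C_2$ of $B'$ is a $4$-cycle, it applies Lemma~\ref{lem:shortcutC4localC5} to \emph{also} shortcut one edge of $C_2$, recovering the missing $1/4$.

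Second, even granting your $5/4$ claim, the accounting still fails when $|C_1|=6$ and $|F'|=2$. Concretely, with $|F|=2$ one gets
\[
\cost(S)-\cost(S')\ \ge\ -(|F|+|F'|-1)+2+\tfrac{|C_1|}{4}+\tfrac{5}{4}(|F'|-1)-2\ =\ \tfrac{|C_1|}{4}+\tfrac{|F'|}{4}-\tfrac{9}{4},
\]
which is $-1/4$ when $|C_1|=6$ and $|F'|=2$. This forces $V(B')=\{\hat C_1,\hat C_2\}$ with $C_2$ a $5$-cycle, and the paper devotes its most intricate subcase to exactly this configuration: it exploits the $3$-matching between $C_1$ and $C_2$ to either save a second edge (one of $C_1$ and one of $C_2$), or, failing that, invokes Lemma~\ref{lem:6-cycleContractibility} again to find a \emph{third} block $B''$ attached at $C_1$ and glues $C_1,C_2,C_3$ together (without $C$). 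None of this is hinted at in your plan.

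Finally, two smaller points you should make explicit. To apply Corollary~\ref{cor:shortcutNonLocalC6C7} you must choose $F$ so that the vertex of $C_1$ sending the edge into $B'$ lies on a shortest path in $C_1$ between the two endpoints of $F$; the paper arranges this by picking $F:=\{u_1\hat C,v_1\hat C\}$ with $x_1$ between $u_1$ and $v_1$. And for $k=7$, rather than a vague ``parallel contractibility argument'', the paper (and you can too) simply invokes the Hamiltonian Pairs Lemma~\ref{lem:hamiltonianPairs} directly: the guaranteed Hamiltonian pair cannot lie inside $\{x_1,x_3,x_5\}$ (that would give a forbidden Hamiltonian path), so some $x_i\notin\{x_1,x_3,x_5\}$ has a neighbor outside $V(C_1)$, necessarily in another block.
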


\begin{proof}
    If $|C_1|\geq 8$, then $\credit(C_1) = 2$. In this case we simply set $S':=S\cup\{e_1, e_2\}$, where $e_1$ and $e_2$ are two distinct edges between $V(C)$ and $V(C_1)$, which are guaranteed by the local $3$-matching lemma (Lemma~\ref{lem:local3matching}) applied to $C$ and $B$. In $S'$ there is a large 2EC component $C'$ spanning the nodes of $C$ and $C_1$. One has $\cost(S)-\cost(S') = |S|-|S'|+\credit(C)+\credit(C_1)-\credit(C') = -2 + 2 + 2 - 2 = 0$.

    From now on we assume that $C_1$ is a $6$-cycle or a $7$-cycle, as otherwise we will be immediately done from the previous lemmas. Assume first that there are edges $u_1\hat{C}$ and $v_1\hat{C}$, such that there exists a $u_1, v_1$-Hamiltonian path in $G[V(C_1)]$. By this assumption, $C_1$ satisfies the conditions of Lemma~\ref{lem:gluingAdjacent} (with $C_2:=C$), so we can construct the desired $S'$. Therefore, we can assume that this is not the case.

    Apply the local $3$-matching lemma (Lemma~\ref{lem:local3matching}) to $C_1$ and $B$ to find a matching $\{u_1\hat C, v_1\hat C, w_1\hat C\}$ between $\{u_1, v_1, w_1\}\subseteq V(C_1)$ and $V(C)$ in $G$. By the above argument, we can assume w.l.o.g.\ that $u_1, v_1$, and $w_1$ are pairwise non-adjacent in $C_1$, otherwise, there would be a Hamiltonian path in $G[V(C_1)]$ between them, a contradiction to the assumption above. Moreover, it must be that there is no Hamiltonian path in $G[V(C_1)]$ between any pair in $\{u_1, v_1, w_1\}$. By the Hamiltonian pairs lemma (Lemma~\ref{lem:hamiltonianPairs}), there must exist an edge $x_1\hat X$ in $G$ such that $x_1\in V(C_1)\setminus \{u_1, v_1, w_1\}$ and $\hat X\in V(B')$, where $B'$ is a 2VC block of $\hat G_S$. Again, by the above argument, it must be that $B'\neq B$. Now, Since $C_1$ is a $6$-cycle or a $7$-cycle and $u_1, v_1$ and $w_1$ are pairwise non-adjacent, we can assume w.l.o.g.\ that $x_1$ is part of the shortest $u_1, v_1$-path in $C_1$. Let $F:=\{u_1\hat C, v_1\hat C\}$ and let $C_2$ be the component of $B'$ with the least number of edges and distinct from $C_1$.
    
    We now consider two cases:

    \begin{caseanalysis}
    \case{$C_2$ is a $4$-cycle.}\label{case:gluingC6C7:C4} This case is illustrated in Figure~\ref{fig:gluingC6C7:C4}. Apply Lemma~\ref{lem:shortcutC4localC5} to find a cycle $F'$ in $B'$ incident to distinct nodes $x_1, y_1$ in $C_1$ and to distinct nodes $u_2, v_2$ in $C_2$ such that there is a Hamiltonian $u_2, v_2$-path $P$ in $G[V(C_2)]$. 
    Set $S'':= (S\setminus C_2)\cup F\cup F'\cup P$. In $S''$ there is a large component $C''$ spanning the nodes of all components of $S$ incident on $F$ or $F'$. Every such component yields at least $1$ credit, while $C$ yields $2$ credits and $C_1$ yields at least $6/4$ credits. Since $B$ and $B'$ are 2VC blocks of $\hat G_S$, $C_1$ is the only component present in both $F$ and $F'$, $\credit(S'') \leq \credit(S) -(|F|-2)- (|F'|-1)-\credit(C)-\credit(C_1)+\credit(C'')=\credit(S) -(|F|-2)- (|F'|-1)-2-6/4+2<\credit(S) -|F| - |F'| + 2$. Notice that $|S''|=|S|+|F|+|F'|-1$.
    
    Observe that $C''\setminus C_1$ contains a $u_1, v_1$-path and a $x_1, y_1$-path. Since $x_1$ is in the shortest $u_1, v_1$-path in $C_1$, $u_1\neq v_1$, and $x_1\neq y_1$, $C''$ and $C_1$ satisfy the conditions of Corollary~\ref{cor:shortcutNonLocalC6C7}. Therefore, we can find an edge $e\in E[C_1]$ such that $C''\setminus\{e\}$ is 2EC. Set $S':=S''\setminus\{e\}$, where we remark that $\credit(S')=\credit(S'')$. One has $\cost(S) - \cost(S') = |S| - |S'| + \credit(S) - \credit(S') > -~|F| - |F'| + 2 + |F| + |F'| - 2 = 0$.

    \begin{figure}[t]
\centering
 \begin{subfigure}[b]{0.45\textwidth}
        \centering
        \resizebox{\linewidth}{!}{
\includegraphics[scale=0.65]{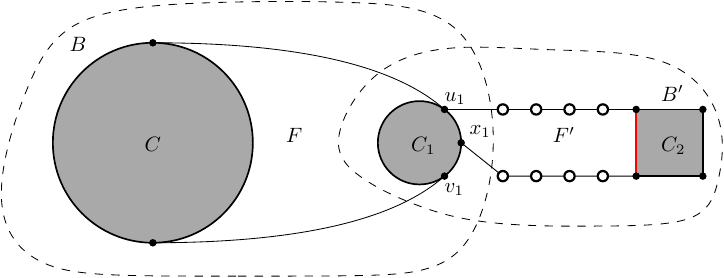}
}
        \caption{Case~\ref{case:gluingC6C7:C4}.}
        \label{fig:gluingC6C7:C4}
    \end{subfigure}
     \hspace{0.5cm}
  \begin{subfigure}[b]{0.45\textwidth}
    \centering
        \resizebox{\linewidth}{!}{
\includegraphics[scale=0.65]{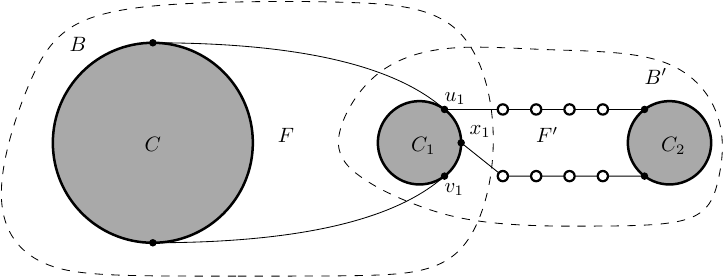}
}
		 \caption{Case~\ref{case:gluingC6C7:general}. }
        \label{fig:gluingC6C7:general}
    \end{subfigure}
\caption{{\bf (a)} Illustration of Case~\ref{case:gluingC6C7:C4}. The red edge is removed from $S$. An additional edge of $C_1$ is removed from $S$ through the application of Corallary~\ref{cor:shortcutNonLocalC6C7}.
{\bf (b)} An illustration of the construction for Case~\ref{case:gluingC6C7:general}. In most cases this construction will be enough, although additional steps are needed in a few subcases. An edge from $C_1$ is removed from $S$ through the application of Corollary~\ref{cor:shortcutNonLocalC6C7}.} 
\label{fig:gluingC6C7:combined2}
\end{figure}

    \case{Otherwise $|C_2|\geq 5$}.\label{case:gluingC6C7:general} Apply Corollary~\ref{cor:cycleSize3} to find a cycle $F'$ in $B'$ incident to distinct nodes $x_1, y_1$ in $C_1$ and to $C_2$ such that $|F'|\geq\min\{3, |V(B')|\}$. 
    Set $S'':= S\cup F\cup F'$. In $S''$ there is a large component $C''$ spanning the nodes of all components of $S$ incident on $F$ or $F'$. Since Case~\ref{case:gluingC6C7:C4} does not hold, it must be that every component of $B'$ has at least $5$ edges. Thus, every component of $B'$ incident with $F'$ yields at least $5/4$ credits. Since $B$ and $B'$ are 2VC blocks of $\hat G_S$, $C_1$ is the only component of $S$ present in both $F$ and $F'$. Since $V(F) = \{\hat C, \hat C_1\}$, it holds that:
    \begin{align*}
        \credit(S) - \credit(S'') &\geq \credit(C) + \credit(C_1) + \credit(C_2) + \frac{5}{4}(|F'|-2)-\credit(C'') \\
        &= \credit(C_1) + \credit(C_2) + \frac{5}{4}(|F'|-2).
    \end{align*}

    Observe that $C''\setminus C_1$ contains a $u_1, v_1$-path via $F$ and a $x_1, y_1$-path via $F'$ (see Figure~\ref{fig:gluingC6C7:general}). Since $x_1$ is on the shortest $u_1, v_1$-path in $C_1$, $u_1\neq v_1$, and $x_1\neq y_1$, $C''$ and $C_1$ satisfy the conditions of Corollary~\ref{cor:shortcutNonLocalC6C7} (with $F:=C''$ and $C:=C_1$). Therefore, we can find in polynomial time an edge $e\in C_1$ such that $C''\setminus \{e\}$ is 2EC. Set $S':=S''\setminus\{e\}$, where we note that $\credit(S')=\credit(S'')$. Using that $|F|=2$ one has:
    \begin{align*}
        \cost(S) - \cost(S') &= |S| - |S'| + \credit(S) -\credit(S') \\
        &\geq |S| - |S'| + \credit(C_1) + \credit(C_2) + \frac{5}{4}(|F'|-2) \\
        &= -|F| -|F'| + 1 + \credit(C_1) + \credit(C_2) + \frac{5}{4}|F'|- \frac{10}{4} \\
        &=\credit(C_1) + \credit(C_2) + \frac{1}{4}|F'| - \frac{14}{4}.
    \end{align*}
    Consider now the following subcases:
    
    \subcase{$|C_1|= 7$.} Then one has that $\credit(C_1)\geq 7/4$ and  
    $\credit(C_2)\geq 5/4$. Thus, $\cost(S)-\cost(S')\geq \credit(C_1) + \credit(C_2) + |F'|/4 - 14/4\geq 7/4 + 5/4 + 2/4 - 14/4 = 0$. The last inequality follows from the fact that $|F'|\geq 2$.
    
    \subcase{$|C_2|\geq 6$.} Then one has that $\credit(C_2)\geq 6/4$, and by assumption of the lemma, $\credit(C_1)\geq 6/4$. Thus, $\cost(S)-\cost(S')\geq \credit(C_1) + \credit(C_2) + |F'|/4 - 14/4\geq 6/4 + 6/4 + 2/4 - 14/4 = 0$. The last inequality follows from the fact that $|F'|\geq 2$.
    
    \subcase{$|F'|\geq 3$.} By assumption of the lemma, $\credit(C_1)\geq 6/4$, and 
    $\credit(C_2)\geq 5/4$. Thus, $\cost(S)-\cost(S')\geq \credit(C_1) + \credit(C_2) + |F'|/4 - 14/4\geq 6/4 + 5/4 + 3/4 - 14/4 = 0$. The last inequality follows from this case assumption.

    \subcase{Otherwise.} By exclusion of the previous cases and the assumption of the lemma, it must be that $|C_1|=6, |C_2|=5$ and $|F'|=2$. Since $|F'|\geq \min\{3, |V(B')|\}$, it must be that $V(B')=\{\hat C_1, \hat C_2\}$. In this case we show how to compute an alternate $S'$. Let $C_1 = a_1a_2a_3a_4a_5a_6$, and assume w.l.o.g.\ $u_1=a_1, v_1=a_3, w_1=a_5$, and $x_1=a_2$. We recall that $x_1$ is adjacent to some component-node in $V(B')\setminus\{\hat C_1\}$, and since $V(B')=\{\hat C_1, \hat C_2\}$, it must be that there exists an edge $x_1b_1$ for some $b_1\in V(C_2)$.

    \subsubcase{There exists an edge $a_ib_2$ for some $b_2\in V(C_2)$ and $i\in\{2, 4, 6\}$ such that $a_ib_2\neq a_2b_1$.}\label{case:gluingC6C7:degenerate} 
    If $i=4$ then let $S':=(S\setminus\{a_1a_2, a_3a_4\})\cup \{a_1\hat C, a_3\hat C, a_2b_1, a_4b_2\}$. In $S'$ there is a large component $C'$ spanning the nodes of $C, C_1$, and $C_2$. Indeed, $C'$ consists of the components $C$ and $C_2$ joined by the paths $\hat C a_1a_6a_5a_4\hat C_2$ and $\hat C a_3a_2\hat C_2$ (see Figure~\ref{fig:gluingC6C7:degenerate1}). One has $\cost(S) - \cost(S') = |S|-|S'|+\credit(C)+\credit(C_1)+\credit(C_2)-\credit(C')=-2+2+6/4+5/4-2>0$. 
    The case where $i=6$ is symmetric. Assume now $i=2$, so since $a_ib_2\neq a_2b_1$ it must be that $b_1\neq b_2$. By the local $3$-matching lemma (Lemma~\ref{lem:local3matching}) and since $V(B')=\{\hat C_1, \hat C_2\}$, there is a matching $M$ of size $3$ between $V(C_1)$ and $V(C_2)$.

    \begin{figure}
        \centering
        \includegraphics[scale=0.65]{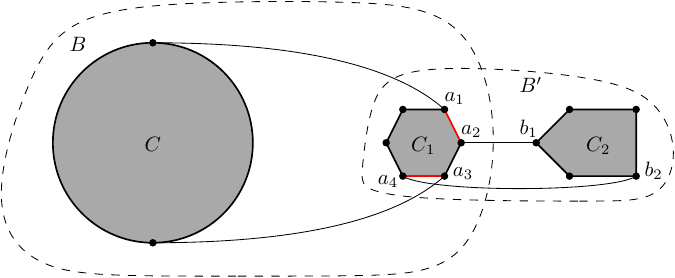}
        \caption{Illustration of Case~\ref{case:gluingC6C7:degenerate}, when $i=4$. The red edges are removed from $S$.}
        \label{fig:gluingC6C7:degenerate1}
    \end{figure}
    \begin{claim}\label{claim:gluingC6C7:degenerate}
        There exists a matching $\{a_2b, ab'\}$ such that $a\in V(C_1)$ and $bb'\in E[C_2]$.
    \end{claim}

    \begin{proof} 
    Since $C_2$ is a $5$-cycle, $|M|=3$, and $b_1\neq b_2$, there is an edge $e\in M$ incident to a node in $C_2$ that is adjacent to either $b_1$ or $b_2$. Let $b'$ be the endpoint of $e$ in $C_2$ and $b\in\{b_1, b_2\}$ be the node of $C_2$ such that $bb'\in E[C_2]$. If $e$ is not incident on $a_2$ in $C_1$ then $\{a_2b, e\}$ is the desired matching. Assume this is not the case, so that there exist edges $a_2b$ and $a_2b' = e$.
    
    Since $|M|=3$, $b\neq b'$, and $C_2$ is a $5$-cycle, there exist an edge $e'\in M\setminus\{e\}$ incident on a node adjacent in $C_2$ to either $b$ or $b'$. Assume w.l.o.g.\ $e'$ is incident on a node in $C_2$ adjacent to $b$. Since $M$ is a matching and $e$ is incident to $a_2$ in $C_1$, $e'$ is not incident on $a_2$ in $C_1$ and thus $\{e', a_2b\}$ is the desired matching. The claim follows.
    \end{proof}

    Let $\{a_2b, ab'\}$ be the matching obtained from Claim~\ref{claim:gluingC6C7:degenerate}. Let $S'':=(S\setminus \{bb'\})\cup \{a_1\hat C, a_3\hat C, a_2b, ab'\}$. In $S''$ there is a large 2EC component $C''$ spanning the nodes of $C, C_1$, and $C_2$ (see Figure~\ref{fig:gluingC6C7:degenerate2}). Notice that $C''\setminus C_1$ contains a $a_1, a_3$-path and a $a_2, a$-path. Since $a_2$ is in the shortest $a_1, a_3$-path in $C_1$ and $a_2\neq a$, $C''$ and $C_1$ satisfy the conditions of Corollary~\ref{cor:shortcutNonLocalC6C7} (with $F:=C''$ and $C:=C_1$). Therefore, in polynomial time we can find an edge $e\in C_1$ such that $C''\setminus \{e\}$ is 2EC. Set $S':=S''\setminus\{e\}$ and let $C'$ be the large 2EC component of $S'$ spanning the nodes of $C, C_1$, and $C_2$. One has $\cost(S)-\cost(S') = |S| - |S'| + \credit(C)+\credit(C_1)+\credit(C_2)-\credit(C') = -2 + 2+ 6/4+5/4-2>0$. 

\begin{figure}[t]
\centering
 \begin{subfigure}[b]{0.45\textwidth}
        \centering
        \resizebox{\linewidth}{!}{
\includegraphics[scale=0.65]{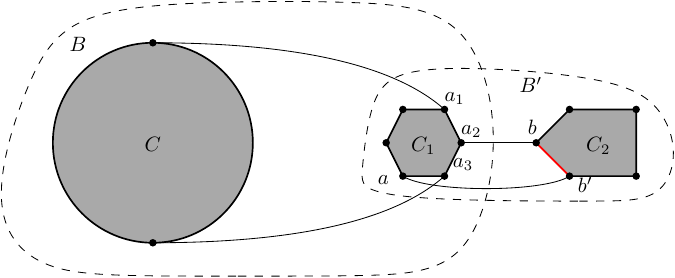}

}
        \caption{Case~\ref{case:gluingC6C7:degenerate}, when $i=4$.}
        \label{fig:gluingC6C7:degenerate2}
    \end{subfigure}
    \hspace{1cm}
  \begin{subfigure}[b]{0.24\textwidth}
    \centering
        \resizebox{\linewidth}{!}{

\includegraphics[scale=0.65]{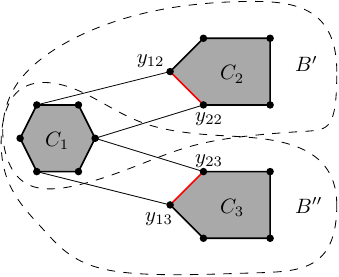}

}
		 \caption{Case~\ref{case:gluingC6C7:otherwise}. }
        \label{fig:gluingC6C7:otherwise}
    \end{subfigure}
\caption{{\bf (a)} Illustration of Case~\ref{case:gluingC6C7:degenerate}, when $i=4$. The red edge is removed from $S$. An additional edge of $C_1$ is removed from $S$ through the application of Corollary~\ref{cor:shortcutNonLocalC6C7}.
{\bf (b)} Illustration of Case~\ref{case:gluingC6C7:otherwise}. The red edges are removed from $S$. Notice that, in this case, the components are not merged together with the component $C$.} 
\label{fig:gluingC6C7:combined2}
\end{figure} 
    \subsubcase{Otherwise.}\label{case:gluingC6C7:otherwise} This case is illustrated in Figure~\ref{fig:gluingC6C7:otherwise}. Recall that there are no Hamiltonian paths in $G[V(C_1)]$ between any pair in $\{u_1, v_1, w_1\}=\{a_1, a_3, a_5\}$. Moreover, by exclusion of Case~\ref{case:gluingC6C7:degenerate}, there is only one edge between $\{a_2, a_4, a_6\}$ and $V(C_2)$. Thus, by Lemma~\ref{lem:6-cycleContractibility}, there must exist an edge $a_i\hat C_3, i\in\{2, 4, 6\}$, with $C_3\neq C_2$. Since $a_i\notin \{u_1, v_1, w_1\}$, by symmetry with $C_2$, it must be that $\hat C_3\in V(B'')$, where $B''$ is a 2VC block of $\hat G_S$ distinct from $B$ and $B'$ such that $V(B'')=\{\hat C_1, \hat C_3\}$ and $C_3$ is a $5$-cycle. 

    By the local $3$-Matching lemma (Lemma~\ref{lem:local3matching}) and since $V(B')=\{\hat C_1, \hat C_2\}$, $V(B'')=\{\hat C_1, \hat C_3\}$, there is a matching of size $3$ between $V(C_1)$ and $V(C_2)$ and between $V(C_1)$ and $V(C_3)$. Since $C_\ell$ is a $5$-cycle, for $\ell\in\{2, 3\}$, there are edges $y_{1\ell}\hat C_1$ and $y_{2\ell}\hat C_1$ such that $y_{1\ell}y_{2\ell}\in E[C_\ell]$, for $\ell\in\{2, 3\}$. Set $S':=(S\setminus\{y_{12}y_{22}, y_{13}y_{23}\})$ $\cup \{y_{12}\hat C_1, y_{22}\hat C_1, y_{13}\hat C_1, y_{23}\hat C_1\}$. In $S'$ there is a large 2EC component $C'$ spanning the nodes of $C_1, C_2$ and $C_3$. One has $\cost(S)-\cost(S') = |S|-|S'|+\credit(C_1)+\credit(C_2)+\credit(C_3)-\credit(C') = - 2 + 6/4 + 5/4 + 5/4 -2 = 0$.
    \end{caseanalysis}

    Notice that in every case, the computed $S'$ is canonical.
    \end{proof}

Finally, we prove the main gluing lemma.
    
\begin{proof}[Proof of Lemma~\ref{lem:gluingSimplified}]
We can assume that the conditions of Lemmas~\ref{lem:gluingC4LocalC5},~\ref{lem:gluingNonLocalC5}, and~\ref{lem:gluingC6C7} do not hold; otherwise we are done. Since the conditions of Lemmas~\ref{lem:gluingC4LocalC5} and~\ref{lem:gluingNonLocalC5} do not hold, it must be that every component of $B$ has at least $6$ edges. Since the condition of Lemma~\ref{lem:gluingC6C7} does not hold, it must be that $|V(B)|\geq 3$. Then, since $B$ is 2VC, one can find in polynomial time a cycle $F$ in $B$ containing $\hat C$ and such that $|F|\geq 3$. Set $S':=S\cup F$. In $S'$ there is a large 2EC component $C'$ spanning the nodes of every component of $B$ incident on $F$. Every such component has at least $6$ edges, and thus yields at least $6/4$ credits, while $C$ yields $2$ credits. Therefore, we collect at least $6/4(|F|-1)+2$ credits from those components. One has $\cost(S)-\cost(S')=|S|-|S'|+\credit(S)-\credit(S')\geq -|F|+ 6/4(|F|-1)+2-\credit(C')\geq 0$, where the last inequality follows from the fact that $|F|\geq 3$. Since $S'$ is canonical and it contains strictly less connected components than $S$, 
and the lemma follows.
\end{proof}

\appendix

\section*{Appendix}
\section{A Reduction to Structured Graphs}
\label{sec:reductionStructured}

In this section, we show that we can reduce our original instance to a 2ECSS instance with $\alpha$-structured graphs. This is an improved version of the preprocessing step in~\cite{GGJ23soda}, where the authors presented a similar reduction, but lost a factor of $1+\eps$ in the approximation ratio. We show how to improve their preprocessing so that the reduction is approximation-preserving for every $\alpha\geq 6/5$, up to a small additive constant.

Since we will consider different instances of 2ECSS, we use $\OPT(G)$ to denote a minimum 2EC spanning subgraph of $G$, and $\opt(G) = |\OPT(G)|$. Given $W\subseteq V(G)$ we let $G|W$ denote the multi-graph obtained by contracting $W$ into a single node $w$. For a subgraph $H$ of $G$, we use $G|H$ as a shortcut for $G|V(H)$. We use the following well-known facts and lemmas:

\begin{fact}\label{fact:contract}
    Let G be a 2EC graph and $W\subseteq V(G)$. Then $G|W$ is 2EC.
\end{fact}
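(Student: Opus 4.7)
The plan is to prove the statement via the standard no-bridge characterization of 2-edge-connectivity: a graph is 2EC if and only if it is connected and contains no bridge. Since contracting a subset of vertices of a connected graph yields a connected graph, it suffices to show that $G|W$ has no bridge under the assumption that $G$ has none.

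I would argue by contrapositive, lifting any hypothetical bridge of $G|W$ to a bridge of $G$. The natural setup is $V(G|W) = (V(G)\setminus W) \cup \{w\}$, where $w$ is the node obtained by contracting $W$, together with an injection from $E(G|W)$ into $E(G)$: edges of $G$ with both endpoints in $W$ become loops at $w$ and are discarded, while all other edges of $G$ give rise to corresponding edges of $G|W$ (preserving multiplicities). The trivial case $W=V(G)$ is handled separately: then $G|W$ is a single vertex and is vacuously 2EC.

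Assuming $W \subsetneq V(G)$, suppose for contradiction that some edge $\hat e$ is a bridge of $G|W$, inducing a partition $(\hat A, \hat B)$ of $V(G|W)$ with $w \in \hat A$, say. Define $A := (\hat A \setminus \{w\}) \cup W$ and $B := \hat B$; this is a non-trivial partition of $V(G)$ into non-empty parts, with $W \subseteq A$. The key observation is that an edge of $G$ crosses the cut $(A,B)$ if and only if the corresponding edge of $G|W$ crosses $(\hat A, \hat B)$, because $W$ is contained entirely in $A$ and the contraction does not affect edges disjoint from $W$. Hence the unique edge of $G$ corresponding to $\hat e$ is the unique edge of $G$ crossing $(A,B)$, so it is a bridge of $G$, contradicting 2EC of $G$.

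There is no real obstacle here; the result is standard and the argument is essentially bookkeeping. The only point one has to be careful about is the correspondence between edges of $G$ and edges of $G|W$ in the presence of parallel edges and loops, and the fact that the contracted side of the partition carries all of $W$ along with it.
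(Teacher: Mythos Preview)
Your argument is correct and is the standard one: contraction preserves connectivity, and any bridge of $G|W$ lifts to a cut of size one in $G$ because the contracted set $W$ lies entirely on one side. The paper does not actually prove this statement; it is listed as a well-known fact and used without justification, so there is nothing to compare against.
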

\begin{fact}\label{fact:uncontract}
    Let $H$ be a 2EC subgraph of a 2EC graph $G$, and $S$ be 2EC spanning subgraph of $G|H$. Then $S\cup H$ is a 2EC spanning subgraph of $G$, where we interpret $S$ as edges of $G$ after decontracting $V(H)$. 
\end{fact}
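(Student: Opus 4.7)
The plan is to verify the two required properties directly: that $S\cup H$ spans $V(G)$, and that $S\cup H$ is 2-edge-connected in $G$. The spanning property is essentially immediate, since every vertex outside $V(H)$ is touched by some edge of $S$ (as $S$ is a spanning subgraph of $G|H$), while every vertex inside $V(H)$ is touched by some edge of $H$ (as $H$ is 2EC and hence has minimum degree at least $2$).

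For 2-edge-connectivity, I would fix an arbitrary edge $e\in S\cup H$ and argue that $(S\cup H)\setminus\{e\}$ is connected, splitting into two cases according to whether $e$ lies in $H$ or in $S$. In the first case, $e\in H$, the 2-edge-connectivity of $H$ gives that $H\setminus\{e\}$ is still connected, so all of $V(H)$ lies in a single connected component of $(S\cup H)\setminus\{e\}$. Meanwhile, since $S$ is connected in $G|H$, every vertex $v\in V(G)\setminus V(H)$ admits an $S$-path to the contracted super-node $w$; after uncontracting $w$ back into $V(H)$, this path attaches $v$ to some vertex of $V(H)$ in $G$. Combining these observations gives the desired connectivity of $(S\cup H)\setminus\{e\}$.

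In the second case, $e\in S$, the edge corresponds to a concrete edge $e'$ of $G$ after decontraction. Since $S$ is 2EC in $G|H$, $S\setminus\{e\}$ is still connected in $G|H$, so any two vertices of $V(G|H)$ are joined by an $(S\setminus\{e\})$-walk. Translating back to $G$, each time such a walk passes through $w$ it enters $V(H)$ at some vertex and leaves at possibly a different one; because $H$ is itself connected (being 2EC), edges of $H$ can stitch these entry and exit points together, yielding a walk in $(S\cup H)\setminus\{e'\}$ between any two vertices of $G$.

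The main obstacle I expect is precisely the bookkeeping in the second case: a single walk through $w$ in $G|H$ may enter and leave via edges incident to distinct vertices of $V(H)$, and the argument genuinely relies on the connectedness of $H$ to fuse these into a coherent walk in $G$. Everything else is a routine unfolding of the definitions of edge contraction and of $k$-edge-connectivity, so I do not foresee any further technical hurdle.
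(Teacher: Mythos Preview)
Your proof is correct. The paper does not actually prove this statement; it is recorded as a ``Fact'' and treated as standard background, so there is no paper proof to compare against. Your two-case argument (splitting on whether the deleted edge lies in $H$ or in $S$) is exactly the natural verification, and your identification of the only nontrivial point---that a walk in $G|H$ passing through the contracted node may enter and exit $V(H)$ at different vertices, which is repaired by the connectedness of $H$---is the right thing to flag.
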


\begin{lemma}[\cite{GGJ23soda, HVV19}]\label{lem:parallel}
    Let G be a 2VC multigraph with $|V (G)| \geq 3$. Then, there exists a minimum size 2EC spanning subgraph of G that is simple (i.e. contains no self-loops nor parallel edges).
\end{lemma}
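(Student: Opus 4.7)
The plan is to handle self-loops trivially and then eliminate parallel edges through a careful exchange argument, the correctness of which hinges on a uniqueness property of the minimum $u$-$v$ cut in the chosen solution.

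First I would note that a self-loop contributes to no cut of the graph, so deleting a self-loop from any 2EC spanning subgraph preserves 2-edge-connectivity while strictly decreasing the size; hence any minimum 2EC spanning subgraph is without self-loops. The remainder of the argument focuses on parallel edges: among all minimum-size 2EC spanning subgraphs of $G$ without self-loops, pick one $S$ that minimizes the number of parallel edges, and suppose for contradiction that $S$ contains two parallel edges $e_1, e_2$ between vertices $u$ and $v$. The task reduces to producing a subgraph $S'$ of the same size but with strictly fewer parallel edges than $S$, contradicting the choice of $S$.

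The structural analysis I would carry out begins as follows. Since $|S|$ is minimum, $S\setminus\{e_1\}$ cannot be 2EC, but it is still connected (removing one edge from a 2EC graph preserves connectivity), so it has a bridge $b$. The first key step is to argue $b=e_2$: if not, then $(S\setminus\{e_1\})\setminus\{b\}$ splits into parts $(P,Q)$, and because $e_2$ still joins $u$ and $v$ they lie in the same part, say $P$; inspecting the cut $(P,Q)$ in $S$ shows that neither $e_1$ nor $e_2$ crosses it, so only $b$ does, making $b$ a bridge of $S$ and contradicting 2EC. Hence $\{e_1,e_2\}$ is a 2-edge cut of $S$, and $S\setminus\{e_1,e_2\}$ splits into exactly two connected components $A\ni u$ and $B\ni v$. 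The second key step is to observe that $(A,B)$ is in fact the unique minimum $u$-$v$ cut of $S$: any size-$2$ cut $(X,V\setminus X)$ separating $u$ from $v$ must have $\{e_1,e_2\}$ as its crossing set (both $e_1$ and $e_2$ cross it and the size is $2$), so no edge of $S\setminus\{e_1,e_2\}$ crosses $(X,V\setminus X)$, which forces $A\subseteq X$ and $B\subseteq V\setminus X$, and therefore $X=A$.

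The final step is the exchange. Using that $G$ is 2VC together with $|V(G)|\ge 3$, I would argue that there exists an edge $f\in E(G)\setminus S$ crossing $(A,B)$ whose endpoints are not both in $\{u,v\}$: if every crossing of $(A,B)$ in $G$ were between $u$ and $v$, then removing $u$ from $G$ would leave $A\setminus\{u\}$ with no edge to $B$, forcing $A=\{u\}$ by 2VC, and symmetrically $B=\{v\}$, contradicting $|V(G)|\ge 3$. Setting $S':=(S\setminus\{e_1\})\cup\{f\}$, the graph $S'$ has the same size as $S$, one fewer parallel edge (since the pair $\{e_1,e_2\}$ is broken and $f$ is not a $uv$-edge), and remains 2EC: the only cuts whose size could drop below $2$ when $e_1$ is removed are the size-$2$ cuts of $S$ separating $u$ from $v$, which by the uniqueness established above all coincide with $(A,B)$, and $f$ crosses $(A,B)$ by construction. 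This contradicts the choice of $S$ as minimizing the number of parallel edges, completing the proof. The main obstacle I anticipate is the uniqueness-of-minimum-cut argument, since it is precisely this uniqueness that reduces what would otherwise be a subtle analysis over the lattice of tight $u$-$v$ cuts to the single cut condition that $f$ satisfies automatically.
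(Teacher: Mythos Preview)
The paper does not provide its own proof of this lemma; it merely cites it from \cite{GGJ23soda,HVV19}. So there is no in-paper argument to compare against, and your proposal must stand on its own.

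Your argument is correct. The structural steps---that any bridge of $S\setminus\{e_1\}$ must be $e_2$, that $(A,B)$ is the unique minimum $u$-$v$ cut of $S$, and that 2-vertex-connectivity of $G$ with $|V(G)|\ge 3$ forces a non-$uv$ edge across $(A,B)$---are all sound and cleanly justified. The 2EC verification for $S'$ is also fine: every cut crossed by $e_1$ separates $u$ from $v$, and among those only the size-$2$ ones matter, which by uniqueness equal $(A,B)$ and are repaired by $f$.

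One small point you gloss over: when you assert that $S'$ has ``one fewer parallel edge,'' you justify it by saying $f$ is not a $uv$-edge, but you should also note that $f$ cannot be parallel to \emph{any} edge of $S\setminus\{e_1\}$. This follows immediately from your own setup---$f$ crosses $(A,B)$, and the only $S$-edges crossing $(A,B)$ are $e_1,e_2$, both $uv$-edges---so $f$ has no parallel copy in $S\setminus\{e_1\}$. With that sentence added, the parallel-edge count strictly decreases and the contradiction is complete.
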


\begin{lemma}[\cite{GGJ23soda, HVV19, SV14}]\label{lem:oneCut}
    Let $G$ be a 2EC graph, $v$ be a $1$-vertex cut of $G$, and $(V_1, V_2), V_1\neq \emptyset\neq V_2$, be a partition of $V\setminus\{v\}$ such that there are no edges between $V_1$ and $V_2$. Then $\opt(G) = \opt(G_1) + \opt(G_2)$, where $G_i := G[V_i\cup\{v\}]$.
\end{lemma}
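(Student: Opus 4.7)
The plan is to prove the lemma by establishing both $\opt(G) \le \opt(G_1) + \opt(G_2)$ and $\opt(G) \ge \opt(G_1) + \opt(G_2)$. Everything is driven by one structural observation: because $G$ has no edges between $V_1$ and $V_2$, every walk in $G$ (or in any subgraph of it) that crosses between the two sides must pass through $v$. In particular, given any walk in a subgraph $H \subseteq G$ from some $u \in V_i$ to $v$, the prefix of the walk up to its \emph{first} visit of $v$ uses only edges whose endpoints lie in $V_i \cup \{v\}$, i.e.\ only edges of $G_i \cap H$. I will refer to this as the ``shortcut at $v$'' principle and invoke it several times.

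Using this principle, I would first verify that $G_i$ is itself 2EC, so that $\opt(G_i)$ is well defined. For any edge $e \in E(G_i)$ and any $u \in V_i$, 2-edge-connectivity of $G$ gives a walk from $u$ to $v$ in $G - e$; truncating this walk at the first visit to $v$ yields a walk in $G_i - e$ from $u$ to $v$, so $G_i - e$ is connected.

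For the upper bound, take optimal 2EC spanning subgraphs $S_i$ of $G_i$ and form $S := S_1 \cup S_2$, a spanning subgraph of $G$ with $|S| = |S_1| + |S_2|$. For any edge $e \in S$, say $e \in S_1$, the pieces $S_1 - e$ and $S_2$ are each connected and share the vertex $v$, so $S - e$ is connected; thus $S$ is 2EC and $\opt(G) \le \opt(G_1) + \opt(G_2)$. For the lower bound, let $S^*$ be any optimal 2EC spanning subgraph of $G$ and set $S_i := S^* \cap E(G_i)$; since no edge of $G$ crosses the partition, $\{S_1, S_2\}$ partitions $E(S^*)$ and $|S^*| = |S_1| + |S_2|$. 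Applying the shortcut principle to $S^* - e$ (for an arbitrary $e \in S_i$) and also to $S^*$ itself (to check spanning at $v$, using that $v$ must have at least one $S^*$-edge to each side or else the other side would be disconnected in $S^*$) shows each $S_i$ is a 2EC spanning subgraph of $G_i$, so $|S_i| \ge \opt(G_i)$. This gives $\opt(G) \ge \opt(G_1) + \opt(G_2)$, matching the upper bound. The whole argument hinges on the shortcut-at-$v$ observation; beyond that there is only careful bookkeeping of which walks live in which subgraph, and I do not anticipate any real obstacle.
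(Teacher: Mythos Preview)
Your argument is correct and is the standard proof of this folklore lemma. Note that the paper does not actually give its own proof of this statement: it is quoted as a known fact with citations to \cite{GGJ23soda, HVV19, SV14}, so there is nothing to compare your approach against beyond observing that what you wrote is precisely the elementary argument those references have in mind.
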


\begin{lemma}[\cite{GGJ23soda}]\label{lem:irrelevantEdge}
    Let $e = uv$ be an irrelevant edge of a 2VC simple graph $G = (V,E)$. Then there exists a minimum-size 2EC spanning subgraph of $G$ not containing $e$.
\end{lemma}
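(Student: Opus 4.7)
The plan is to begin with an arbitrary minimum 2EC spanning subgraph $S^*$ of $G$ and, assuming $uv \in S^*$, to construct $S'$ of the same size with $uv \notin S'$ by swapping $uv$ for a single edge $e' \in E(G)\setminus S^*$. Since $S^*\setminus\{uv\}$ cannot itself be 2EC (otherwise $S^*$ would not be minimum), I will first examine its block-cut tree $T$ of 2EC blocks connected by bridges. Re-adding $uv$ must merge all of $T$ into a single 2EC block, so the only possibility is that $T$ is a simple path $B_u = B_0 - B_1 - \cdots - B_m = B_v$ with $u\in B_u$ and $v\in B_v$. It then suffices to find $e' \in E(G)\setminus S^*$, $e' \neq uv$, with one endpoint in $B_u$ and the other in $B_v$: adding such $e'$ to $S^*\setminus\{uv\}$ creates a cycle spanning all of $T$, merging every $B_i$ into one 2EC block.

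Next I will exploit the 2-vertex-cut structure to pin down how the components $A_1,\dots,A_k$ ($k\geq 2$) of $G\setminus\{u,v\}$ interact with $T$. Any cycle of $S^*$ through $uv$, with $uv$ removed, is a $u$-$v$ path lying inside a single $S^*[A_j\cup\{u,v\}]$, because consecutive interior vertices of the cycle must belong to the same $A_i$. Consequently the internal blocks $B_1,\dots,B_{m-1}$ are contained in $A_j$ (they cannot contain $u$ or $v$, and cannot straddle two different $A_\ell$'s). For every other component $A_i$ ($i\neq j$) I will show that $A_i$ partitions into $A_i^u := A_i\cap B_u$ and $A_i^v := A_i\cap B_v$: if some vertex of $A_i$ were connected to both $u$ and $v$ in $S^*[A_i\cup\{u,v\}]\setminus\{uv\}$, then $A_i$ would provide a second $u$-$v$ path in $S^*\setminus\{uv\}$, edge-disjoint from the $A_j$-path, making $u$ and $v$ 2-edge-connected in $S^*\setminus\{uv\}$ and forcing $B_u = B_v$, a contradiction. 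The key technical observation is that $A_i^u$ then has no $S^*$-edge leaving $A_i^u\cup\{u\}$, so any pair of edge-disjoint paths from $x \in A_i^u$ to $u$ in $S^*$ must live inside $S^*[A_i^u\cup\{u\}]$; hence that restricted subgraph is 2-edge-connected, giving $A_i^u\cup\{u\}\subseteq B_u$, and symmetrically $A_i^v\cup\{v\}\subseteq B_v$.

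Given this structural picture the swap edge $e'$ is produced by case analysis. If some $i\neq j$ has both $A_i^u$ and $A_i^v$ nonempty, then connectedness of $A_i$ in $G$ forces an edge of $E(G)$ between the two parts, and such an edge cannot be in $S^*$ (otherwise $A_i^u$ and $A_i^v$ would merge in $S^*\setminus\{uv\}$), so it serves as $e'$. Otherwise each $A_i$ ($i\neq j$) sits entirely in $B_u$ or entirely in $B_v$, and 2VC of $G$ guarantees that every $A_i$ has in $G$ an edge to both $u$ and $v$; so a one-sided $A_i\subseteq B_u$ must have an edge to $v$ in $E(G)\setminus S^*$ (or symmetrically an edge from a one-sided $A_i\subseteq B_v$ to $u$), which is the desired $e'$. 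Setting $S':=(S^*\setminus\{uv\})\cup\{e'\}$ yields a 2EC spanning subgraph of size $|S^*|$ that avoids $uv$. The main obstacle I foresee is the structural claim that $A_i^u\cup\{u\}\subseteq B_u$: it rests on the path-confinement observation above, and without it locating the swap edge would not be well-controlled; once it is in hand the rest of the construction is a routine case analysis.
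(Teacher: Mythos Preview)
The paper does not supply its own proof of this lemma; it is quoted verbatim from \cite{GGJ23soda} and used as a black box in the reduction of Section~\ref{sec:reductionStructured}. So there is no in-paper argument to compare against.

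Your argument is correct. The key structural step --- that the bridge tree of $S^*\setminus\{uv\}$ is a path with $u$ and $v$ in its two end blocks, that a unique component $A_j$ of $G\setminus\{u,v\}$ hosts all internal blocks, and that every other $A_i$ splits between $B_u$ and $B_v$ --- is sound, and the swap edge $e'$ you produce in each case indeed lands in $E(G)\setminus S^*$ and covers every bridge of the path. Two small presentational points are worth cleaning up. First, you introduce $A_i^u$ as $A_i\cap B_u$ but then reason about it as ``the vertices of $A_i$ connected to $u$ in $S^*[A_i\cup\{u,v\}]\setminus\{uv\}$''; your path-confinement observation is precisely what shows these two descriptions coincide, so say that explicitly. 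Second, you implicitly use that $j$ is uniquely determined: if two different components $A_j,A_{j'}$ each carried a $u$--$v$ path in $S^*\setminus\{uv\}$, those paths would be edge-disjoint and force $B_u=B_v$, contradicting $m\ge 1$. Once stated, the rest of your case analysis (including the $m=1$ boundary case, where the unique bridge has at least one endpoint in $A_j$) goes through without change.
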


We will also need the following technical definitions and lemmas from~\cite{GGJ23soda}:

\begin{definition}[\cite{GGJ23soda}]
    Consider any graph $H$ and two nodes $u, v\in V(H)$. Let $H'$ be obtained by contracting each 2EC block $C$ of $H$ into a single super-node $C$, and let $C(u)$ and $C(v)$ be the super-nodes corresponding to $u$ and $v$, respectively. Then $H$ is:
    \begin{enumerate}
        \item of type $A$ (w.r.t.\ $\{u, v\}$) if $H'$ consists of a single super-node $C(u)=C(v)$;
        \item of type $B$ (w.r.t.\ $\{u, v\}$) if $H'$ is a $C(u), C(v)$-path (of length at least $1$);
        \item of type $C$ (w.r.t.\ $\{u, v\}$) if $H$ consists of two isolated super-nodes $C(u)$ and $C(v)$.  
    \end{enumerate}
\end{definition}

\begin{lemma}[\cite{GGJ23soda}]\label{lem:redTypes}
    Let $G$ be a simple 2VC graph without irrelevant edges, $\{u, v\}$ be a $2$-vertex cut of $G$, $V_1\neq \emptyset\neq V_2$, be a partition of $V(G)\setminus\{u, v\}$ such that there are no edges between $V_1$ and $V_2$, and $H$ be a 2EC spanning subgraph of $G$. Define $G_i = G[V_i\cup\{u, v\}]$ and $H_i = E(G_i)\cap H$ for $i\in \{1, 2\}$. The following claims hold:
    \begin{enumerate}
        \item Both $H_1$ and $H_2$ are of type $A$, $B$, or $C$ (w.r.t.\ $\{u, v\}$). Furthermore, if one of the two is of type $C$, then the other must be of type $A$.
        \item If $H_i$ is of type $C$, then there exists an edge $f\in E(G_i)$ such that $H_i\cup \{f\}$ is of type $B$. As a consequence, there exists a 2EC spanning subgraph $H'$ of $G$, such that  $E(G_i)\cap H'$ is of type $A$ or $B$, for $i\in\{1, 2\}$.\label{lem:redTypes:edge}
    \end{enumerate}
\end{lemma}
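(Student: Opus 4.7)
The plan is to analyze the bridge decomposition of each $H_i$, using the 2-edge-connectedness of $H$ together with the hypothesis that all $V_1$--$V_2$ edges pass through $u$ or $v$ in order to pin down its shape.

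The first observation I would record is global: for any $a\in V_i$ and any $b\in V_{3-i}$, two edge-disjoint $a,b$-paths in $H$ must each cross the cut $\{u,v\}$ on the $V_i$-side, so each $a\in V_i$ is connected to $\{u,v\}$ in $H_i$ by two edge-disjoint paths. In particular, every connected component of $H_i$ contains $u$ or $v$ (so $H_i$ has at most two components), and every vertex of $V_i$ has degree at least $2$ in $H_i$.

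For Claim~1 I would consider the bridge decomposition forest of $H_i$: delete all bridges, contract each remaining 2EC component into a super-node, and reinsert the bridges as forest edges. The central step is a \emph{leaf argument}: every leaf of this forest must be $C(u)$ or $C(v)$. A singleton leaf has a degree-$1$ vertex in $H_i$, which must lie in $\{u,v\}$ since $V_i$-vertices have degree at least $2$ in $H_i$. A non-singleton leaf block $L$ with $V(L)\subseteq V_i$ is ruled out because any $w\in V(L)$ can reach $V_{3-i}$ in $H$ only by leaving $V(L)$ through its unique external bridge, so two edge-disjoint $w$-to-$V_{3-i}$ paths would be forced to share that bridge, contradicting 2EC-ness of $H$. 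Since $u$ and $v$ each lie in a single super-node, the forest has at most two leaves and is therefore either one super-node (type~A), a $C(u)$-to-$C(v)$ path (type~B), or two disjoint singleton super-nodes (type~C). The coupling clause---$H_i$ of type~C forces $H_{3-i}$ of type~A---is a replay of the same idea: for $w\in V_i$ in $C(u)$, both edge-disjoint $w,v$-paths in $H$ must leave $V_i$ through $u$, so their $H_{3-i}$-portions give two edge-disjoint $u,v$-paths in $H_{3-i}$, placing $u$ and $v$ in a common 2EC block there.

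For Claim~2 I would first note that $G_i=G[V_i\cup\{u,v\}]$ is connected, which follows quickly from 2VC of $G$ (the vertex $u$ must have a neighbor in $V_i$ and $G[V_i\cup\{v\}]$ is connected; otherwise $\{v\}$ alone would separate $G$). When $H_i$ is of type~C, the bipartition $V(G_i)=V(C(u))\sqcup V(C(v))$ then admits a crossing edge $f\in E(G_i)$, and $H_i\cup\{f\}$ becomes type~B (a single bridge joining the two 2EC blocks). Since adding $f$ to $H$ preserves 2EC-ness and at most one of $H_1,H_2$ can be type~C by the coupling, applying this fix at most once yields the desired $H'$.

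The main obstacle is the leaf argument in Claim~1, since this is the only place that really uses the interplay between 2EC-ness of $H$ and the $V_1$/$V_2$ cut structure of $G$. Once the leaves are pinned to $\{C(u),C(v)\}$, the type classification, the coupling, and the $f$-insertion of Claim~2 all follow by short arguments.
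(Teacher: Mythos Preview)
The paper does not prove this lemma itself; it is quoted from \cite{GGJ23soda} and used as a black box, so there is no in-paper proof to compare against. Your argument is correct and is the natural one: once you know that every vertex of $V_i$ has two edge-disjoint $H_i$-paths to $\{u,v\}$, the leaf argument (any node of degree $\leq 1$ in the block--bridge forest of $H_i$ that avoids $\{u,v\}$ would be separated from $V_{3-i}$ by at most one $H$-edge, contradicting 2EC-ness of $H$) forces the forest into one of the three shapes, and the coupling and Claim~2 follow exactly as you outline. One terminological nit: by ``two disjoint singleton super-nodes'' you mean two \emph{isolated} super-nodes (each an entire tree of the forest on one node), not that the underlying 2EC blocks are single vertices of $H_i$.
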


It is not difficult to see that the reverse condition is also true:

\begin{lemma}\label{lem:connectRedTypes}
    Let $G$ be a simple 2VC graph without irrelevant edges, $\{u, v\}$ be a $2$-vertex cut of $G$, $V_1\neq \emptyset\neq V_2$, be a partition of $V(G)\setminus\{u, v\}$ such that there are no edges between $V_1$ and $V_2$. Define $G_i = G[V_i\cup\{u, v\}]$. Let $H_1, H_2$ be spanning subgraphs of $G_1, G_2$. If both $H_1$ and $H_2$ are of type $A$ or $B$ w.r.t.\ $\{u,v\}$, then $H_1\cup H_2$ is a 2EC spanning subgraph of $G$. Moreover, if $H_1$ is of type $A$ and $H_2$ is of type $C$ w.r.t.\ $\{u,v\}$, then $H_1\cup H_2$ is a 2EC spanning subgraph of $G$.
\end{lemma}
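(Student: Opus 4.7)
The plan is to prove the lemma by a routine case analysis on the types of $H_1$ and $H_2$, using the characterization that a graph is 2EC iff it is connected and contains no bridges. The key facts I would extract from the type definitions are: (i) a type $A$ subgraph is itself 2EC (its block-contraction is a single super-node, so every vertex lies in the same 2EC block); (ii) a type $B$ subgraph is connected, contains a $u, v$-path, and its only bridges lie on the block-path between $C(u)$ and $C(v)$, so removing any one of them yields two connected pieces that separate $u$ from $v$; (iii) a type $C$ subgraph has exactly two connected components, each of which is 2EC and contains exactly one of $u, v$.

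For the first claim (both $H_1$ and $H_2$ of type $A$ or $B$), I would first observe that $H := H_1 \cup H_2$ is connected: each $H_i$ is connected and spans $V_i \cup \{u, v\}$, and they share $u$ and $v$, so $H$ spans $V(G)$. To rule out bridges, I consider any edge $e \in H$ and assume by symmetry $e \in H_1$. If $e$ lies inside a 2EC block of $H_1$, then $H_1 \setminus \{e\}$, and hence $H \setminus \{e\}$, is still connected. Otherwise $H_1$ is of type $B$ and $e$ is a bridge on its block-path; then $H_1 \setminus \{e\}$ splits into two pieces, one containing $u$ and the other containing $v$, but since $H_2$ is of type $A$ or $B$ it contains a $u, v$-path, which reconnects the two pieces in $H \setminus \{e\}$.

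For the second claim ($H_1$ of type $A$, $H_2$ of type $C$), $H_1$ is a 2EC spanning subgraph of $G_1$ while $H_2$ is the disjoint union of two 2EC subgraphs $C(u)$ and $C(v)$ whose vertex sets partition $V_2 \cup \{u, v\}$. The union $H = H_1 \cup H_2$ is connected since $H_1$ contains a $u, v$-path that stitches $C(u)$ and $C(v)$ together through their respective shared vertices $u$ and $v$. For 2EC-ness, for any $e \in H_1$ the subgraph $H_1 \setminus \{e\}$ is connected by the 2EC-ness of $H_1$, and together with $C(u)$ and $C(v)$ this keeps $H \setminus \{e\}$ connected; for $e \in C(u)$ (the case $e \in C(v)$ being symmetric), $C(u) \setminus \{e\}$ is connected by the 2EC-ness of $C(u)$, and together with $H_1$ (which still contains a $u, v$-path) and $C(v)$ this keeps $H \setminus \{e\}$ connected.

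I do not expect a substantive obstacle: once the three itemized facts about the types are pinned down, the argument is a direct verification that every edge of $H$ sits on a cycle. The only point worth care is confirming that in type $B$ the bridges of $H_i$ lie exclusively on the block-path between $C(u)$ and $C(v)$, so that their removal strictly separates $u$ from $v$ inside $H_i$; this is immediate from the fact that the block-tree of a connected graph is a tree together with the definition of type $B$.
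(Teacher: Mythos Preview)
Your proposal is correct and follows essentially the same approach as the paper: both argue that a putative bridge $e$ of $H_1\cup H_2$ must be a bridge of the $H_i$ it lies in, forcing that $H_i$ to be of type~$B$, and then use the $u,v$-path in the other side to derive a contradiction. Your write-up is in fact a bit more careful than the paper's, since you explicitly verify connectivity of $H_1\cup H_2$ (particularly in the $A$+$C$ case) and spell out the structural facts about each type that the paper leaves implicit.
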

\begin{proof}
    First note if $H_1\cup H_2$ contains a bridge $e$, then $e$ must be a bridge in $H_1$ or $H_2$. To see this, note that since there are no edges between $V_1$ and $V_2$, $e$ belongs to either $H_1$ or $H_2$, so it must separate nodes of either $H_1$ or $H_2$. Say w.l.o.g.\ $e$ is a bridge in $H_1$, so that $H_1$ is of type $B$ and thus $H_2$ is of type $A$ or $B$ by assumption of the lemma. Removing $e$ splits $H_1$ in $2$ components, one containing $u$ and one containing $v$. But in $H_1\cup H_2$, there is a path between those components through $H_2$, a contradiction to the fact that $e$ is a bridge in $H_1\cup H_2$. 
\end{proof}

\begin{algorithm}
\caption{Algorithm $\RED$ that reduces from arbitrary to structured instances of 2ECSS. Here $\ALG$ is an algorithm for $\alpha$-structured instances that returns a solution of cost at most $\alpha\cdot\opt(G) - 2$ for $\alpha \geq 6/5$.}
\begin{algorithmic}[1]
    \If{$|V(G)| \leq \max\{\frac{4}{\alpha - 1}, 5\}$} 
    \State Compute $\OPT(G)$ and \textbf{return} $\OPT(G)$.\label{line:baseCaseBruteForce}
    \EndIf

    \If{$G$ has a $1$-vertex cut $v$} 
    \State Let $(V_1, V_2)$ be a partition of $V\setminus \{v\}$ such that $V_1\neq \emptyset\neq V_2$ and there are no edges between $V_1$ and $V_2$.
    \State \textbf{return} $\RED(G[V_1]\cup\{v\})\cup \RED(G[V_2]\cup\{v\})$.\label{line:oneCut}
    \EndIf

    \If{$G$ contains an edge $e$ such that $e$ is a loop or a parallel edge} 
    \State \textbf{return} $\RED(G - e)$.\label{line:parallel}
    \EndIf

    \If{$G$ contains an edge $e$ such that $e$ is an irrelevant edge} 
    \State \textbf{return} $\RED(G - e)$.\label{line:irrelevant}
    \EndIf
    
    \If{$G$ contains an $\alpha$-contractible subgraph $H$ with $|V(H)|\leq \frac{2}{\alpha - 1}$} 
    \State \textbf{return} $H\cup\RED(G|H)$.\label{line:contract}
    \EndIf

    \If{$G$ contains a non-isolating $2$-vertex cut $\{u, v\}$}\label{line:hardCase}
    \State Let $(V_1, V_2)$ be a partition of $V\setminus\{u, v\}$ such that $2\leq |V_1|\leq |V_2|$ and there are no edges between $V_1$ and $V_2$.\label{line:partition}
    \State Let $G_i := G[V_i\cup\{u, v\}]$, for $i\in\{1, 2\}$.\label{line:g1g2}
    \If{$|V(G_2)| \leq 4/(\alpha -1)$} compute $\OPT(G)$  and \textbf{return} $\OPT(G)$.\EndIf\label{line:bruteForce2}
    \If{$|V(G_1)| > 2/(\alpha -1)$}\label{line:bothBigCase}
    \State Let $G_i' := G_i|\{u, v\}$, for $i\in\{1, 2\}$.
    \State Let $S_i := \RED(G_i')$, for $i\in\{1, 2\}$.
    \State\textbf{return} $S:=S_1\cup S_2\cup F$, where $F\subseteq E(G), |F|\leq 2$ is a minimum set of edges such that $S$ is 2EC.\label{line:bothBigCase:return} \EndIf
    \State Let $\OPT_{1B}$ (resp., $\OPT_{1C}$) be a minimum spanning sugraphs of $G_1$ of type $B$ (resp., $C$) w.r.t.\ $\{u, v\}$ if any such subgraph exists and if it belongs to some 2EC spanning subgraph of $G$.
    \If{$\OPT_{1C}$ is defined and $|\OPT_{1C}|\leq |\OPT_{1B}| - 1$}\label{line:typeCcase}
        \State Let $G_2'':= (V(G_2), E(G_2)\cup\{uv\})$.
        \State Let $S_1:=\OPT_{1C}, S_2:=\RED(G_2'')\setminus\{uv\}$.
        \State \textbf{return} $S:= S_1\cup S_2\cup F$, where $F\subseteq E(G), |F|\leq 1$ is a minimum set of edges such that $S$ is 2EC.\label{line:typeCcase:return}
    \Else\label{line:typeABcase}
        \State Let $G_2''':= (V(G_2)\cup\{w\}, E(G_2)\cup\{uw, vw\})$, where $w$ is a dummy node.
        \State Let $S_1:= \OPT_{1B}, S_2:=\RED(G_2''')\setminus\{uv, vw\}$.
        \State \textbf{return} $S := S_1\cup S_2$.\label{line:typeABcase:return}
    \EndIf
    \EndIf
    \State \textbf{return} $\ALG(G)$\label{line:baseCaseAlg}.
\end{algorithmic}
\label{alg:RED}
\end{algorithm}

Our reduction is described in Algorithm~\ref{alg:RED}. In the execution of Algorithm~\ref{alg:RED}, for every $W\subseteq V(G)$ and $S\subseteq G|W$, we interpret edges of $S$ as edges of $G$ after decontracting $W$. The proof of Lemma~\ref{lem:preprocessing} is implied by the following Lemmas~\ref{lem:redCorrectness},~\ref{lem:redRunningTime} and~\ref{lem:redCost}.

For the analysis of the algorithm, in particular when the condition at Line~\ref{line:hardCase} is met, it is convenient to define $\OPT_{1A}$ as the minimum spanning subgraph of $G_1$ of type $A$ w.r.t.\ $\{u, v\}$, if it exists (similarly as $\OPT_{1B}$ and $\OPT_{1C}$). We also define $\OPT_{2A}, \OPT_{2B}, \OPT_{2C}$ as the minimum spanning subgraph of $G_2$ of type $A, B$ or $C$, respectively, if it exists. We also let $\opt_{1A} := |\OPT_{1A}|$ if $\OPT_{1A}$ is defined, and $\opt_{1A} := \infty$ otherwise. We define analogously $\opt_{1B}, \opt_{1C}, \opt_{2A}, \opt_{2B}$ and $\opt_{2C}$. Since there are no edges between $V_1$ and $V_2$, $\OPT = (\OPT\cap G_1)\dot\cup (\OPT\cap G_2)$. Consequently, we define $\OPT_i$ as $\OPT\cap G_i$ and we let $\opt_i = |\OPT_i|$, for $i\in\{1, 2\}$. Notice that $\opt(G)= \opt_1 + \opt_2$, $\opt_i \geq |V(G_i)| - 1 \geq 3$, and, by Lemma~\ref{lem:redTypes}, $\OPT_i$ is of type $A, B$ or $C$, so $\opt_i \geq \min\{\opt_{iA}, \opt_{iB}, \opt_{iC}\}$, for $i\in\{1, 2\}$. In Line~\ref{line:partition} we use the following simple lemma.

\begin{lemma}\label{lem:redPartition}
    Let $\{u, v\}$ be a non-isolating $2$-vertex cut of a graph $G$. Then, if $|V(G)|\geq 6$, in polynomial time one can find a partition $(V_1, V_2)$ of $V\setminus\{u, v\}$ such that $2\leq |V_1|\leq |V_2|$ and there are no edges between $V_1$ and $V_2$.
\end{lemma}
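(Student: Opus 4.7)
The plan is to exploit the fact that the connected components of $G\setminus\{u,v\}$ already ``respect'' the no-edges condition: any partition of $V\setminus\{u,v\}$ obtained by assigning each component entirely to one side automatically has no edges between $V_1$ and $V_2$ in $G$, since all edges of $G$ whose endpoints both avoid $\{u,v\}$ have both endpoints within a single component of $G\setminus\{u,v\}$. Thus, after computing the components $K_1,\dots,K_k$ of $G\setminus\{u,v\}$ in polynomial time, the problem reduces to selecting a subset $I\subseteq\{1,\dots,k\}$ with $2\le\sum_{i\in I}|K_i|\le n'/2$, where $n':=|V(G)|-2\ge 4$. Note $k\ge 2$ because $\{u,v\}$ is a cut.

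First I would dispose of the case in which some component $K_i$ has size in $[2,n'/2]$: simply pick $V_1:=V(K_i)$. Otherwise every component has either size $1$ or size strictly greater than $n'/2$. Since the sizes sum to $n'$, at most one component can exceed $n'/2$; call it $K^*$ if it exists. If $K^*$ exists, I would set $V_1$ to be the union of all components other than $K^*$; otherwise all components are singletons and $V_1$ can be taken as the vertex set of any two of them, which gives $|V_1|=2\le n'/2$ since $n'\ge 4$.

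The one step that requires care, and which is really the only place the hypothesis is used, is checking that the ``$V_1$ equals everything except $K^*$'' choice satisfies $|V_1|\ge 2$. Here I would invoke the assumption that $\{u,v\}$ is a \emph{non-isolating} $2$-vertex cut: if the components other than $K^*$ had total size at most $1$, then either $k=1$ (contradicting that $\{u,v\}$ is a cut) or $k=2$ with one component of size exactly $1$, which is precisely an isolating cut, contradicting the hypothesis. Hence $|V_1|\ge 2$, while $|V_1|\le n'/2$ follows from $|K^*|>n'/2$. All the steps (computing connected components, classifying sizes, and forming $V_1$) are clearly polynomial, completing the argument.
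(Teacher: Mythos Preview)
Your proof is correct and follows essentially the same approach as the paper: both compute the connected components of $G\setminus\{u,v\}$ and assign whole components to the two sides, using the non-isolating hypothesis and $|V(G)|\ge 6$ to guarantee both sides have size at least $2$. The only difference is organizational---the paper does a case split on the number $k$ of components ($k=2$, $k=3$, $k\ge 4$) after sorting by size, whereas you split on whether some component has size in $[2,n'/2]$ and then on whether a large component $K^*$ exists; neither organization is materially simpler or more general than the other.
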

\begin{proof}
    Since $\{u, v\}$ is a $2$-vertex cut, removing $\{u, v\}$ from $G$ splits the graph into $k$ connected components $C_1,\dots, C_k$ with $k\geq 2$ and $|V(C_1)|\leq |V(C_2)|\leq\dots\leq |V(C_k)|$. These components can be found in polynomial time. Moreover, since $\{u, v\}$ is non-isolating, if $k=2$ then $|V(C_1)|\geq 2$. If $k=2$, the partition $(V_1, V_2) =  (V(C_1), V(C_2))$ is of the desired type. If $k\geq 4$, the partition $(V_1, V_2) = (V(C_1)\cup V(C_2), \bigcup_{i=3}^k V(C_k))$ is of the desired type. Otherwise, i.e. $k = 3$, the partition $(V_1, V_2) = (V(C_1)\cup V(C_2), V(C_3))$ is of the desired type, because $|V(C_3)|\geq 2$, otherwise $|V(G)|\leq 5$, a contradiction to the assumption of the lemma.
\end{proof}

\begin{lemma}\label{lem:redCorrectness}
    Algorithm $\RED$ returns a 2EC spanning subgraph of $G$ if $\ALG$ does so. 
\end{lemma}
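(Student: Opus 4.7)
The plan is to prove Lemma~\ref{lem:redCorrectness} by induction on the lexicographic pair $(|V(G)|, |E(G)|)$. In each branch of Algorithm~\ref{alg:RED}, either the returned subgraph is directly 2EC, or it is assembled from 2EC subgraphs of strictly smaller instances (available via the inductive hypothesis) in a way that yields a 2EC spanning subgraph of $G$. The induction is well-founded since each recursive call is made on an instance with strictly fewer vertices, or on one with the same vertices but strictly fewer edges.

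For the easy cases I would proceed as follows. The base cases at Lines~\ref{line:baseCaseBruteForce} and~\ref{line:bruteForce2} return $\OPT(G)$ directly, which is 2EC since $G$ is 2EC; the final call to $\ALG$ at Line~\ref{line:baseCaseAlg} returns a 2EC spanning subgraph by hypothesis. For Line~\ref{line:oneCut} (1-vertex cut $v$), the union of 2EC spanning subgraphs of $G[V_1\cup\{v\}]$ and $G[V_2\cup\{v\}]$ is 2EC in $G$: after removing any edge, the affected side remains 2EC, and the two sides meet at $v$. Lines~\ref{line:parallel} and~\ref{line:irrelevant} follow from Lemmas~\ref{lem:parallel} and~\ref{lem:irrelevantEdge}, which guarantee $G-e$ still admits a 2EC spanning subgraph that is also 2EC in $G$. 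Line~\ref{line:contract} is an immediate application of Fact~\ref{fact:uncontract}: since $H$ is 2EC and $\RED(G|H)$ is 2EC in $G|H$ by the inductive hypothesis, their union is 2EC in $G$.

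The hard part will be Line~\ref{line:hardCase}, the non-isolating 2-vertex cut case, which splits into three returning branches:
\begin{itemize}\itemsep0pt
    \item Line~\ref{line:bothBigCase:return}: By induction $S_i = \RED(G_i')$ is 2EC in $G_i'= G_i|\{u,v\}$, so after uncontracting, $S_i$ is of type $A$, $B$, or $C$ w.r.t.\ $\{u,v\}$ in $G_i$. Lemma~\ref{lem:connectRedTypes} covers the combinations $A+A$, $A+B$, $B+B$, $A+C$ with $F=\emptyset$, while any type-$C$ side can be converted to type $B$ by a single edge supplied by Lemma~\ref{lem:redTypes}.\ref{lem:redTypes:edge}. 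Hence an $F$ with $|F|\leq 2$ making $S_1 \cup S_2 \cup F$ 2EC always exists.
    \item Line~\ref{line:typeCcase:return}: $S_1 = \OPT_{1C}$ is type $C$ by construction. I would argue that $S_2 := \RED(G_2'') \setminus \{uv\}$ is of type $A$ or $B$ in $G_2$: if it were type $C$, then $uv$ would be a bridge in $\RED(G_2'')$, contradicting its 2EC property. Lemma~\ref{lem:connectRedTypes} then handles $A + C$ with $F=\emptyset$, while for $B + C$ one additional edge converts $S_1$ to type $B$ via Lemma~\ref{lem:redTypes}.\ref{lem:redTypes:edge}, so $|F|\leq 1$ suffices.
    \item Line~\ref{line:typeABcase:return}: The dummy node $w$ has degree $2$ in $G_2'''$, so any 2EC spanning subgraph of $G_2'''$ must contain both $uw$ and $vw$; after removing them the resulting $S_2$ is, by the same bridge-contradiction argument, of type $A$ or $B$ in $G_2$. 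Combined with $S_1 = \OPT_{1B}$ (type $B$), Lemma~\ref{lem:connectRedTypes} directly yields 2EC.
\end{itemize}

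The delicate point throughout is the correct type-classification (with respect to $\{u,v\}$) of every recursively-returned subgraph, and in particular ruling out type $C$ after the auxiliary edge $uv$ or dummy node $w$ is stripped away; once this is in hand, the structural Lemmas~\ref{lem:redTypes} and~\ref{lem:connectRedTypes} glue the pieces together into the required 2EC spanning subgraph of $G$.
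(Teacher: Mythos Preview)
Your overall structure---induction over the recursion, classifying the recursively returned pieces into types $A/B/C$, and gluing via Lemmas~\ref{lem:redTypes} and~\ref{lem:connectRedTypes}---matches the paper's proof. Two points are missing, however, and the second is a genuine gap.

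First, before you can invoke the inductive hypothesis on $\RED(G_2'')$ or $\RED(G_2''')$, you must check that $G_2''$ and $G_2'''$ are themselves 2EC; otherwise the recursive call need not return anything meaningful. This amounts to showing that $G_2$ admits a spanning subgraph of type $A$ or $B$. The paper derives this from Lemma~\ref{lem:redTypes} (for $G_2''$, using that $\OPT_{1C}$ is defined in that branch) and from its Claim~\ref{claim:redNoTypeA} (for $G_2'''$). This step is recoverable with a direct argument---since $G$ is 2VC, every leaf of the block tree of $G_2$ must contain $u$ or $v$, so $G_2$ itself is of type $A$ or $B$---but you should state it.

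Second, and more seriously, at Line~\ref{line:typeABcase:return} you set $S_1 := \OPT_{1B}$ without arguing that $\OPT_{1B}$ is defined. This is \emph{not} automatic: if $G_1$ were the $4$-cycle $u\,a\,v\,b$ (note $uv\notin E(G_1)$ since irrelevant edges are gone), then every proper connected spanning subgraph is a $3$-edge path with $u$ or $v$ internal, so $G_1$ has no spanning subgraph of type $B$, and it has no spanning subgraph of type $C$ either. The paper rules out such configurations via Claim~\ref{claim:redNoTypeA}, whose proof crucially uses that Line~\ref{line:contract} was skipped, i.e.\ that $G$ has no $\alpha$-contractible subgraph on at most $2/(\alpha-1)$ vertices: the hypothetical $4$-cycle $G_1$ would itself be $\alpha$-contractible and hence already eliminated. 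Your proposal never invokes $\alpha$-contractibility, so this obstruction---and with it the well-definedness of $S_1$ in the final branch---is left unhandled.
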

\begin{proof}
    We prove this by induction on $|E(G)|$. The base cases are given by Lines~\ref{line:baseCaseBruteForce} and~\ref{line:baseCaseAlg}. If Lines~\ref{line:parallel},~\ref{line:oneCut} and~\ref{line:irrelevant} are executed, the claim follows by induction and Lemmas~\ref{lem:parallel},~\ref{lem:oneCut} and~\ref{lem:irrelevantEdge}, resp. If Line~\ref{line:contract} is executed, the claim follows by induction, Facts~\ref{fact:contract} and~\ref{fact:uncontract}, and the fact that $H$ is 2EC by the definition of $\alpha$-contractible. From now on we focus on the case where the condition at Line~\ref{line:hardCase} is met. Notice that a partition as in Line~\ref{line:partition} exists by Lemma~\ref{lem:redPartition}, because since the condition at Line~\ref{line:baseCaseBruteForce} was not met, $|V(G)|\geq 6$. We will need the following claim:
                            
    \begin{claim}\label{claim:redNoTypeA}
        If $|V(G_1)|\leq 2/(\alpha - 1)$, there exists an optimum solution $\OPT$ such that $\OPT_1$ is of type $B$ or $C$. 
    \end{claim}
    \begin{proof}
        Assume $\OPT_1$ is of type $A$, otherwise the claim holds. If $\opt_{1A} \geq \opt_{1B} + 1$ then it must be that $\OPT_{2}$ is of type $C$, otherwise by Lemma~\ref{lem:connectRedTypes} $\OPT_{1B}\cup\OPT_2$ is a an optimum solution of cardinality less than $\OPT_1\cup\OPT_2$, a contradiction. Since by Lemma~\ref{lem:redTypes},  $\opt_{2B}\leq \opt_{2C} + 1$, one has that $|\OPT_{1B}\cup\OPT_{2B}| = \opt_{1B} + \opt_{2B}\leq \opt_{1A} - 1 + \opt_{2C} + 1 = |\OPT_{1A}\cup\OPT_{2C}|$, and thus by Lemma~\ref{lem:connectRedTypes} $\OPT_{1B}\cup\OPT_{2B}$ is an optimum solution such that $\OPT_1$ is of type $B$. 
        
        Assume now $\opt_{1A} \leq \opt_{1B}$. Since $|V(G_1)|\leq 2/(\alpha - 1)$, and Line~\ref{line:contract} was not executed, $\OPT_{1A}$ is not an $\alpha$-contractible subgraph of $G$. By Lemma~\ref{lem:redTypes}, this together with the fact that $\OPT_{1A}$ is a 2EC subgraph of $G$, implies that $\opt_{1A} > \alpha\cdot\min\{\opt_{1B}, \opt_{1C}\}$. Since we assumed $\opt_{1A} \leq \opt_{1B}$, one has that $\opt_{1A} > \alpha\cdot\opt_{1C}$, and by Lemma~\ref{lem:redTypes}, $\alpha \cdot\opt_{1C} < \opt_{1A} \leq \opt_{1B} \leq \opt_{1C} + 1$, which implies that $\opt_{1A} = \opt_{1B} = \opt_{1C} + 1$ and $\opt_{1C} < 1/(\alpha - 1) \leq 5$, because we have $\alpha\geq 6/5$. Since $\opt_{1C}\geq 3$, it must be that $\opt_{1C}\in \{3, 4\}$.
    
        Recall that $\OPT_{1C}$ consists of two 2EC components, one containing $u$ and one containing $v$. Those components either consist of a single node (namely, $u$ or $v$) or have at least $3$ nodes and consequently $3$ edges. Since $\opt_{1C}\leq 4$, it must be that $\OPT_{1C}$ consist of a component containing a single node (assume w.l.o.g.\ $v$) and a 2EC component containing $u$. Notice that by the minimality of $\OPT_{1C}$, the component containing $u$ must be a cycle $ux_1\dots x_k, k\in \{2, 3\}$. Since $G_1$ contains a subgraph of type $A$, there must be at least two edges from $v$ to $x_i, i\in [1, k]$, so at least one of them is incident to $x_1$ or $x_k$, say w.l.o.g.\ to $x_1$. Then the path $vx_1\dots x_ku$ implies $\opt_{1B} = \opt_{1C}$, a contradiction.
    \end{proof}

    Consider first the case when Line~\ref{line:bothBigCase:return} is executed. By Fact~\ref{fact:contract}, $G_i'=G_i|\{u, v\}$ is 2EC, for $i\in\{1, 2\}$, and it has strictly less edges than $G$, so $\RED(G_i')$ is 2EC by induction (when $\RED(G_i')$ is interpreted as a subgraph of $G_i')$. Notice that this implies that $\RED(G_i')$ is of type $A, B$ or $C$ w.r.t.\ $\{u, v\}$ (when $\RED(G_i')$ is interpreted as a subgraph of $G$ after decontracting $\{u, v\}$). Thus, $S_i$ is of type $A, B$ or $C$, for $i\in\{1, 2\}$. If $S_i$ is of type $C$, then by Lemma~\ref{lem:redTypes}, there is an edge $e_i$ such that $S_1\cup\{e_i\}$ is of type $B$, for $i\in\{1, 2\}$. Therefore, there is a set of edges $F\subseteq E(G)$, $|F|\leq 2$, such that $S_1\cup S_2\cup F = S_1'\cup S_2'$, where $S_i'$ is a spanning subgraph of $G_i$ of type $A$ or $B$, for $i\in \{1, 2\}$. Thus, $S_1\cup S_2\cup F$ is a 2EC spanning subgraph of $G$ by Lemma~\ref{lem:connectRedTypes}.

    Consider now the case when Line~\ref{line:typeCcase:return} is executed. 
    By Lemma~\ref{lem:redTypes}, $G_2$ has a subgraph of type $A$ or $B$ w.r.t.\ $\{u, v\}$ and hence $G_2$ itself is of type $A$ or $B$. This implies that $G_2'' = (V(G_2), E(G_2)\cup\{uv\})$ is 2EC, and since it has strictly less edges than $G$, by induction $\RED(G_2'')$ is 2EC. Notice that this implies $\RED(G_2'')\setminus\{uv\}$ is of type $A$ or $B$ w.r.t.\ $\{u, v\}$. By Lemma~\ref{lem:redTypes} there exists an edge $e$ such that $S_1\cup\{e\}$ is of type $B$. Therefore, there is a set of edges $F\subseteq E(G), |F|\leq 1$, such that $S_1\cup S_2\cup F = S_1'\cup S_2$, where $S_1'$ is a spanning subgraph of $G_1$ of type $A$ or $B$ w.r.t.\ $\{u, v\}$. Thus, $S_1\cup S_2\cup F$ is a 2EC spanning subgraph of $G$ by Lemma~\ref{lem:connectRedTypes}.

    Finally, assume that  Line~\ref{line:typeABcase:return} is executed. Since the condition at Line~\ref{line:bothBigCase} is not met, $|V(G_1)|\leq 2/(\alpha - 1)$. Thus we can assume w.l.o.g.\ $\OPT_1$ is of type $B$ or $C$ by Claim~\ref{claim:redNoTypeA}. By Lemma~\ref{lem:redTypes}, if $\OPT_{1C}$ is defined, $\OPT_{1B}$ must also be defined, so since $\OPT_1$ is of type $B$ or $C$, $S_1 = \OPT_{1B}$ is defined. Moreover, since $\OPT_1$ is of type $B$ or $C$, $\OPT_2$ is not of type $C$ by Lemma~\ref{lem:redTypes}. This implies that $G_2$ contains a subgraph of type $A$ or $B$ w.r.t.\ $\{u, v\}$. This implies $G_2''' = (V(G_2)\cup\{w\}, E(G_2)\cup\{uv, vw\})$ is 2EC, and since it has strictly less edges than $G$, by induction $\RED(G_2''')$ is 2EC. Notice that this implies $\RED(G_2''')\setminus\{uv, vw\}$ is of type $A$ or $B$ w.r.t.\ $\{u, v\}$. Since $S_1$ is of type $B$ and $S_2$ is of type $A$ or $B$, $S_1\cup S_2$ is a 2EC spanning subgraph of $G$ by Lemma~\ref{lem:connectRedTypes}. 
\end{proof}

\begin{lemma}\label{lem:redRunningTime}
    Algorithm $\RED$ runs in polynomial time in $|V(G)|$ if $\ALG$ does so. 
\end{lemma}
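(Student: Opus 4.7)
\noindent\textit{Proof plan.} I would prove the bound by strong induction on the polynomial potential $\Phi(G)=|V(G)|^2+|E(G)|$, showing that every recursive call strictly decreases $\Phi$ while the per-call work is polynomial.

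First, I would check that every primitive step runs in polynomial time. Standard algorithms detect $1$-vertex cuts, $2$-vertex cuts, non-isolating $2$-vertex cuts, loops, and parallel edges. Irrelevant edges are identified by testing, for each edge $uv$, whether $\{u,v\}$ is a $2$-vertex cut. The brute-force computations on Lines~\ref{line:baseCaseBruteForce} and~\ref{line:bruteForce2}, and of $\OPT_{1B},\OPT_{1C}$ when $|V(G_1)|\le 2/(\alpha-1)$, run on graphs of constant size since $\alpha\ge 6/5$ is a constant. The sets $F$ with $|F|\le 2$ of Lines~\ref{line:bothBigCase:return} and~\ref{line:typeCcase:return} are found by enumerating $O(|E(G)|^2)$ candidate pairs, and $\ALG(G)$ is polynomial by assumption. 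The only subtle primitive is detecting an $\alpha$-contractible subgraph $C$ with $|V(C)|\le 2/(\alpha-1)$: I would enumerate all $O(n^{2/(\alpha-1)})$ candidate vertex subsets and, for each resulting $C$, test $\alpha$-contractibility by enumerating every subset $F\subseteq E(G[V(C)])$ (a constant number, since $|E(G[V(C)])|$ is bounded) and checking whether $F\cup(E(G)\setminus E(G[V(C)]))$ is a $2$EC spanning subgraph of $G$. The minimum such $|F|$ equals $\min\{|H\cap E(G[V(C)])|:H\text{ is a 2EC spanning subgraph of }G\}$, and $C$ is $\alpha$-contractible iff this minimum is at least $|E(C)|/\alpha$.

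Next, I would verify the decrease of $\Phi$ on each recursive call. Lines~\ref{line:parallel} and~\ref{line:irrelevant} drop one edge without changing $|V|$; Line~\ref{line:contract} contracts a $2$EC subgraph on $|V(H)|\ge 3$ vertices, shrinking $|V|$ by at least two; Lines~\ref{line:typeCcase:return} and~\ref{line:typeABcase:return} recurse on graphs with vertex counts $|V_2|+2$ and $|V_2|+3$, both strictly less than $|V(G)|=|V_1|+|V_2|+2$ since $|V_1|\ge 2$ by Line~\ref{line:partition}. For the two binary splits, I would show $\Phi(G_1)+\Phi(G_2)<\Phi(G)$: on Line~\ref{line:oneCut}, writing $n_i=|V_i|$ with $n_1+n_2=n-1$ and $n_i\ge 1$, we have $(n_1+1)^2+(n_2+1)^2\le (n-1)^2+4<n^2$ for $n\ge 3$, while the edges partition across $V_1\cup\{v\}$ and $V_2\cup\{v\}$ since $v$ is a $1$-cut; on Line~\ref{line:bothBigCase:return}, $|V(G_1')|+|V(G_2')|=n$ with both sides at least $3$, so $(n_1+1)^2+(n_2+1)^2<n^2$, and contracting $\{u,v\}$ into a single node does not grow the total edge count. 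Since $\Phi^k$ is superadditive for $k\ge 1$, the ansatz $T(\Phi)\le c\,\Phi^k$ is preserved across both unary and binary recursions, closing the induction to $T(G)=\poly(\Phi(G))=\poly(|V(G)|)$.

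The main obstacle is the $\alpha$-contractibility test, whose definition quantifies over all $2$EC spanning subgraphs of $G$; the resolution is that bounded $|V(C)|$ reduces the test to constantly many standard $2$EC-spanning checks on $G$, one per subset of $E(G[V(C)])$.
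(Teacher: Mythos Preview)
Your proof is correct and takes a somewhat different route from the paper. The paper argues per-call polynomiality in essentially the same way you do (including the $\alpha$-contractibility test via enumeration of constant-size subgraphs), but for the recursion bound it works directly with a recurrence $f(n)$ in the vertex count, writing out the inequalities $f(n)\le f(n_1-1)+f(n_2-1)+\poly(n)$, $f(n)\le f(n-2)+\poly(n)$, and $f(n)\le f(n-1)+\poly(n)$ for the three sub-branches of Line~\ref{line:hardCase} and asserting these are polynomially bounded. Your potential $\Phi(G)=|V(G)|^2+|E(G)|$ is a cleaner bookkeeping device: it uniformly handles the edge-deleting steps (Lines~\ref{line:parallel} and~\ref{line:irrelevant}, where $|V|$ is unchanged but $|E|$ drops) together with the vertex-shrinking and splitting steps, and the superadditivity of $\Phi\mapsto\Phi^k$ lets you close the induction for both unary and binary branches in one stroke. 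The paper's recurrence on $n$ alone leaves those edge-deleting steps implicit; your approach makes them explicit at the cost of a slightly heavier invariant. One small point worth stating: in the Line~\ref{line:bothBigCase:return} edge count you use $|E(G_1')|+|E(G_2')|\le |E(G)|$, which relies on $uv\notin E(G)$; this holds because any such edge would be irrelevant and hence already removed before Line~\ref{line:hardCase} is reached.
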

\begin{proof}
    Let $n = |V(G)|$. Let us first show that each recursive call, excluding the corresponding subcalls, can be executed in polynomial time. $1$-vertex cuts, parallel edges or loops can easily be found in polynomial time. Irrelevant edges can also be found in polynomial time by enumerating all $2$-vertex cuts. To find $\alpha$-contractible graphs of size at most $2/(\alpha - 1)$, we start by enumerating all subgraphs of $G$ of size at most $2/(\alpha - 1)$, which can be done in polynomial time since $\alpha\geq 6/5$. Now, such a subgraph $H$ is $\alpha$-contractible if and only if $H$ is 2EC and for all sets $H'\subseteq G[V(H)]$ such that $|H'| < |H|/\alpha$, $(G\setminus G[V(H)])\cup H'$ is not 2EC. Since $\alpha\geq 6/5$, $|H|\leq 2/(\alpha - 1)\leq 10$, so every such subset can be found in constant time by exhaustive enumeration, and we can check if a graph is 2EC in polynomial time. 
        
    The partition $(V_1, V_2)$ in Line~\ref{line:partition} can be found in polynomial time via Lemma~\ref{lem:redPartition}, because since the condition at Line~\ref{line:baseCaseBruteForce} was not met, $|V(G)|\geq 6$. The sets $F$ at Line~\ref{line:bothBigCase:return} and Line~\ref{line:typeCcase:return} can be found in polynomial time by trying all sets of edges of size at most $2$ and $1$, respectively, and checking if the resulting graph is 2EC, which can be done in polynomial time.

    It is left to bound the recursive calls when the condition at Line~\ref{line:hardCase} is met. Let $n_1 = |V(G_1)|$ and $n_2 = |V(G_2)|$. Notice that $n_1 + n_2 = n + 2$. Since $|V_1| \leq |V_2|$, $n_1\leq (n - 2)/2 + 2 = n/2 + 1$, and since $|V_1|\geq 2$, $n_1\geq 4$ and thus $n_2\leq n-2$. Also, notice that $|V(G_i')| = n_i - 1$, for $i\in\{1, 2\}$, $|V(G_2'')| = n_2$ and $|V(G_2''')| = n_2 + 1$. Since $\alpha\geq 6/5$, $\OPT_{1B}$ and $\OPT_{1C}$ can be computed in constant time when $|V(G_1)|\leq 2/(\alpha - 1)$. Let $f(n)$ be the running time of our algorithm. If Line~\ref{line:bothBigCase}, is executed, $f(n)\leq f(n_1 - 1) + f(n_2 - 1) + \poly(n)\leq \max_{k\in [4, n/2 + 1]}\{f(k - 1) + f(n - k + 1)\} + \poly(n)$. If Line~\ref{line:typeCcase:return} is executed $f(n)\leq f(n_2) +\poly(n)\leq f(n -2) + \poly(n)$.  If Line~\ref{line:typeABcase:return} is executed, $f(n)\leq f(n_2 + 1) + \poly(n)\leq f(n - 1) + \poly(n)$. In all cases $f(n)$ is polynomially bounded.
\end{proof}

\begin{lemma}\label{lem:redCost}
$|\RED(G)|\leq\begin{cases}
    \opt(G), \quad & \text{if } |V(G)|< \frac{4}{\alpha-1};\\
    \alpha\cdot \opt(G) - 2, \quad & \text{if } |V(G)|\geq \frac{4}{\alpha-1}.
\end{cases}$
\end{lemma}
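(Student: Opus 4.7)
The plan is to prove the lemma by strong induction on the pair $(|V(G)|, |E(G)|)$ in lexicographic order, since Lines~\ref{line:parallel} and~\ref{line:irrelevant} strictly reduce $|E(G)|$ while preserving $|V(G)|$. I will walk through each return path of Algorithm~\ref{alg:RED} and verify the claimed bound.

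The base cases handle Lines~\ref{line:baseCaseBruteForce} and~\ref{line:baseCaseAlg}. At Line~\ref{line:baseCaseBruteForce} we return an optimum, so $|\RED(G)|=\opt(G)$; this matches the first bound directly, and in the regime $|V(G)|\geq 4/(\alpha-1)$ it also meets the second since $\opt(G)\geq|V(G)|\geq 4/(\alpha-1)>2/(\alpha-1)$ forces $\opt(G)\leq\alpha\opt(G)-2$. At Line~\ref{line:baseCaseAlg} I will verify that none of the triggers for Lines~\ref{line:oneCut},~\ref{line:parallel},~\ref{line:irrelevant},~\ref{line:contract},~\ref{line:hardCase} having fired implies, together with $|V(G)|>\max\{4/(\alpha-1),5\}$, that $G$ satisfies every condition of being $\alpha$-structured, so the assumed bound on $\ALG$ applies.

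The easy recursive branches are Lines~\ref{line:oneCut},~\ref{line:parallel},~\ref{line:irrelevant}, and~\ref{line:contract}. Lines~\ref{line:parallel} and~\ref{line:irrelevant} are immediate because Lemmas~\ref{lem:parallel} and~\ref{lem:irrelevantEdge} give $\opt(G-e)=\opt(G)$ and $|V(G-e)|=|V(G)|$, so the inductive bound transfers verbatim. For Line~\ref{line:oneCut} I will use Lemma~\ref{lem:oneCut} to decompose $\opt(G)=\opt(G_1)+\opt(G_2)$ and branch on whether each $|V(G_i)|$ is below or above $4/(\alpha-1)$; in the ``both small'' case the induction only gives $\leq\opt(G)$, which is then boosted to $\alpha\opt(G)-2$ using $\opt(G)\geq|V(G)|\geq 4/(\alpha-1)$; in the ``both large'' case the two additive $-2$'s over-pay; the mixed case interpolates. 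For Line~\ref{line:contract} the definition of $\alpha$-contractibility gives $\opt(G)\geq\opt(G|H)+|E(H)|/\alpha$; combining this with $|\RED(G)|=|\RED(G|H)|+|E(H)|$ yields both target bounds once I check that $|V(G)|>4/(\alpha-1)$ and $|V(H)|\leq 2/(\alpha-1)$ force $\opt(G|H)\geq|V(G|H)|\geq 2/(\alpha-1)$, enabling the boost step when $|V(G|H)|<4/(\alpha-1)$.

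The main obstacle is the non-isolating 2-vertex cut branch (Line~\ref{line:hardCase}), which splits into the brute-force escape (Line~\ref{line:bruteForce2}), the ``both-sides-large'' branch (Line~\ref{line:bothBigCase}), and the two type-decomposition branches (Lines~\ref{line:typeCcase:return} and~\ref{line:typeABcase:return}). The framework will be: write $\opt(G)=\opt_1+\opt_2$ and use Lemma~\ref{lem:redTypes} to express $\opt_i\geq\min\{\opt_{iA},\opt_{iB},\opt_{iC}\}$ subject to the forbidden $(C,C)$ combination; then bound $\opt(G_i')\leq\opt_{iA}$ (by contracting a type-$A$ optimum, possibly losing parallel edges), $\opt(G_2'')\leq\min(\opt_{2A},\opt_{2B}+1)$ (by either reusing a type-$A$ optimum or promoting type-$B$ via the extra $uv$ edge), and $\opt(G_2''')\leq\min(\opt_{2A},\opt_{2B})+2$ (the two dummy edges $uw,vw$ always suffice). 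Combining these upper bounds with the inductive bound $\alpha\cdot\opt(\cdot)-2$ on each recursive call and with the $|F|\leq 2$ overhead (or $|F|\leq 1$, or $F=\emptyset$) produces the target $\alpha\opt(G)-2$; Claim~\ref{claim:redNoTypeA} is crucial in Line~\ref{line:typeABcase:return} to restrict $\OPT_1$ to type $B$ or $C$, and Lemma~\ref{lem:redTypes}'s inequality $\opt_{iB}\leq\opt_{iC}+1$ closes the accounting in Line~\ref{line:typeCcase:return}. The delicate point is that the ``both-sides-large'' subcase has two recursive $-2$ terms against at most $+|F|\leq 2$, which lands on exactly $-2$ with no slack: this forces me to check that the subtlety of $|V(G_1')|$ possibly falling below $4/(\alpha-1)$ is resolved by the condition $|V(G_1)|>2/(\alpha-1)$ plus the boost argument from the easy recursive cases, and that contracting $\{u,v\}$ never destroys the $\opt(G_i')\leq\opt_{iA}$ relation; these are where the proof will demand the most care.
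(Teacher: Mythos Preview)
Your overall structure matches the paper's proof closely, and most of your plan would go through. There are, however, two genuine gaps in the accounting for the non-isolating $2$-vertex-cut branch.

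\textbf{First gap (Line~\ref{line:bothBigCase}).} You propose to bound $\opt(G_i')\leq\opt_{iA}$ only. This is too weak and can even be vacuous: if $G_i$ admits no type-$A$ spanning subgraph then $\opt_{iA}=\infty$, and even when finite it may exceed $\opt_i$ (e.g.\ when $\OPT_i$ is of type $B$ with $\opt_{iB}<\opt_{iA}$). With only $\opt(G_i')\leq\opt_{iA}$ your sum becomes $|S_1|+|S_2|+|F|\leq \alpha(\opt_{1A}+\opt_{2A})-2$, and there is no reason this is at most $\alpha(\opt_1+\opt_2)-2$. The fix, which the paper carries out in its Claim~\ref{claim:redBounds}, is to observe that contracting $\{u,v\}$ in a type-$B$ or type-$C$ subgraph of $G_i$ \emph{also} yields a 2EC spanning subgraph of $G_i'$ (a path of blocks closes into a cycle of blocks; two disjoint 2EC pieces glued at one vertex stay bridgeless). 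This gives $\opt(G_i')\leq\min\{\opt_{iA},\opt_{iB},\opt_{iC}\}\leq\opt_i$, and then your tight $-2-2+|F|\leq -2$ computation goes through.

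\textbf{Second gap (Line~\ref{line:typeCcase:return}).} Merely using $|F|\leq 1$ is not enough. If you bound $|S_2|\leq|\RED(G_2'')|$ and add $|F|\leq 1$ separately, then in the subcase where $\OPT_1$ is of type $B$ you obtain $|S_1|+|S_2|+|F|\leq(\opt_1-1)+\alpha(\opt_2+1)-2+1$, which simplifies to $\alpha\opt(G)-2+(\alpha-1)(1-\opt_1)+1$ and fails for small $\opt_1$ (e.g.\ $\opt_1=3$, $\alpha=6/5$). The paper instead proves the sharper inequality $|S_2|+|F|\leq|\RED(G_2'')|$: if $\RED(G_2'')\setminus\{uv\}$ is of type $A$ then already $S_1\cup S_2$ is 2EC so $F=\emptyset$; if it is of type $B$ then $uv$ must lie in $\RED(G_2'')$ (else it would not be 2EC), so removing $uv$ saves one edge that exactly pays for $|F|\leq 1$. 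You need this coupling to make the arithmetic close.

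Both issues are local repairs rather than structural problems; once you insert them, your outline coincides with the paper's argument.
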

\begin{proof}
We prove the claim by induction on $|E(G)|$. The base cases correspond to Lines~\ref{line:baseCaseBruteForce} and~\ref{line:baseCaseAlg}. Notice, at Line~\ref{line:baseCaseBruteForce}, that if $4/(\alpha-1) < 5$ and $|V(G)|\geq 4/(\alpha - 1)$, then we return $\opt(G)$ when we are required to return $\alpha\cdot\opt(G) - 2$. However, since $\opt(G)\geq |V(G)|\geq 4/(\alpha - 1) > 2/(\alpha - 1)$, $\opt(G) < \alpha\cdot\opt(G) - 2$. The claim follows by induction and by Lemmas~\ref{lem:parallel},~\ref{lem:oneCut} and~\ref{lem:irrelevantEdge} when executing Lines~\ref{line:parallel},~\ref{line:oneCut} and ~\ref{line:irrelevant}, respectively.

Consider the execution at Line~\ref{line:contract}. Since $H$ is an $\alpha$-contractible graph, $\opt(G)$ contains at least $|H|/\alpha$ edges of $H$, and by Fact~\ref{fact:contract}, $\OPT(G)|H$ is a 2EC spanning subgraph of $G|H$. Therefore, $\opt(G|H)\leq \opt(G) - |H|/\alpha$. Now, if $V(G|H) \geq 4/(\alpha - 1)$, by the induction hypothesis we return a solution of cardinality at most $|H| + \alpha\cdot\opt(G|H) - 2\leq \alpha\cdot\opt(G) - 2$. If $V(G|H) < 4/(\alpha - 1)$, then by induction hypothesis we return a solution of cost $|H| + \opt(G|H)\leq \alpha\cdot\opt(G) - (\alpha - 1)\cdot\opt(G|H) < \alpha\cdot\opt(G) - 2$. The last inequality comes from the fact that, since we did not execute Line~\ref{line:baseCaseBruteForce}, then $|V(G)| > 4/(\alpha - 1)$. Since one also has $|V(H)| \leq 2/(\alpha -1)$, one has $\opt(G|H)\geq |V(G|H)| > 4/(\alpha - 1) - 2/(\alpha -1) = 2/(\alpha -1)$. 

We now focus on the case where the condition at Line~\ref{line:hardCase} holds. If $|V(G_2)|\leq 4/(\alpha - 1)$ then Line~\ref{line:bruteForce2} is executed and we return $\opt(G)$. Since Line~\ref{line:baseCaseBruteForce} was not executed, $\opt(G)\geq |V(G)| > 4/(\alpha - 1)$, and thus $\opt(G) < \alpha\cdot \opt(G) - 2$. From now on we assume $|V(G_2)| > 4/(\alpha - 1)$. We will need the following claim:

\begin{claim}\label{claim:redBounds}
    The following statements are true.
    \begin{enumerate}
        \item If $G_i'$ is 2EC, $\opt(G_i') \leq \min\{\opt_{iA}, \opt_{iB}, \opt_{iC}\}$, for $i\in\{1, 2\}$,
        \item If $G_2''$ is 2EC, $\opt(G_2'') \leq \min\{\opt_{2A}, \opt_{2B} + 1\}$,
        \item If $G_2'''$ is 2EC, $\opt(G_2''') \leq \min\{\opt_{2A}, \opt_{2B}\} + 2$.
    \end{enumerate}
\end{claim}
\begin{proof}
    Notice that $\OPT_{iA}|\{u, v\}, \OPT_{iB}|\{u, v\}$ and $\OPT_{iC}|\{u, v\}$ are all 2EC spanning subgraphs of $G_i'$ (if $\OPT_{iA}, \OPT_{iB}$ and $\OPT_{iC}$ are defined, respectively), so $\opt(G_i') \leq \min\{\opt_{iA}, \opt_{iB}, \opt_{iC}\}$. Similarly, $\OPT_{2A}$ and $\OPT_{2B}\cup\{uv\}$ are 2EC spanning subgraphs of $G_2'' = (V(G_2), E(G_2)\cup\{uv\})$ (if $\OPT_{2A}$ and $\OPT_{2B}$ are defined, respectively), so $\opt(G_2'') \leq \min\{\opt_{2A}, \opt_{2B} + 1\}$. Finally, $\OPT_{2A}\cup\{uw, vw\}$ and $\OPT_{2B}\cup\{uw, vw\}$ are 2EC spanning subgraphs of $G_2''' = (V(G_2)\cup\{w\}, E(G_2)\cup\{uw, vw\})$ (if $\OPT_{2A}$ and $\OPT_{2B}$ are defined, respectively), so $\opt(G_2''') \leq \min\{\opt_{2A}, \opt_{2B}\} + 2$.
\end{proof}

\begin{caseanalysis}
    \case{The condition at Line~\ref{line:bothBigCase} is met.} By Fact~\ref{fact:contract}, $G_i'=G_i|\{u, v\}$ is 2EC, and it has strictly less edges than $G$, so we can apply the induction hypothesis when executing $\RED$ on $G_i$, for $i\in\{1, 2\}$. 
    
    One has that $|V(G_2)| > 4/(\alpha -1)$, so $|V(G_2')| = |V(G_2)| - 1 > 4/(\alpha - 1) - 1$, implying $|V(G_2')|\geq 4/(\alpha - 1)$. Thus, by Claim~\ref{claim:redBounds}, and induction hypothesis, $|S_2| = |\RED(G_2')|\leq \alpha\cdot\min\{\opt_{2A}, \opt_{2B}, \opt_{2C}\} - 2\leq \alpha\cdot\opt_2 - 2$. The last inequality follows by Lemma~\ref{lem:redTypes}. Since the condition at Line~\ref{line:bothBigCase} is met, $\opt(G_1')\geq |V(G_1')| = |V(G_1)| - 1 > 2/(\alpha - 1) - 1$, implying $\opt(G_1')\geq 2/(\alpha - 1)$. Thus, $\opt(G_1')\leq \alpha\cdot \opt(G_1') - 2$. Hence, by Claim~\ref{claim:redBounds} and induction hypothesis, we also have $|S_1|= |\RED(G_1')|\leq \alpha\cdot\opt(G_1') - 2\leq \alpha\cdot\min\{\opt_{1A}, \opt_{1B}, \opt_{1C}\} - 2\leq \alpha\cdot\opt_1 - 2$.  The last inequality follows by Lemma~\ref{lem:redTypes}. One has $|S_1| + |S_2| + |F|\leq  \alpha\cdot\opt_1 - 2 +  \alpha\cdot \opt_2 - 2 + 2 = \alpha\cdot\opt -2$.
    
    \case{The condition at Line~\ref{line:typeABcase} is met.} Since the condition at Line~\ref{line:bothBigCase} is not met, $|V(G_1)|\leq 2/(\alpha - 1)$. Thus, $\OPT_1$ is w.l.o.g.\ of type $B$ or $C$ by Claim~\ref{claim:redNoTypeA}. Since $\OPT_1$ is of type $B$ or $C$, $\OPT_2$ is not of type $C$ by Lemma~\ref{lem:redTypes}. This implies that $G_2$ contains a subgraph of type $A$ or $B$ w.r.t.\ $\{u, v\}$. Hence $G_2''' = (V(G_2)\cup\{w\}, E(G_2)\cup\{uv, vw\})$ is 2EC, and since it has strictly less edges than $G$, we can apply the induction hypothesis when executing $\RED$ on $G_2'''$.
    
    Notice that every 2EC spanning subgraph of $G_2'''$ must include the edges $uw, vw$, because $w$ has degree $2$ in $G_2'''$. By induction hypothesis and since $|V(G_2)| > 4/(\alpha - 1)$, $|S_2| = |\RED(G_2''')\setminus\{uw, vw\}| = |\RED(G_2''')| - 2 \leq \alpha\cdot\opt(G_2''') - 4$. Since Line~\ref{line:bothBigCase} was not executed, by Claim~\ref{claim:redNoTypeA} we can assume $\OPT_1$ is of type $B$ or $C$. Since the condition at Line~\ref{line:typeCcase} is not met, one has $|S_1|\leq\opt_1$. One has $|S_1| + |S_2| \leq \opt_1 + \alpha\cdot\opt(G_2''') - 4 \leq \opt_1 + \alpha\cdot(\opt_2 + 2) - 4 = \alpha\cdot\opt - 2 + (2-\opt_1)\cdot(\alpha - 1)\leq \alpha\cdot\opt - 2$. The second inequality follows from Claim~\ref{claim:redBounds}. The last inequality follows from the fact that $\opt_1\geq 3$ and $\alpha\geq 6/5 >1$.
    
    \case{The condition at Line~\ref{line:typeCcase} is met.} Since $\OPT_{1C}$ is defined, it belongs to a 2EC spanning subgraph of $G$. Therefore, by Lemma~\ref{lem:redTypes}, there is a subgraph of $G_2$ of type $A$. This implies that $G_2'' = (V(G_2), E(G_2)\cup\{uv\})$ is 2EC, and since it has strictly less edges than $G$, we can apply the induction hypothesis when executing $\RED$ on $G_2''$.

    We claim that $|S_2| + |F| \leq |\RED(G_2'')|$. By Lemma~\ref{lem:redCorrectness}, $\RED(G_2'')$ is 2EC. Notice that this implies $\RED(G_2'')\setminus\{uv\}$ is of type $A$ or $B$ w.r.t.\ $\{u, v\}$. If $\RED(G_2'')\setminus\{uv\}$ is of type $A$ w.r.t.\ $\{u, v\}$, then $S_1\cup S_2$ is 2EC and thus $F=\emptyset$ satisfies that $S_1\cup S_2\cup F$ is 2EC, implying $|S_2| + |F|  = |\RED(G_2'')\setminus\{uv\}| \leq |\RED(G_2'')|$. If $\RED(G_2'')\setminus\{uv\}$ is of type $B$ w.r.t.\ $\{u, v\}$, then since $\RED(G_2'')$ is 2EC this implies that $uv\in \RED(G_2'')$. Therefore, $|S_2| + |F|  \leq |\RED(G_2'')\setminus\{uv\}| + 1 = |\RED(G_2'')| - 1 + 1 = |\RED(G_2'')|$. Thus, $|S_2| + |F| \leq |\RED(G_2'')|$, as we wanted to prove. By induction hypothesis, $|S_2| + |F| \leq |\RED(G_2'')| \leq \alpha\cdot\opt(G_2'') - 2$. Also, since Line~\ref{line:bothBigCase} was not executed, by Claim~\ref{claim:redNoTypeA} we can assume $\OPT_1$ is of type $B$ or $C$.

    Now, if $\OPT_1$ is of type $C$, then  by Lemma~\ref{lem:redTypes}, $\OPT_2$ is of type $A$. Notice that $|S_1| = \opt_1$. One has $|S_1| + |S_2| +|F|\leq \opt_1 + \alpha\cdot\opt(G_2'') - 2\leq \opt_1 + \alpha\cdot\opt_2 - 2 = \alpha\cdot\opt(G) - 2 - (\alpha - 1)\cdot\opt_1 < \alpha\cdot\opt(G) - 2$, where the second inequality follows from Claim~\ref{claim:redBounds} and the fact that $\OPT_2$ is of type $A$, and the last inequality from the fact that $\alpha\geq 6/5>1$.
    
    On the other hand, if $\OPT_1$ is of type $B$, then by Lemma~\ref{lem:redTypes}, $\OPT_2$ is not of type $C$. By the condition at Line~\ref{line:typeCcase}, $|S_1|\leq \opt_1 - 1$. One has $|S_1| + |S_2| +|F|\leq \opt_1 - 1 + \alpha\cdot\opt(G_2'') - 2\leq \opt_1 - 1 + \alpha\cdot(\opt_2 + 1) - 2 = \alpha\cdot\opt(G) - 2 + (\alpha - 1)\cdot(1 - \opt_1) < \alpha\cdot\opt(G) - 2$, where the second inequality follows from Claim~\ref{claim:redBounds} and the last inequality from the fact that $\opt_1\geq 3$ and $\alpha\geq 6/5>1$.
    \end{caseanalysis}
\end{proof}

\section{Computation of a Canonical 2-Edge Cover}
\label{sec:lowerBound}

In this section, we discuss the computation of a canonical $2$-edge cover. The starting point of our construction is the computation of a minimum triangle-free $2$-edge cover $H$. We recall that a \emph{$2$-edge cover} $H$ is a subset of edges such that each node $v$ has at least $2$ edges of $H$ incident to it. Such $H$ is \emph{triangle-free} if no 2EC component of $H$ is a triangle (notice that triangles might still be induced subgraphs of $H$).  

Kobayashi and Noguchi~\cite{KN23} observed that the computation of a minimum triangle-free 2-edge cover can be reduced to the computation of a maximum triangle-free $2$-matching. Recall that a \emph{2-matching} $M$ is a subset of edges such that at most $2$ edges of $M$ are incident to each node. $M$ is \emph{triangle-free} if it does not contain any triangle as an induced subgraph. In particular, the following two lemmas hold:
\begin{lemma}[\cite{KN23}]\label{lem:fromMatchingToCover}
    Let $G = (V, E)$ be a connected graph such that the minimum degree is at least $2$ and $|V|\geq 4$. Given a triangle-free $2$-matching $M$ of $G$, one can compute in polynomial time a triangle-free $2$-edge cover $H$ of $G$ such that $|H|\leq 2|V| - |M|$.
\end{lemma}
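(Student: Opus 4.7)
My plan is to construct $H$ in two phases. In the first phase (Greedy Extension), initialize $H := M$, and for $i \in \{0, 1\}$ let $V_i := \{v \in V : d_M(v) = i\}$. While some vertex $v$ has $d_H(v) < 2$, pick any edge $vu \in E(G) \setminus H$ and add it to $H$; such an edge exists because $G$ has minimum degree at least $2$, so $v$ has at least $2$ $G$-neighbors, at most $d_H(v) \le 1$ of which are already used. Since we stop adding edges for each $v$ once $d_H(v) = 2$, the total number of added edges is at most $\sum_v \max(0, 2 - d_M(v)) = 2|V_0| + |V_1| = 2|V| - 2|M|$, where the last identity uses $2|M| = |V_1| + 2|V_2|$ and $|V_0|+|V_1|+|V_2|=|V|$. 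Thus $|H| \le 2|V| - |M|$ after Phase 1, and $H$ is a $2$-edge cover.

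In the second phase (Triangle Repair), I handle any triangle component $T = \{a, b, c\}$ of $H$. Some vertex of $T$ must have a $G$-neighbor outside $T$ (otherwise $T$ would be a connected component of $G$, contradicting $|V| \ge 4$ together with the connectivity of $G$). Fix an edge $e = xy \in E(G)$ with $x \in T$ and $y \notin T$, and add $e$ to $H$. This merges $T$ with the component of $y$, producing a component of at least $4$ vertices, which is no longer a triangle. Since edge additions never shrink components, this repair cannot create new triangle components, so iterating over all triangle components yields a triangle-free $2$-edge cover.

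The main obstacle is preserving the size bound $|H| \le 2|V| - |M|$ throughout Phase 2. I handle this with a slack argument. For each triangle component $T$ of $H$ after Phase 1, let $k := |E(T) \cap M| \in \{0, 1, 2\}$ (the case $k = 3$ is excluded since $M$ is triangle-free). The $3 - k$ edges of $T$ not in $M$ were added in Phase 1 and have both endpoints in $T$ (since $T$ is a component, its vertices have no $H$-edges leaving). These $3 - k$ edges cover $T$'s total deficit of $\sum_{v \in T}(2 - d_M(v)) = 6 - 2k$ units, each effectively covering $2$ units rather than the worst-case $1$, saving $(6 - 2k) - (3 - k) = 3 - k \ge 1$ edges per triangle. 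Summing the slack over the $t$ disjoint triangle components, Phase 1 actually produces $|H| \le 2|V| - |M| - t$, so adding $t$ repair edges in Phase 2 restores $|H| \le 2|V| - |M|$. The slack summation is valid because the triangle components are vertex-disjoint and each one's internal deficit is covered exclusively by its internal added edges, so the savings from distinct triangles do not interfere.
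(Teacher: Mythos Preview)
Your argument is correct. The paper itself does not prove this lemma; it is quoted from \cite{KN23} without proof, so there is no in-paper argument to compare against. Your two-phase construction (greedy deficit covering followed by triangle repair) together with the slack-counting is sound.

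The one step worth making explicit is why every internal edge of a triangle component $T$ is a ``bonus'' step, i.e., reduces the deficit by $2$ rather than $1$. Since $T$ is a connected component of the final $H$ and each of its three vertices has $H$-degree exactly $2$, no vertex of $T$ ever had $H$-degree exceeding $2$ during Phase~1; consequently, whenever an internal edge $uv$ of $T$ was added, both endpoints still had degree strictly less than $2$ and hence positive deficit. This justifies the identity $(\text{added edges}) = (\text{initial deficit}) - (\text{number of bonus steps})$ with at least $3-k_T \ge 1$ bonus steps per triangle, and since the triangles are vertex-disjoint components these bonuses are disjoint. Your final sentence says exactly this, but writing the degree-monotonicity observation out would make the slack argument fully transparent.
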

\begin{lemma}[\cite{KN23}]\label{lem:fromCoverToMatching}
    Given a triangle-free $2$-edge cover $H$ of a graph $G = (V, E)$, one can compute in polynomial time a triangle-free $2$-matching $M$ of $G$ such that $|M|\geq 2|V| - |H|$.
\end{lemma}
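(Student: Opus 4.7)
The plan is to prove Lemma~\ref{lem:fromCoverToMatching} component-wise. Let $C_1,\dots,C_p$ be the connected components of $H$. I intend to produce, for each $C_i$, a triangle-free $2$-matching $M_i\subseteq E(C_i)$ with $|M_i|\geq 2|V(C_i)|-|E(C_i)|$; then $M:=\bigcup_i M_i$ is a triangle-free $2$-matching of $G$, and summing yields $|M|\geq 2|V|-|H|$. When $C_i$ is a cycle it has length at least $4$ (since $H$ is triangle-free), so I set $M_i:=E(C_i)$ and the bound holds with equality. When $C_i$ is not a cycle, it has a vertex of degree at least $3$, hence $|E(C_i)|>|V(C_i)|$, and the rest of the plan focuses on this case.

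For such a non-cycle $C=C_i$, I will run an iterative edge-removal procedure on $M_C$, initialized to $M_C:=E(C)$, maintaining as invariant that no connected component of $M_C$ is a triangle (this holds initially because $H$ is triangle-free). While some vertex $v$ has $d_{M_C}(v)>2$, I pick an edge at $v$ to delete by the following two-case rule: \emph{(a)} if there exists an edge $e=vw\in M_C$ whose deletion does not cause any triangle of $M_C$ to become a connected component, delete such an $e$; \emph{(b)} otherwise, I argue that every edge $vw\in M_C$ has its other endpoint $w$ at the head of a ``hanging triangle'' $\{w,p,q\}$ in $M_C$ (with $wp,wq,pq\in M_C$, $d_{M_C}(w)=3$, and $d_{M_C}(p)=d_{M_C}(q)=2$); then I pick one such neighbor $w$ and delete both $vw$ and a triangle edge $wp$, which breaks the hanging triangle while severing its only link to the rest of $M_C$.

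To control the number of deletions I track the total excess $\epsilon:=\sum_u \max(d_{M_C}(u)-2,0)$, which starts at $\sum_u(d_{E(C)}(u)-2)=2(|E(C)|-|V(C)|)$ because every vertex of $C$ has degree at least $2$. A case (a) step removes one edge and drops $\epsilon$ by at least $1$ (at $v$), while a case (b) step removes two edges and drops $\epsilon$ by at least $2$ (at $v$ and $w$, since $d_{M_C}(w)=3>2$ gives $w$ excess at least $1$). Therefore the total number of deletions is at most $2(|E(C)|-|V(C)|)$, and the terminating $M_C$ is a $2$-matching of size $|M_C|\geq|E(C)|-2(|E(C)|-|V(C)|)=2|V(C)|-|E(C)|$; the invariant guarantees it is triangle-free.

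The hard part will be justifying the dichotomy used in step (b). A deletion of $vw$ can only newly isolate a triangle component containing $v$ or $w$, since only those two vertices have their degrees changed. If the blocked triangle contains $v$, it has the form $\{v,x,y\}$ with $vx,vy,xy\in M_C$ and $d_{M_C}(x)=d_{M_C}(y)=2$; in that case, removing the triangle edge $vx$ (instead of $vw$) strictly breaks the triangle and is a valid case (a) choice, since $d_{M_C}(x)$ drops to $1$ so $x$ cannot belong to any triangle component afterwards, and a newly isolated triangle at $v$ using its remaining edges $vy,\ldots$ would force $y$ to have degree $2$ with only two triangle edges, contradicting $xy\in M_C$ and the known edge $vy$ already at $y$. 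Thus failure of case (a) forces each edge at $v$ to be blocked by a hanging triangle at its other endpoint, which is exactly what case (b) assumes. Finally, verifying that the double deletion in case (b) does not create any new triangle component is a short local argument: the chosen hanging triangle is destroyed by removing $wp$; other hanging triangles at different neighbors of $v$ remain linked out via their own $vw'$ edges; and any triangle at $v$ would again contradict the hanging-triangle configuration around $v$.
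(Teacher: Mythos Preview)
The paper does not supply its own proof of this lemma; it is quoted from \cite{KN23} and used as a black box. So there is nothing in the paper to compare against. Assessing your argument on its own merits: it is correct.

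The component-wise reduction is sound (cycle components have length $\geq 4$ by triangle-freeness; non-cycle components have a vertex of degree $\geq 3$ since every vertex has degree $\geq 2$), and the excess potential $\epsilon=\sum_u\max(d_{M_C}(u)-2,0)$, which starts at $2(|E(C)|-|V(C)|)$, correctly bounds the number of deletions because each case-(a) step removes one edge and drops $\epsilon$ by at least $1$, and each case-(b) step removes two edges and drops $\epsilon$ by exactly $2$ (one unit at $v$, one at $w$ since $d_{M_C}(w)=3$).

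The dichotomy is also correct, though the local checks you sketch should be written out in full in a final version. Two points worth making explicit: first, in the ``triangle at $v$'' situation you necessarily have $d_{M_C}(v)=3$ with neighbour set exactly $\{x,y,w\}$, so after deleting $vx$ the only candidate triangle at $v$ is $\{v,y,w\}$, which is ruled out because $y$ still carries the edge $xy$ and hence cannot have degree $2$ inside $\{v,y,w\}$. Second, in case~(b) the hanging triangles at distinct neighbours of $v$ are pairwise vertex-disjoint and disjoint from $v$ (a quick degree count: every $p_i,q_i$ has degree $2$ while every $w_i$ has degree $3$), so after deleting $vw$ and $wp$ the only component whose boundary changed is $\{w,p,q\}$, which becomes the path $w\,q\,p$; any putative triangle at $v$ would require an edge $w_iw_j$ between two remaining neighbours, impossible since each $w_i$'s three edges are $vw_i,w_ip_i,w_iq_i$. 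With these details filled in, the proof is complete and runs in polynomial time.
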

A corollary of the above lemmas is that, given a minimum triangle-free $2$-edge cover $H^*$ and a maximum triangle-free $2$-matching $M^*$, then $|H^*|=2|V|-|M^*|$. Another consequence is that, given a maximum triangle-free $2$-matching $M^*$, one can compute a minimum triangle-free $2$-edge cover $H^*$. Indeed, one can also show that a PTAS for the first problem implies a PTAS for the latter one: 
\begin{restatable}{lemma}{lemcomputeTriFree}\label{lem:computeTriFree2EdgeCover}
 Let $M$ be a triangle-free $2$-matching of a 2EC graph $G$, such that $|M| \geq (1 - \eps)|M^*|$, where $M^*$ is a maximum triangle free $2$-matching of $G$ and $1\geq \eps\geq 0$. Given $M$, one can compute in polynomial time a triangle-free $2$-edge cover $H$ of $G$ such that $|H|\leq (1 + \eps)|H^*|$, where $H^*$ is a minimum triangle-free $2$-edge cover of $G$.
\end{restatable}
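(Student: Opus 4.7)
\textbf{Proof plan for Lemma~\ref{lem:computeTriFree2EdgeCover}.}
The plan is to feed the given approximate triangle-free $2$-matching $M$ directly into the conversion procedure of Lemma~\ref{lem:fromMatchingToCover}, and then use the identity $|H^*| = 2|V| - |M^*|$ (obtained by combining Lemmas~\ref{lem:fromMatchingToCover} and~\ref{lem:fromCoverToMatching} and applying them to an optimal matching and an optimal cover, respectively) to convert the additive slack $\eps|M^*|$ into a relative slack with respect to $|H^*|$.

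Concretely, first I would verify that Lemma~\ref{lem:fromMatchingToCover} is applicable: since $G$ is 2EC we have minimum degree at least $2$ and $G$ is connected, and we can assume $|V|\geq 4$ (the case $|V|\leq 3$ is trivial: a 2EC graph on at most $3$ nodes is a triangle or a multigraph, and the statement can be checked directly, or we can preprocess such tiny instances in constant time). Applying the lemma to $M$ produces, in polynomial time, a triangle-free $2$-edge cover $H$ with $|H|\leq 2|V|-|M|$. Using the hypothesis $|M|\geq (1-\eps)|M^*|$, I would then chain the bounds
\begin{equation*}
|H|\;\le\;2|V|-|M|\;\le\;2|V|-(1-\eps)|M^*|\;=\;(2|V|-|M^*|)+\eps|M^*|\;=\;|H^*|+\eps|M^*|,
\end{equation*}
where the last equality uses that a maximum triangle-free $2$-matching and a minimum triangle-free $2$-edge cover satisfy $|H^*|+|M^*|=2|V|$ (from Lemmas~\ref{lem:fromMatchingToCover} and~\ref{lem:fromCoverToMatching} applied in both directions).

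The remaining step is to replace the additive error $\eps|M^*|$ by the multiplicative error $\eps|H^*|$. For this I would observe the universal inequality $|M^*|\leq |H^*|$: any $2$-matching has at most $|V|$ edges (summing degree constraints), while any $2$-edge cover has at least $|V|$ edges (each vertex needs two incident edges, and each edge covers at most two vertices). Plugging $|M^*|\leq|V|\leq|H^*|$ into the displayed bound yields $|H|\leq (1+\eps)|H^*|$, as required. I do not expect any real obstacle: the argument is essentially a bookkeeping reduction, and the only subtlety is ensuring the degenerate small-$|V|$ case is handled so that Lemma~\ref{lem:fromMatchingToCover} can be invoked as a black box.
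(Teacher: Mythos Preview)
Your proposal is correct and follows essentially the same argument as the paper: apply Lemma~\ref{lem:fromMatchingToCover} to $M$, use the identity $|H^*|+|M^*|=2|V|$ (derived from Lemmas~\ref{lem:fromMatchingToCover} and~\ref{lem:fromCoverToMatching}), and finish with $|M^*|\leq |V|\leq |H^*|$ to turn the additive $\eps|M^*|$ into the multiplicative $\eps|H^*|$. The only addition is your explicit check that Lemma~\ref{lem:fromMatchingToCover} is applicable (handling $|V|\leq 3$ separately), which the paper omits.
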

\begin{proof}
    First let us show that $|M^*| + |H^*| = 2|V|$. Apply Lemma~\ref{lem:fromMatchingToCover} to find a triangle-free $2$-edge cover $H'$ such that $|H'|\leq 2|V| - |M^*|$. Thus, $|M^*| + |H^*|\leq |M^*| + |H'| \leq 2|V|$. Similarly, apply Lemma~\ref{lem:fromCoverToMatching} to find a triangle-free $2$-matching $M'$ such that $|M'|\geq 2|V| - |H^*|$, so that $|M^*| + |H^*|\geq |M'| + |H^*| \geq 2|V|$.

    The claim of the lemma now follows by applying Lemma~\ref{lem:fromMatchingToCover} to $M$ to efficiently find a triangle-free $2$-edge cover $H$ such that $|H|\leq 2|V| - |M|\leq 2|V| - (1-\eps) |M^*| = |M^*| + |H^*| - (1-\eps) |M^*| = |H^*| + \eps |M^*|\leq (1+\eps)|H^*|$. The last inequality comes from the fact that every $2$-edge cover of $G$ has cardinality at least $|V|$, and every $2$-matching of $G$ has cardinality at most $|V|$, so $|M^*|\leq |H^*|$.
\end{proof}

Until very recently, the only known polynomial-time algorithm for maximum triangle-free $2$-matching was contained in the Ph.D. thesis of Hartvigsen from 1984~\cite{Hartvigsen1984extensions}. This result is extremely complicated, which explains why it was not used as a subroutine to approximate 2ECSS in most prior work. Recently, the situation has changed. First of all, a journal version (still very complex) of the previous result appeared in \cite{H24}. Second, Paluch \cite{K23} published a technical report with an alternative proof of the same result (again, far from trivial). Third, Bosch-Calvo, Grandoni, and Jabal Ameli~\cite{BGJ23} published a technical report 
containing a PTAS for the same problem, which is much simpler than the previous two results, yet non-trivial. Kobayashi and Noguchi subsequently presented a simpler analysis of the aforementioned PTAS~\cite{KN25}.

Notice that, using the latter PTAS rather than the exact algorithms in \cite{H24,K23}, with our approach one would get a $5/4+\eps$ approximation for 2ECSS (instead of a $5/4$-approximation). 

We already discussed how to impose that the initial triangle-free $2$-edge cover $H$ contains at least one component with at least $8$ nodes, while still guaranteeing that $|H|\leq \opt$. 
We now give the proofs for Lemmas~\ref{lem:canonical2EdgeCover} and~\ref{lem:costCanonical}. The first one shows that we can assume that the initial triangle-free $2$-edge cover is canonical, the second one that the initial credit is enough to satisfy our credit invariant.

In the next lemma (and in later steps) we will sometimes \emph{merge} two or more connected components of the current triangle-free $2$-edge cover, meaning that the respective node sets belong to the same connected component after the merge. At the same time, we never disconnect at a later steps nodes which are connected at a certain step. This way, the existence of a component with at least $8$ nodes at the beginning of the process guarantees that such a component also exists at each following step. 
More formally, given two subset of edges $H_1$ and $H_2$, we say that $H_1$ is a \emph{component coarsening} of $H_2$ if the following happens: given the node set $W_2$ of any connected component induced by $H_2$, then $W_2$ is a subset of the node set $W_1$ of some connected component induced by $H_1$. 

\lemCanonical*
\begin{proof}   
    We start with $H':=H$ and do the following process: If there are sets of edges $F, F', |F|, |F'|\leq 5$ such that $H'':=(H'\setminus F)\cup F'$ is a triangle-free $2$-edge cover satisfying:
    \begin{enumerate*}[label={(\arabic*)}]
        \item $|H''|\leq |H'|$,
        \item $H''$ is a component coarsening of  $H'$,
        \item Either $|H''|< |H'|$, or $H''$ has strictly less components than $H'$, or the number of components in $H''$ and $H'$ is the same and the total number of bridges of $H''$ is strictly less than the number of bridges of $H'$.
    \end{enumerate*}
    Then update $H':=H''$. Clearly, we can find in polynomial time such sets $F, F'$, if they exist, and this process ends in a polynomial number of steps. Obviously, by construction, the final $H'$ satisfies $|H'|\leq |H|$. We next show that $H'$ is canonical. 
    Obviously $H'$ is triangle-free, hence property (1) holds. $H'$ is a component coarsening of $H$, hence the presence of a component with at least $8$ nodes in $H$ guarantees that such a component also exists in $H'$ (thus property (5) holds). The next claim implies properties (3) and (4). 
    \begin{claim}\label{clm:blocksCanonical}
        Let $B$ be a 2EC block of a complex component $C$ of $H'$ such that there is exactly one node of $B$ adjacent in $C$ to nodes in $V(C)\setminus V(B)$. Then $|B|\geq 6$.
    \end{claim}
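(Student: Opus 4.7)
The plan is to argue by contradiction. Suppose $|B|\leq 5$. Then I will construct sets $F,F'\subseteq E(G)$ with $|F|,|F'|\leq 5$ such that $H'':=(H'\setminus F)\cup F'$ is a triangle-free 2-edge cover, is a component coarsening of $H'$, and has either strictly fewer edges, strictly fewer components, or strictly fewer bridges than $H'$, contradicting the termination of the iterative process. Since $B$ is 2EC with every vertex of degree at least $2$ and $|B|\leq 5$, $B$ is either a cycle $C_k$ with $k\in\{3,4,5\}$, or the diamond graph $K_4$ minus an edge (5 edges on 4 vertices). In the diamond case, $B$ has a unique Hamiltonian cycle, and the unique chord $e$ can be removed to obtain this $4$-cycle; setting $F:=\{e\}$ and $F':=\emptyset$ yields $|H''|=|H'|-1$ while preserving the $2$-edge-cover property (all four vertices retain degree at least $2$, since the two endpoints of $e$ had degree $3$ in $B$) and triangle-freeness (the affected component is unchanged up to a single edge deletion), contradicting termination.

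For $B$ a cycle $C_k$, every vertex of $V(B)\setminus\{v\}$ has degree exactly $2$ in $H'$ (all its $H'$-edges lie in $B$, by the assumption of the claim). I apply the $3$-Matching Lemma to the partition $(V(B),V(G)\setminus V(B))$: the hypotheses hold because $G$ is $5/4$-structured (2VC, simple, with no irrelevant edges and no non-isolating 2-vertex cuts), $|V(B)|=k\geq 3$, $|V(G)\setminus V(B)|\geq 16-5\geq 3$, and if $k=3$ then $G[V(B)]$ is exactly the triangle $B$ (since $G$ is simple). This yields three matching edges with distinct endpoints in $V(B)$; at most one of them is incident to $v$, so at least two, call them $u_ax_a$ and $u_bx_b$, have endpoints $u_a,u_b\in V(B)\setminus\{v\}$ (distinct) with $x_a,x_b\in V(G)\setminus V(B)$. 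I then argue that we may assume $u_a$ is adjacent to $v$ in $B$: this is automatic for $C_3$ (every non-$v$ vertex is adjacent to $v$) and $C_4$ (at most one of the three non-$v$ vertices is non-adjacent to $v$, so at least one of $u_a,u_b$ is adjacent to $v$); for $C_5$, if no matching uses a vertex of $\{u_1,u_4\}$ adjacent to $v$, then both $u_1$ and $u_4$ must have no $G$-neighbors outside $V(B)$, which forces every 2EC spanning subgraph of $G$ to contain the four edges $vu_1,u_1u_2,u_4u_3,u_4v$, making $B$ a $5/4$-contractible subgraph with $|V(B)|=5\leq 8$, contradicting the fact that $G$ is $5/4$-structured.

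With $u_a$ adjacent to $v$ in $B$, consider the move $F:=\{vu_a\}$ and $F':=\{u_ax_a\}$. Then $H''$ is a triangle-free 2-edge cover ($u_a$ loses one edge and gains one; $v$ loses one edge but retains degree at least $2$ since $v$ has $H'$-edges to $V(C)\setminus V(B)$; no component becomes a triangle since the affected component has at least $|V(C)|\geq 4$ vertices), is a component coarsening of $H'$, and satisfies $|H''|=|H'|$. If $x_a\notin V(C)$, then $H''$ merges $C$ with $x_a$'s component, strictly decreasing the component count---a contradiction. Otherwise $x_a\in V(C)$, and the path $P_a$ from $v$ to $x_a$ in $H'$ (which avoids $V(B)\setminus\{v\}$ since only $v$ connects $B$ to the rest of $C$ in $H'$) together with $u_ax_a$ and $B\setminus\{vu_a\}$ forms a cycle in $H''$ covering all edges of $P_a$; if $P_a$ contains any bridge of $H'$, then the bridge count in $H''$ strictly decreases, again a contradiction.

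The hard part will be the remaining subcase: $x_a\in V(C)$ and the entire path $P_a$ lies inside a single 2EC block $B'$ of $H'$ containing $v$ (and symmetrically for $x_b$). Here no single-edge swap of the above form produces improvement. The plan is to use both matching edges simultaneously via $F:=\{vu_a,vu_b\}$ and $F':=\{u_ax_a,u_bx_b\}$---valid since $v\in B\cap B'$ has degree at least $2$ in each of $B,B'$ and hence at least $4$ in $H'$---which fuses $B$ and $B'$ into one larger 2EC structure in $H''$. When the combined vertex set satisfies $|V(B)\cup V(B')|\leq 8$, the fact that $G[V(B)\cup V(B')]$ is then a 2EC subgraph of $G$ violating the $5/4$-structured property (since all external neighbors of $V(B)\setminus\{v\}$ lie in $V(B')$, one can lower-bound the edges of $G[V(B)\cup V(B')]$ that any 2EC spanning subgraph must use, obtaining $5/4$-contractibility) gives the contradiction directly. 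When $|V(B')|$ is larger, the plan is to identify an additional edge of $B'$ that becomes redundant in the fused block (so that its removal keeps the structure 2EC) and include it in $F$; here the main technical obstacle is to verify that this additional edge can always be found within the budget $|F|,|F'|\leq 5$, using the fact that $C$ being complex guarantees bridges elsewhere in $C$ that can be exploited by the larger, more connected fused block.
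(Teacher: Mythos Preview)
Your proof follows the same blueprint as the paper's: assume $|B|\leq 5$, reduce to $B$ being a $k$-cycle (the diamond case is handled identically), and then replace an edge of $B$ incident to the cut vertex $v$ by an external edge so that either a component is absorbed or a bridge of $C$ is covered. Your swap $F=\{vu_a\}$, $F'=\{u_ax_a\}$ with $u_a$ a cycle-neighbour of $v$ is precisely the paper's operation of replacing $B$ by the Hamiltonian path $B\setminus\{vu_a\}$ and appending $u_ax_a$; the paper simply obtains the pair $(v,u_a)$ through the Hamiltonian Pairs Lemma (Lemma~\ref{lem:hamiltonianPairs}) rather than by a direct adjacency argument.

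Two genuine gaps remain. First, your $C_5$ step is incomplete: from ``neither cycle-neighbour of $v$ has a $G$-neighbour outside $V(B)$'' you conclude that every 2EC spanning subgraph of $G$ must contain the four cycle edges through those two vertices, but this ignores possible chords of the $5$-cycle inside $G[V(B)]$; with a chord present, those four edges are not forced and the $5/4$-contractibility conclusion fails. The paper bypasses this by invoking Lemma~\ref{lem:hamiltonianPairs}, which already absorbs any chord structure and returns two Hamiltonian pairs; since the $\geq 3$ vertices appearing in those pairs cannot all avoid both cycle-neighbours of $v$, one obtains a pair $\{v,v_1\}$ with a Hamiltonian $v,v_1$-path in $G[V(B)]$ (possibly through chords) and $v_1$ adjacent to $V(G)\setminus V(B)$.

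Second, the ``hard case'' you isolate --- $x_a\in V(C)$ lying in a second 2EC block $B'$ that also contains $v$, so that the swap covers no bridge --- is only a sketch. The contractibility route for $|V(B)\cup V(B')|\leq 8$ would require control over \emph{all} edges of $G[V(B)\cup V(B')]$ (not just those in $H'$), which you do not have; and the redundant-edge plan for larger $B'$ is not carried out and has no clear reason to stay within the $|F|,|F'|\leq 5$ budget. The paper does not branch here at all: after the swap it asserts that the $v$--$w$ path in $C\setminus B$ contains at least one bridge of $H'$, so the bridge count strictly drops. Your proposal diverges from the paper exactly at this point and is left unfinished; to match the paper you would replace the entire hard-case plan by that single assertion and its justification, not by a two-edge-swap/contractibility detour.
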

    \begin{proof}
        Assume this is not the case, so $|B| \leq 5$. It must be that $B$ is a cycle, otherwise, by a simple case analysis, $B$ is a $4$-cycle with a chord $e$, and $H'\setminus\{e\}$ is a triangle-free $2$-edge cover that is a component coarsening of $H$, a contradiction. Let $B = u_1u_2\dots u_k, k\in\{3, 4, 5\}$, where $u_1$ is the node of $B$ adjacent to nodes in $V(C)\setminus V(B)$. By the Hamiltonian pairs Lemma~\ref{lem:hamiltonianPairs}, there are two distinct Hamiltonian pairs of nodes $\{v_1, w_1\}, \{v_2, w_2\}$ of $B$: recall that by definition there is a Hamiltonian path between $v_j$ and $w_j$ in $G[V(B)]$ and both nodes are adjacent to nodes in $V(G)\setminus V(B)$. Notice that we can assume that $u_1$ is a node of one such pair. Indeed otherwise, since $|\{v_1, w_1, v_2, w_2\}|\geq 3$ and $B$ is a $k$-cycle with $k\leq 5$, at least one node $x\in \{v_1, w_1, v_2, w_2\}$ is adjacent to $u_1$ in $B$. Hence $\{x,u_1\}$ is a valid Hamiltonian pair. Assume w.l.o.g.\ that there is a Hamiltonian $u_1, v_1$-path $P$ in $G[V(B)]$, and $v_1$ is adjacent to a node $w\in V(G)\setminus V(B)$. Let $H'':= (H'\setminus B)\cup P\cup\{v_1w\}$, and notice that $|H''|= |H'|$. If $w$ is not in $C$, then $H''$ is a triangle-free $2$-edge cover that is a component coarsening of $H$ with strictly less components than $H'$, a contradiction. Otherwise, i.e. $w\in V(C)\setminus V(B)$, $H''$ is a component coarsening of $H'$ with the same number of components as $H'$. However, in $H''$ all the bridges and the 2EC blocks of $H'$ that shared an edge with any path from $u_1$ to $w$ in $C\setminus B$ become part of the same 2EC block together with $P\cup\{v_1w\}$, and all the other bridges and 2EC blocks remain the same. 
        Since $u_1$ is adjacent to nodes in $V(C)\setminus V(B)$, at least one bridge belongs to that path. This is a contradiction as the total number of bridges of $H''$ is less than in $H'$.
    \end{proof}
    \begin{corollary}
    Given a complex component $C$ of $H'$, every 2EC block of $C$ contains at least $4$ edges, and that there are at least two 2EC blocks of $C$ with at least $6$ edges.    
    \end{corollary}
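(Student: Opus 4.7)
The plan is to combine Claim~\ref{clm:blocksCanonical}, which handles the ``leaf'' case in the bridge decomposition of $C$, with a short minimality argument that rules out small non-leaf 2EC blocks.

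For the first part, I would argue by contradiction. Since $G$ is simple and a 2EC subgraph has at least $3$ edges, a 2EC block $B$ with at most $3$ edges must be a triangle $t_1t_2t_3$. If exactly one vertex of $B$ is adjacent in $C$ to $V(C)\setminus V(B)$, then Claim~\ref{clm:blocksCanonical} immediately yields $|B|\geq 6$, a contradiction. Otherwise, two vertices of $B$ (say $t_1$ and $t_2$) each have an edge of $C$ leaving $V(B)$; I would then take $F=\{t_1t_2\}$, $F'=\emptyset$, and check that $H'':=H'\setminus\{t_1t_2\}$ is a triangle-free $2$-edge cover and a component coarsening of $H'$ with $|H''|<|H'|$, contradicting the termination of the iterative process that defines $H'$. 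The needed local checks are: the degrees at $t_1,t_2$ drop from at least $3$ to at least $2$, $t_3$ is untouched, $C$ stays connected via the path $t_1t_3t_2$, and removing an edge cannot create a new triangle.

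For the second part, I would analyze the tree obtained by contracting each 2-edge-connected component of $C$ to a single node while keeping the bridges of $C$ as edges. Since $C$ is complex, it contains at least one bridge, so this tree has at least two nodes and therefore at least two leaves. A leaf corresponding to a single vertex $v$ would make $v$ incident to exactly one bridge and hence have degree $1$ in $H'$, violating the $2$-edge cover property; so every leaf corresponds to a 2EC block $B$. Such a leaf block has exactly one vertex adjacent in $C$ to $V(C)\setminus V(B)$, namely the endpoint in $B$ of its unique outgoing bridge, so Claim~\ref{clm:blocksCanonical} gives $|B|\geq 6$. Two leaves then yield the two required 2EC blocks of size at least $6$.

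The only mildly delicate point is the non-leaf case of the first part, where the simultaneous verification of the $2$-edge cover property, triangle-freeness, and component-coarsening after removing a single edge must all be carried out; but all three checks are completely local to the triangle and its outgoing bridges, so I do not anticipate any real obstacle.
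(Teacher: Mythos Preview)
Your proposal is correct and follows essentially the same approach as the paper's proof: both parts use Claim~\ref{clm:blocksCanonical} for the leaf case, handle a non-leaf triangle block by deleting one of its edges to contradict the termination of the local-improvement process defining $H'$, and for the second part pass to the block--bridge tree and argue that leaves must be 2EC blocks (not lonely vertices) because $H'$ is a $2$-edge cover. The only cosmetic difference is that you spell out the local checks (degree, triangle-freeness, coarsening) more explicitly than the paper does.
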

    \begin{proof}
    Consider the first statement. Notice that if a 2EC block $B'$ of a complex component $C$ has $|B'| < 4$, then $B'$ is a triangle $u_1u_2u_3$. Claim \ref{clm:blocksCanonical} states that, if there is exactly one node of $B'$ adjacent to nodes in $V(C)\setminus V(B')$, then $|B'|\geq 6$. Since $|B| = 3$, this implies that there are at least two nodes of $B$, say $u_1, u_2$, adjacent to nodes in $V(C)\setminus V(B')$. But then $H'\setminus\{u_1u_2\}$ is a triangle-free $2$-edge cover that is a component coarsening of $H'$, a contradiction. 
    
    Consider the second statement. Observe that $C$ must contain at least two 2EC blocks with exactly one bridge incident to them. 
    Indeed, the graph resulting from the contraction of every 2EC block of $C$ into a single node must be a tree, and since $H'$ is a $2$-edge cover, each leaf of that tree must correspond to exactly one 2EC block.
    Notice that there are at least $2$ such leaves, and by Claim \ref{clm:blocksCanonical} they contain at least $6$ edges each.    
    \end{proof}    
    It is left to prove property (2), namely that every component of $H'$ with less than $8$ edges is a cycle. Assume to get a contradiction that $C$ is a component with $|C|\leq 7$ that is not a cycle. Since we proved that every complex component of $H'$ must contain at least two 2EC blocks with at least $6$ edges each, it must be that $C$ is 2EC, and thus $C$ must contain a cycle. Let $P_0$ be the cycle of $C$ with the minimum possible number of edges. 
    There is no edge $e\in E(C)$ that is a chord of $P_0$. Indeed, otherwise $H'\setminus\{e\}$ is a triangle-free $2$-edge cover that is a component coarsening of $H'$, a contradiction. This implies that $C$ contains a node $v$ (of degree at least $2$) not in $P_0$, hence $|P_0|\leq 5$. Let $P_0 = u_1u_2\dots u_k, k\in\{3, 4, 5\}$ and $P_1:= E(C)\setminus P_0$. Notice that $|P_1| \in\{2, 3, 4\}$.

We claim that $P_1$ is either a cycle sharing a node with $P_0$ or a path  between distinct nodes of $P_0$. Since $C$ is 2EC, by Fact~\ref{fact:contract}, $C|P_0$ is also 2EC, so in $C|P_0$ there is a cycle containing the node corresponding to the contracted $P_0$. This implies that $P_1$ must contain a subgraph $P$ that is either a cycle sharing a node with $P_0$ or a path of length at least two  between distinct nodes of $P_0$. We also note that $P$ is not a path incident to adjacent nodes in $P_0$. Indeed, otherwise $C$ contains a cycle with a chord, a contradiction as already argued before. Therefore $|P_0|+|P|\ge 6$, because if $P_0$ is a $3$-cycle, then $P$ must be a cycle and thus $|P|\ge 3$, while if $P_0$ is not a $3$-cycle then $|P_0|\geq 4$ and $|P|\geq 2$. Assume to get a contradiction that $P_1$ contains additional edges. Since $|P_0|+|P|\ge 6$ and $|C|\leq 7$, it must be that $|P_0| + |P| = 6$ and there is an edge $e\in C$ with both endpoints in $V(P_0)\cup V(P)$. 
But then, notice that $P_0\cup P$ is 2EC and therefore $C\setminus\{e\}$ is 2EC. This implies  that $H'\setminus\{e\}$ is a triangle-free $2$-edge cover that is a component coarsening of $H'$, a contradiction. 
    We now distinguish a few cases:

    \begin{caseanalysis}

    \case{$P_0$ is a $3$-cycle.} Since $P_1$ does not have as endpoints adjacent nodes in $P_0$, then $P_1$ is a cycle $v_1v_2\dots v_{l}, l\in\{3, 4\}$ with $v_1 = u_1$. 
    By the $3$-matching Lemma~\ref{lem:matchingOfSize3} applied to $(V(P_0), V(G)\setminus V(P_0))$, both $u_1$ and $u_2$ have edges to nodes in $V(G)\setminus V(P_0)$ in $G$. 
    Consider the edges $u_2 w$ and $u_3 x$, where $w, x \in V(G) \setminus V(P_0)$, $w \neq x$. 
    If $w\notin V(C)$, then $(H'\setminus\{u_1u_2\})\cup \{u_2w\}$ is a triangle-free $2$-edge cover that is a component coarsening of $H'$ with strictly less components than $H'$, a contradiction.
    Similarly, if $x \notin V(C)$, this also holds for $(H'\setminus\{u_1u_3\})\cup \{u_3x\}$. 
    Thus, $w, x \in \{v_2, v_3, v_4\}$. Therefore, since $w \neq x$, we have $|\{w, x \} \cap \{v_2, v_4 \}| \geq 1$. W.l.o.g. assume $w = v_2$.
    But then, $(H'\setminus\{v_1w, v_1u_2\})\cup \{u_2w\}$ is a triangle-free $2$-edge cover that is a component coarsening of $H'$ with fewer edges than $H'$, a contradiction. 

    \case{$P_0$ is a $4$-cycle.} Notice that $P_1$ cannot be a cycle, because since $|C|\leq 7$, it would be a $3$-cycle, a contradiction to the fact that $P_0$ has the minimum number of edges. Since the endpoints of $P_1$ are not adjacent in $P_0$ and $|C|\leq 7$, $P_1$ is a path of length $2$ or $3$ between non-adjacent nodes of $P_0$, say between $u_1$ and $u_3$, so $P_1= v_1v_2\dots v_l$ with $v_1 = u_1, v_l = u_3, l\in\{3, 4\}$. By the $3$-matching Lemma~\ref{lem:matchingOfSize3} applied to $(V(P_0), V(G)\setminus V(P_0))$, there is an edge $uw$ from $u\in \{u_2, u_4\}$ to $w\in V(G)\setminus V(P_0)$. By symmetry, we can assume w.l.o.g.\ that $u = u_2$. If $w\notin V(C)$, then $(H'\setminus\{u_1u_2\})\cup\{u_2w\}$ is a triangle-free $2$-edge cover that is a component coarsening of $H'$ with strictly less components than $H'$, a contradiction. 
    Thus, it must be that $w\in\{v_2, v_3\}$, and if $w=v_3$ then $l = 4$. 
    Assume $w=v_2$, the other case being symmetric when $l=4$. Then $(H'\setminus\{u_1v_2, u_2u_3\})\cup\{u_2v_2\}$ is a triangle-free $2$-edge cover that is a component coarsening of $H'$ with fewer edges than $H'$, a contradiction.

    \case{$P_0$ is a $5$-cycle.} Then, since $|P_1|\geq 2$ and $|C|\leq 7$, $P_1$ is a path of length $2$. Also, $P_1$ is incident to non-adjacent nodes of $P_0$, so we can assume w.l.o.g.\ $P_1 = u_1vu_3, v\notin V(P_0)$. But then, $C$ contains a cycle $u_1u_2u_3v$ of length $4$, a contradiction to the fact that $P_0$ is the smallest cycle in $C$.
    \end{caseanalysis}
\end{proof}

\lemCostCanonical*
\begin{proof}
    We start by assigning $1/4$ credits to every edge of $H$, hence $|H|/4$ credits in total. Then we show how to distribute them so that each bridge, component and 2EC block receives at least the number of credits required by the credit assignment scheme. Then obviously $\cost(H)\leq 5/4\cdot |H|$. The $1/4$ credits of each bridge $b$ are assigned to $b$. Every 2EC component $C$ is assigned the credits of its edges, hence $|E(C)|/4$ credits for small components and at least $8/4=2$ credits for large ones. Since $H$ is canonical, every 2EC block $B$ of $H$ contains at least $4$ edges: each such $B$ keeps $1$ credit among the $|E(B)|/4\geq 1$ credits of its edges. The remaining credits of the edges of $B$ are assigned to the complex component $C$ containing $B$. Since $H$ is canonical, every complex component $C$ contains at least $2$ blocks $B_1, B_2$ with at least $6$ edges each. Therefore, those blocks assign at least $2 \cdot 2/4=4/4=1$ credits to $C$ as required. 
    \end{proof}

\section{Bridge Covering}\label{sec:bridgeCovering}

In this section we prove Lemma \ref{lem:bridgeCovering}, which we restate next. The proofs of this section are almost identical to those in~\cite{GGJ23soda}. We present them here for completeness, although we remark that we have only changed their proofs slightly to adapt to our notation.

\lemBridgeCovering*

Let $C$ be any complex component of $S$: notice that $C$ contains at least one bridge. Let $G_C$ be the multi-graph obtained from $G$ by contracting into a single node each block $B$ of $C$ and each connected component $C'$ of $S$ other than $C$. Let $T_C$ be the tree in $G_C$ induced by the bridges of $C$: we call the nodes of $T_C$ corresponding to blocks \emph{block nodes}, and the remaining nodes of $T_C$ \emph{lonely nodes}. Observe that the leaves of $T_C$ are necessarily block nodes (otherwise $S$ would not be a 2-edge cover). 

At high level, in this stage of our construction we will transform $S$ into a new solution $S'$ containing a component $C'$ spanning the nodes of $C$ (and possibly the nodes of some other components of $S$). More precisely, $S'$ will be a component coarsening of $S$: in particular, this will automatically guarantee the property that $S'$ contains a component with at least $8$ nodes. Furthermore, no new bridge is created and at least one bridge $e$ of $C$ is not a bridge of $C'$ (intuitively, the bridge $e$ is covered). 

A \emph{bridge-covering path} $P_C$ is any path in $G_C\setminus E(T_C)$ with its (distinct) endpoints $u$ and $v$ in $T_C$, and the remaining (internal) nodes outside $T_C$. Notice that $P_C$ might consist of a single edge, possibly parallel to some edge in $E(T_C)$. Augmenting $S$ along $P_C$ means adding the edges of $P_C$ to $S$, hence obtaining a new 2-edge cover $S'$. Notice that $S'$ obviously has fewer bridges than $S$: in particular all the bridges of $S$ along the $u, v$-path in $T_C$ are not bridges in $S'$ (we also informally say that such bridges are \emph{removed}), and the bridges of $S'$ are a subset of the bridges of $S$. Let us analyze $\cost(S')$. Suppose that the distance between $u$ and $v$ in $T_C$ is $br$ and such path contains $bl$ blocks. Then the number of edges w.r.t.\ $S$ grows by $|E(P_C)|$. The number of
credits w.r.t.\ $S$ decreases by at least $\frac{1}{4}br+bl+|E(P_C)|-1$ since we remove $br$ bridges, $bl$ blocks and $|E(P_C)|-1$ components (each one bringing at least one credit). However the number of credits also grows by $1$ since we create a new block $B'$ (which needs $1$ credit) or a new 2EC large component $C'$ (which needs $1$ additional credit w.r.t.\ the credit of $C$). Altogether $\cost(S)-\cost(S')\geq \frac{1}{4}br+bl-2$. We say that $P_C$ is cheap if the latter quantity is non-negative, and expensive otherwise. In particular $P_C$ is cheap if it involves at least $2$ block nodes or $1$ block node and at least $4$ bridges. Notice that a bridge-covering path, if at least one such path exists, can be computed in polynomial time. 

Before proving Lemma \ref{lem:bridgeCovering} we need the following two technical lemmas (see Figures \ref{fig:LBonSizeOfR(W)} and \ref{fig:BridgeCoveringPathsIntersection}). We say that a node $v\in V(T_C)\setminus \{u\}$ is reachable from $u\in V(T_C)$ if there exists a bridge-covering path between $v$ and $u$. Let $R(W)$ be the nodes in $V(T_C)\setminus W$ reachable from some node in $W\subseteq V(T_C)$, and let us use the shortcut $R(u)=R(\{u\})$ for $u\in V(T_C)$. Notice that $v\in R(u)$ iff $u\in R(v)$.

\begin{figure}
\begin{center}
\includegraphics[width=0.7\linewidth]{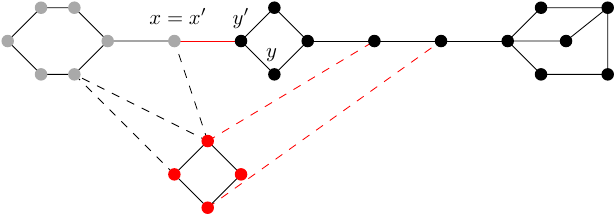}
\caption{Illustration of Lemma \ref{lem:reachable}. The solid edges define $S$ and the dashed ones are part of the remaining edges. The gray and black nodes induce the sets $X_C$ and $Y_C$, resp. The set $X$ is induced by the gray and red nodes. Notice that $y$ corresponds to the block with $4$ nodes (belonging to $C$). The red edges form the 3-matching used in the proof of the mentioned lemma.}
\label{fig:LBonSizeOfR(W)}
\end{center}
\end{figure}

\begin{lemma}\label{lem:reachable}
Let $e=xy\in E(T_C)$ and let $X_C$ and $Y_C$ be the nodes of the two trees obtained from $T_C$ after removing the edge $e$, where $x\in X_C$ and $y\in Y_C$. Then $R(X_C)$ contains a block node or $R(X_C)\setminus\{y\}$ contains at least $2$ lonely nodes.
\end{lemma}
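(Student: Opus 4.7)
The plan is to apply the 3-Matching Lemma (Lemma~\ref{lem:matchingOfSize3}) to a suitably chosen partition of $V(G)$ and then convert the resulting three matching edges into bridge-covering paths. Write $X:=\bigcup_{w\in X_C}V(w)\subseteq V(G)$ and $Y:=\bigcup_{w\in Y_C}V(w)\subseteq V(G)$ for the sets of original vertices corresponding to $X_C$ and $Y_C$. The crucial extra step is to enlarge $X$ to a set $\hat{Z}\supseteq X$ by adding the vertex set of every component $C'\neq C$ of $S$ whose contracted node $\hat{C}'$ in $G_C$ is reachable from $X_C$ via some path of $G_C\setminus E(T_C)$ whose internal nodes all lie outside $V(T_C)$.

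First I will check that $X_C$ contains at least one block node, which together with the canonical condition (every 2EC block has $\geq 4$ edges, hence $\geq 4$ vertices in the simple graph $G$) gives $|X|\geq 4$. Indeed, the leaves of $T_C$ are block nodes, as recalled in the excerpt, and the subtree of $T_C\setminus e$ on the $x$-side always contains such a leaf: if $|X_C|=1$ then $x$ itself is a leaf of $T_C$, and otherwise this subtree has some leaf $\ell\neq x$, which has no edges of $T_C$ outside the subtree and is therefore already a leaf of $T_C$. The symmetric argument yields $|Y|\geq 4$. Consequently $|\hat{Z}|\geq 4$ and $|V(G)\setminus\hat{Z}|\geq 4$, and the hypotheses of the 3-Matching Lemma hold (no triangle condition applies). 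The key design property of $\hat{Z}$ is that every edge of $G$ from $\hat{Z}$ to $V(G)\setminus\hat{Z}$ has its other endpoint in $Y$: otherwise, it would go to some other-component $C''$, but then $\hat{C}''$ would be reachable from $X_C$ through an other-components-only path in $G_C\setminus E(T_C)$, placing $C''$ inside $\hat{Z}$, a contradiction.

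Applying the 3-Matching Lemma to $(\hat{Z},V(G)\setminus\hat{Z})$ produces three edges $u_iv_i$ for $i\in\{1,2,3\}$ with $u_i\in\hat{Z}$ and $v_i\in Y$. I will then show that every such edge distinct from $e$ yields a bridge-covering path from some node of $X_C$ to the $T_C$-node $\hat{W}_i\in Y_C$ containing $v_i$. If $u_i\in X$, the edge $u_iv_i$ itself is such a path: it is not in $E(T_C)$ since $e$ is the unique edge of $T_C$ connecting $X_C$ to $Y_C$. If instead $u_i$ belongs to a reachable other-component $C'\subseteq\hat{Z}$, then concatenating the other-components-only path from $X_C$ to $\hat{C}'$ (which exists by definition of $\hat{Z}$) with $u_iv_i$ gives the required path. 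Since $e$ is a single edge of $G$, at most one of the three matching edges coincides with $e$, so at least two of the $\hat{W}_i$'s lie in $R(X_C)$.

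To conclude, I will split into two cases. If some reached $\hat{W}_i$ is a block node, the first alternative of the lemma holds immediately. Otherwise every reached $\hat{W}_i$ is a lonely node $\{v_i\}$, and since the matching endpoints $v_i$ are pairwise distinct, so are the corresponding lonely nodes. At most one of them can equal $y$ (and only if $y$ is itself a lonely node), so removing $y$ still leaves at least two lonely nodes in $R(X_C)\setminus\{y\}$, giving the second alternative. The main technical obstacle is the correct definition of $\hat{Z}$ together with the verification that every matching edge leaving $\hat{Z}$ lands in $Y$; once this is set up, the remainder is a direct case analysis.
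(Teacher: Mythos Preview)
Your approach is essentially the paper's: your set $\hat Z$ is precisely the paper's $X'$ (the $G$-vertices corresponding to the $G_C$-nodes connected to $X_C$ after deleting $Y_C$), and the application of the 3-Matching Lemma followed by the conversion of matching edges into bridge-covering paths matches the paper's argument.

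There is one small imprecision in your final count. You establish that at least two of the $\hat W_i$ are reached, and then say ``at most one of them can equal $y$, so removing $y$ still leaves at least two''; but $2-1=1$, not $2$. The missing observation is that if \emph{exactly two} matching edges yield bridge-covering paths, then the third matching edge is $e$ itself and hence has $v_j=y'$; since the $v_i$ are pairwise distinct, the two reached endpoints $v_i$ differ from $y'$, so their lonely nodes $\hat W_i=\{v_i\}$ differ from $y$ (and if $y$ is a block node this is automatic anyway). Only when all three matching edges avoid $e$ do you get three reached lonely $\hat W_i$, of which one might be $y$. With this case split the count goes through; the paper's proof is equally terse at this point.
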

\begin{proof}
Let us assume that $R(X_C)$ contains only lonely nodes, otherwise the claim holds. Let $X$ be the nodes in $G_C$ which are connected to $X_C$ after removing $Y_C$, and let $Y$ be the remaining nodes in $G_C$. Let $X'$ and $Y'$ be the nodes in $G$ corresponding to $X$ and $Y$, resp. Let $e'$ be the edge in $G$ that corresponds to $e$ and let $y'$ be the node of $Y'$ that is incident to $e'$. In particular if $y$ is a lonely node then $y'=y$ and otherwise $y'$ belongs to the block of $C$ corresponding to $y$.

Observe that both $X_C$ and $Y_C$ (hence $X$ and $Y$) contain at least one block node, hence, since $S$ is canonical, both $X'$ and $Y'$ contain at least 4 nodes. Therefore, we can apply the 3-matching Lemma \ref{lem:matchingOfSize3} to $X'$ and obtain a matching $M=\{u'_1v'_1,u'_2v'_2,u'_3v'_3\}$ between $X'$ and $Y'$, where $u'_i\in X'$ and $v'_i\in Y'$. Let $u_i$ and $v_i$ be the nodes in $G_C$ corresponding to $u'_i$ and $v'_i$, resp. We remark that $v_i\in Y_C$: indeed otherwise $v_i$ would be connected to $X_C$ in $G_C\setminus Y_C$, contradicting the definition of $X$. Let us show that the $v_i$'s are all distinct. Assume by contradiction that $v_i=v_j$ for $i\neq j$. Since $v'_i\neq v'_j$ (being $M$ a matching) and $v'_i,v'_j$ are both associated with $v_i$, this means that $v_i$ is a block node in $R(X_C)$, a contradiction.   

For each $u_i$ there exists a path $P_{w_iu_i}$ in $G_C\setminus E(T_C)$ between $u_i$ and some $w_i\in X_C$ (possibly $w_i=u_i$). Observe that $P_{w_iu_i}\cup u_iv_i$ is a bridge-covering path between $w_i\in X_C$ and $v_i\in Y_C$ unless $u_iv_i=xy$ (recall that the edges $E(T_C)$ cannot be used in a bridge-covering path). In particular, since the $v_i$'s are all distinct, at least two such paths are bridge-covering paths from $X_C$ to distinct (lonely) nodes of $Y_C\setminus \{y\}$, implying $|R(X_C)\setminus\{y\}|\geq 2$.
\end{proof}

Let $T_C(u,v)$ denote the path in $T_C$ between nodes $u$ and $v$. 
\begin{lemma}\label{lem:merge2paths}
Let $b$ and $b'$ be two block nodes of $T_C$ that are leaves. Let $u\in R(b)$ and $u'\in R(b')$ be nodes of $V(T_C)\setminus \{b,b'\}$. Suppose that $T_C(b,u)$ and $T_C(b',u')$ both contain some node $w$ (possibly $w=u=u'$) and $|E(T_{C}(b,u))\cup E(T_C(b,'u'))|\geq 4$. Then in polynomial time one can find a canonical 2-edge cover $S'$ such that $S'$ has fewer bridges than $S$ and $\cost(S')\leq \cost(S)$.  
\end{lemma}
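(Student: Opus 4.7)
The plan is to simultaneously augment $S$ along the two bridge-covering paths $P_{bu}$ and $P_{b'u'}$ that witness $u\in R(b)$ and $u'\in R(b')$. Let $br:=|E(T_C(b,u))\cup E(T_C(b',u'))|\geq 4$ and let $bl\geq 2$ denote the number of block nodes on these two tree paths, which includes both $b$ and $b'$.

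In the principal case I would assume the two paths are internally vertex-disjoint and set $S'':=S\cup E(P_{bu})\cup E(P_{b'u'})$. Because $T_C(b,u)$ and $T_C(b',u')$ both contain $w$, the two ``cycles'' $T_C(b,u)\cup P_{bu}$ and $T_C(b',u')\cup P_{b'u'}$ in $G_C$ share the node $w$, so every edge of their union lies on one of the cycles and hence is not a bridge of $S''$; in particular, all $br$ bridges of $T_C(b,u)\cup T_C(b',u')$ get covered. Using exactly the credit bookkeeping from the single-path estimate before the lemma---the $(m_1+1)+(m_2+1)$ new edges (with $m_i$ the number of internal nodes of the $i$-th path) are offset by $\tfrac{br}{4}$ credits from the covered bridges, $bl$ credits from the merged blocks, $\geq m_1+m_2$ credits from the merged internal components, and a single $+1$ correction for the newly created 2EC block (or equivalently for promoting $C$ to a large component)---one obtains
\[
\cost(S'')-\cost(S)\;\leq\;(m_1+m_2+2)\;-\;\Bigl(\tfrac{br}{4}+bl+m_1+m_2\Bigr)\;+\;1\;=\;3-bl-\tfrac{br}{4}\;\leq\;0,
\]
where the last inequality uses $bl\geq 2$ and $br\geq 4$.

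The main obstacle is the degenerate case in which every admissible choice of the two paths is forced to share some internal node $v$. In that situation I would bypass the two-path analysis and instead build a single bridge-covering path $P_{bb'}$ by concatenating the prefix of $P_{bu}$ from $b$ to $v$ with the reverse of the prefix of $P_{b'u'}$ from $b'$ to $v$, and extracting a simple $b,b'$-path; the result lies in $G_C\setminus E(T_C)$ with interior outside $V(T_C)$ since both original paths do. Setting $S''':=S\cup E(P_{bb'})$ covers every bridge on $T_C(b,b')$, which is nonempty and contains the two block nodes $b$ and $b'$, so the single-path bound from the section preamble gives $\cost(S''')-\cost(S)\leq 2-bl'-\tfrac{br'}{4}\leq -\tfrac{1}{4}$.

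In either subcase the output is a component coarsening of $S$ with strictly fewer bridges, and a component with at least $8$ vertices is preserved. A final invocation of Lemma~\ref{lem:canonical2EdgeCover} (which only performs local, component-coarsening replacements) cleans up any violated canonical property at no larger size and without introducing new bridges, yielding the desired canonical $S'$ with $\cost(S')\leq\cost(S)$.
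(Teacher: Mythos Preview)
Your proposal is correct and follows essentially the same approach as the paper: if the two bridge-covering paths share an internal node, splice them into a single (cheap) bridge-covering path between the block leaves $b$ and $b'$; otherwise add both paths and do the credit accounting with $br\geq 4$ covered bridges and $\geq 2$ absorbed blocks, which is exactly the paper's computation (the paper just hard-codes $bl=2$, $br=4$ and treats the shared-node case first). One small caveat: your closing appeal to Lemma~\ref{lem:canonical2EdgeCover} is not in the paper, and as stated that lemma only guarantees $|H'|\leq|H|$, not $\cost(H')\leq\cost(H)$, so if you keep that step you would need a separate argument that its local replacements never increase cost; the paper simply asserts (without further verification) that the augmented $S'$ remains canonical.
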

\begin{proof}
Let $P_{bu}$ (resp., $P_{b'u'}$) be a bridge-covering path between $b$ and $u$ (resp., $b'$ and $u'$). Suppose that $P_{bu}$ and $P_{b'u'}$ share an internal node. Then there exists a (cheap) bridge-covering path between $b$ and $b'$ and we can find the desired $S'$ by the previous discussion. So next assume that $P_{bu}$ and $P_{b'u'}$ are internally node disjoint. Let $S'=S\cup E(P_{bu})\cup E(P_{b'u'})$. All the nodes and bridges induced by $E(P_{bu})\cup E(P_{b'u'})\cup E(T_C(b,u))\cup E(T_C(b',u'))$ become part of the same block or 2EC component $C'$ of $S'$. Furthermore $C'$ contains at least $4$ bridges of $C$ and the two blocks $B$ and $B'$ corresponding to $b$ and $b'$, resp. One has $|S'|=|S|+|E(P_{bu})|+|E(P_{b'u'})|$. Moreover $\credit(S')\leq \credit(S)+1-(|E(P_{bu})|-1)-(|E(P_{b'u'})|-1)-2-\frac{1}{4} \cdot 4$. In the latter inequality the $+1$ is due to the extra credit required by $C'$, the $-2$ to the removed blocks $B$ and $B'$ (i.e., blocks of $S$ which are not blocks of $S'$), and the $-\frac{1}{4} \cdot 4$ to the removed bridges. Altogether $\cost(S)-\cost(S')\geq \frac{1}{4}\cdot 4+2-3=0$. Notice that $S'$ has fewer bridges than $S$.
\end{proof}

\begin{figure}
\begin{center}
\includegraphics[width=0.6\linewidth]{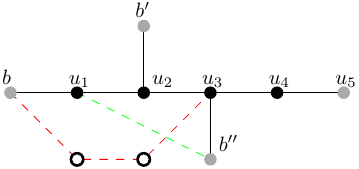}
\caption{The solid edges induce $T_C$. The white nodes correspond to contracted connected components other than $C$. The gray and black nodes are block and lonely nodes of $T_C$, resp. The red edges induce an (expensive) bridge-covering path covering the bridges $u_1u_2$, $u_2u_3$ and a bridge corresponding to $bu_1$. The green edge induces another (expensive) bridge-covering path. An edge $bb''$ or $bu_4$ would induce a cheap bridge-covering path. Lemma \ref{lem:merge2paths} can be applied with $u=u_3\in R(b)$ and $u'=u_1\in R(b')$ (both $T_C(b,u_3)$ and $T_C(b',u_1)$ contain $w=u_2$). In particular, one obtains $S'$ by adding to $S$ the red and green edges. Notice that the new component $C'$ of $S'$ which contains $V(C)$ has $4$ fewer bridges and $1$ fewer block than $C$.}
\label{fig:BridgeCoveringPathsIntersection}
\end{center}
\end{figure}

We are now ready to prove Lemma \ref{lem:bridgeCovering}.
\begin{proof}[Proof of Lemma \ref{lem:bridgeCovering}]
We will show how to find a canonical $2$-edge cover $S'$ of $G$ with $\cost(S')\leq\cost(S)$ and fewer bridges than~$S$. Applying this repeatedly implies the claim of the lemma. If there exists a cheap bridge-covering path $P_C$ (condition that we can check in polynomial time), we simply augment $S$ along $P_C$ hence obtaining the desired $S'$. Thus we next assume that no such path exists. 

Let $P'=bu_1\dots u_\ell$ be a longest path in $T_C$.  Notice that $b$ must be a leaf of $T_C$, hence a block node (corresponding to some block $B$ of $C$). Let us consider $R(b)$. Since by assumption there is no cheap bridge-covering path, $R(b)$ does not contain any block node. Hence $|R(b)\setminus\{u_1\}|\geq 2$ and $R(b)$ contains only lonely nodes by Lemma \ref{lem:reachable} (applied to $xy=bu_1$). 

We next distinguish a few subcases depending on $R(b)$. Let $V_i$, $i\geq 1$, be the nodes in $V(T_C)\setminus V(P')$ such that their path to $b$ in $T_C$ passes through $u_i$. Notice that $\{V(P'),V_1, \ldots,V_\ell\}$ is a partition of $V(T_C)$. We observe that any node in $V_i$ is at distance at most $i$ from $u_i$ in $T_C$ (otherwise $P'$ would not be a longest path in $T_C$). We also observe that as usual the leaves of $T_C$ in $V_i$ are block nodes. 
\begin{remark}\label{rem:V1V2}
This implies that all the nodes in $V_1$ are block nodes, and all the nodes in $V_2$ are block nodes or are lonely non-leaf nodes at distance $1$ from $u_2$ in $T_C$.
\end{remark}

\medskip\noindent{\bf (1)} There exists $u\in R(b)$ with $u\notin \{u_1,u_2,u_3\}\cup V_1\cup V_2$. By definition there exists a bridge-covering path between $b$ and $u$ containing at least $4$ bridges, hence cheap. This is excluded by the previous steps of the construction. 

\medskip\noindent{\bf (2)} There exists $u\in R(b)$ with $u\in V_1\cup V_2$. Since $u$ is not a block node, by Remark \ref{rem:V1V2} $u$ must be a lonely non-leaf node in $V_2$ at distance $1$ from $u_2$. Furthermore $V_2$ must contain at least one leaf block node $b'$ adjacent to $u$. Consider $R(b')$. By the assumption that there are no cheap bridge-covering paths and Lemma~\ref{lem:reachable} (applied to $xy=b'u$), $|R(b')\setminus \{u\}|\geq 2$. In particular $R(b')$ contains at least one lonely node $u'\notin \{b',b,u\}$. The tuple $(b,b',u,u')$ satisfies the conditions of Lemma \ref{lem:merge2paths} (in particular both $T_C(b,u)$ and $T_C(b',u')$ contain $w=u_2$), hence we can obtain the desired $S'$. 

\medskip\noindent{\bf (3)} $R(b)\setminus\{u_1\}=\{u_2,u_3\}$. Recall that $u_2$ and $u_3$ are lonely nodes. We distinguish $2$ subcases:

\medskip\noindent{\bf (3.a)} $V_1\cup V_2\neq \emptyset$. Take any leaf (block) node $b'\in V_1\cup V_2$, say $b'\in V_i$. Let $\ell'$ be the node adjacent to $b'$. By the assumption that there are no cheap bridge-covering paths and Lemma~\ref{lem:reachable} (applied to $xy=b'\ell'$), $R(b')\setminus \{\ell'\}$ has cardinality at least $2$ and contains only lonely nodes. Choose any $u'\in R(b')\setminus \{\ell'\}$. Notice that $u'\notin V_1$ by Remark \ref{rem:V1V2} (but it could be a lonely node in $V_2$ other than $\ell'$). 
The tuple $(b,b',u_3,u')$ satisfies the conditions of Lemma \ref{lem:merge2paths} (in particular both $T_C(b,u_3)$ and $T_C(b',u')$ contain $w=u_2$), hence we can compute the desired $S'$.

\medskip\noindent{\bf (3.b)} $V_1\cup V_2= \emptyset$. By Lemma \ref{lem:reachable} (applied to $xy=u_1u_2$) the set $R(\{b,u_1\})$ contains a block node or $R(\{b,u_1\})\setminus\{u_2\}$ contains at least 2 lonely nodes. Suppose first that $R(\{b,u_1\})$ contains a block node $b'$. Notice that $b'\notin R(b)$ by the assumption that there are no cheap bridge-covering paths, hence $u_1 \in R(b')$. Notice also that $b'\notin \{u_2,u_3\}$ since those are lonely nodes. Thus the tuple $(b,b',u_2,u_1)$ satisfies the conditions of Lemma \ref{lem:merge2paths} (in particular both $T_C(b,u_2)$ and $T_C(b',u_1)$ contain $w=u_2$), hence we can obtain the desired $S'$. 

The remaining case is that $R(\{b,u_1\})\setminus\{u_2\}$ contains at least $2$ lonely nodes. Let us choose $u'\in R(\{b,u_1\})\setminus\{u_2\}$ with $u'\neq u_3$. Let $P_{bu_2}$ (resp., $P_{u'u_1}$) be a bridge-covering path between $b$ and $u_2$ (resp., $u'$ and $u_1$). Notice that $P_{bu_2}$ and $P_{u'u_1}$ must be internally node disjoint otherwise $u'\in R(b)$, which is excluded since $R(b)\subseteq \{u_1,u_2,u_3\}$ by assumption. Consider $S'=S\cup E(P_{bu_2})\cup E(P_{u'u_1})\setminus \{u_1u_2\}$. Notice that $S'$ has fewer bridges than $S$. One has $|S'|=|S|+|E(P_{bu_2})|+|E(P_{u'u_1})|-1$, where the $-1$ comes from the edge $u_1u_2$. Furthermore $\credit(S')\leq \credit(S)+1-(|E(P_{bu_2})|-1)-(|E(P_{u'u_1})|-1)-\frac{1}{4} \cdot {4}-1$, where the $+1$ comes from the extra credit needed for the block or 2EC component $C'$ containing $V(B)$, the final $-1$ from the removed block $B$, and the $-\frac{1}{4} \cdot {4}$ from the at least $4$ bridges removed from $S$ (namely a bridge corresponding to $bu_1$, the edges $u_1u_2$ and $u_2u_3$, and one more bridge incident to $u'$). Altogether $\cost(S)-\cost(S')\geq -1 + 1 + 1 + 1 - 2 = 0$.
\end{proof}

\printbibliography

@inproceedings{KN25,
    author = "Kobayashi, Yusuke and Noguchi, Takashi",
    title = "Validating a {PTAS} for Triangle-Free 2-Matching via a Simple Decomposition Theorem",
    booktitle = "{SOSA}",
    pages = "281--289",
    publisher = "{SIAM}",
    year = "2025",
    url = "https://doi.org/10.1137/1.9781611978315.22",
    doi = "10.1137/1.9781611978315.22"
}

@article{AHS23,
    author = "Angelidakis, Haris and Hyatt{-}Denesik, Dylan and Sanit{\`{a}}, Laura",
    title = "Node connectivity augmentation via iterative randomized rounding",
    journal = "Math. Program.",
    volume = "199",
    number = "1",
    pages = "995--1031",
    year = "2023",
    url = "https://doi.org/10.1007/s10107-022-01854-z",
    doi = "10.1007/S10107-022-01854-Z"
}

@inproceedings{BDS22,
    author = "Bamas, {\'{E}}tienne and Drygala, Marina and Svensson, Ola",
    title = "A Simple LP-Based Approximation Algorithm for the Matching Augmentation Problem",
    booktitle = "{IPCO}",
    series = "Lecture Notes in Computer Science",
    volume = "13265",
    pages = "57--69",
    publisher = "Springer",
    year = "2022",
    url = "https://doi.org/10.1007/978-3-031-06901-7\\_5",
    doi = "10.1007/978-3-031-06901-7\\_5"
}

@inproceedings{TZ22,
    author = "Traub, Vera and Zenklusen, Rico",
    title = "Local Search for Weighted Tree Augmentation and Steiner Tree",
    booktitle = "{SODA}",
    pages = "3253--3272",
    publisher = "{SIAM}",
    year = "2022",
    url = "https://doi.org/10.1137/1.9781611977073.128",
    doi = "10.1137/1.9781611977073.128"
}

@inproceedings{GJT22,
    author = "Grandoni, Fabrizio and Ameli, Afrouz Jabal and Traub, Vera",
    title = "Breaching the 2-approximation barrier for the forest augmentation problem",
    booktitle = "{STOC}",
    pages = "1598--1611",
    publisher = "{ACM}",
    year = "2022",
    url = "https://doi.org/10.1145/3519935.3520035",
    doi = "10.1145/3519935.3520035"
}

@article{BGJ20,
    author = "Byrka, Jaroslaw and Grandoni, Fabrizio and Ameli, Afrouz Jabal",
    title = "Breaching the 2-Approximation Barrier for Connectivity Augmentation: {A} Reduction to Steiner Tree",
    journal = "{SIAM} J. Comput.",
    volume = "52",
    number = "3",
    pages = "718--739",
    year = "2023",
    url = "https://doi.org/10.1137/21m1421143",
    doi = "10.1137/21M1421143"
}

@inproceedings{CTZ21,
    author = "Cecchetto, Federica and Traub, Vera and Zenklusen, Rico",
    title = "Bridging the gap between tree and connectivity augmentation: unified and stronger approaches",
    booktitle = "{STOC}",
    pages = "370--383",
    publisher = "{ACM}",
    year = "2021",
    url = "https://doi.org/10.1145/3406325.3451086",
    doi = "10.1145/3406325.3451086"
}

@article{CSS01,
    author = {Cheriyan, Joseph and Seb{\"{o}}, Andr{\'{a}}s and Szigeti, Zolt{\'{a}}n},
    title = "Improving on the 1.5-Approximation of a Smallest 2-Edge Connected Spanning Subgraph",
    journal = "{SIAM} J. Discret. Math.",
    volume = "14",
    number = "2",
    pages = "170--180",
    year = "2001",
    url = "https://doi.org/10.1137/S0895480199362071",
    doi = "10.1137/S0895480199362071"
}

@article{GGJS19,
    author = "G{\'{a}}lvez, Waldo and Grandoni, Fabrizio and Ameli, Afrouz Jabal and Sornat, Krzysztof",
    title = "On the Cycle Augmentation Problem: Hardness and Approximation Algorithms",
    journal = "Theory Comput. Syst.",
    volume = "65",
    number = "6",
    pages = "985--1008",
    year = "2021",
    url = "https://doi.org/10.1007/s00224-020-10025-6",
    doi = "10.1007/S00224-020-10025-6"
}

@inproceedings{GVS93,
    author = "Garg, Naveen and Vempala, Santosh S. and Singla, Aman",
    title = "Improved Approximation Algorithms for Biconnected Subgraphs via Better Lower Bounding Techniques",
    booktitle = "{SODA}",
    pages = "103--111",
    publisher = "{ACM/SIAM}",
    year = "1993",
    url = "http://dl.acm.org/citation.cfm?id=313559.313618"
}

@article{HVV19,
    author = {Hunkenschr{\"{o}}der, Christoph and Vempala, Santosh S. and Vetta, Adrian},
    title = "A 4/3-Approximation Algorithm for the Minimum 2-Edge Connected Subgraph Problem",
    journal = "{ACM} Trans. Algorithms",
    volume = "15",
    number = "4",
    pages = "55:1--55:28",
    year = "2019",
    url = "https://doi.org/10.1145/3341599",
    doi = "10.1145/3341599"
}

@article{J01,
    author = "Jain, Kamal",
    title = "A Factor 2 Approximation Algorithm for the Generalized Steiner Network Problem",
    journal = "Comb.",
    volume = "21",
    number = "1",
    pages = "39--60",
    year = "2001",
    url = "https://doi.org/10.1007/s004930170004",
    doi = "10.1007/S004930170004"
}

@inproceedings{KV94,
    author = "Khuller, Samir and Vishkin, Uzi",
    title = "Biconnectivity Approximations and Graph Carvings",
    booktitle = "{STOC}",
    pages = "759--770",
    publisher = "{ACM}",
    year = "1992",
    url = "https://doi.org/10.1145/129712.129786",
    doi = "10.1145/129712.129786"
}

@inproceedings{TZ21,
    author = "Traub, Vera and Zenklusen, Rico",
    title = "A Better-Than-2 Approximation for Weighted Tree Augmentation",
    booktitle = "{FOCS}",
    pages = "1--12",
    publisher = "{IEEE}",
    year = "2021",
    url = "https://doi.org/10.1109/FOCS52979.2021.00010",
    doi = "10.1109/FOCS52979.2021.00010"
}

@article{A19,
    author = "Adjiashvili, David",
    title = "Beating Approximation Factor Two for Weighted Tree Augmentation with Bounded Costs",
    journal = "{ACM} Trans. Algorithms",
    volume = "15",
    number = "2",
    pages = "19:1--19:26",
    year = "2019",
    url = "https://doi.org/10.1145/3182395",
    doi = "10.1145/3182395"
}

@article{CCDZ23,
    author = "Cheriyan, Joseph and Cummings, Robert and Dippel, Jack and Zhu, Jasper",
    title = "An Improved Approximation Algorithm for the Matching Augmentation Problem",
    journal = "{SIAM} J. Discret. Math.",
    volume = "37",
    number = "1",
    pages = "163--190",
    year = "2023",
    url = "https://doi.org/10.1137/21m1453505",
    doi = "10.1137/21M1453505"
}

@article{CDGKN20,
    author = "Cheriyan, Joe and Dippel, Jack and Grandoni, Fabrizio and Khan, Arindam and Narayan, Vishnu V.",
    title = "The matching augmentation problem: a $\frac{7}{4}$-approximation algorithm",
    journal = "Math. Program.",
    volume = "182",
    number = "1",
    pages = "315--354",
    year = "2020",
    url = "https://doi.org/10.1007/s10107-019-01394-z",
    doi = "10.1007/S10107-019-01394-Z"
}

@article{CG18a,
    author = "Cheriyan, Joseph and Gao, Zhihan",
    title = "Approximating (Unweighted) Tree Augmentation via Lift-and-Project, Part {II}",
    journal = "Algorithmica",
    volume = "80",
    number = "2",
    pages = "608--651",
    year = "2018",
    url = "https://doi.org/10.1007/s00453-017-0275-7",
    doi = "10.1007/S00453-017-0275-7"
}

@article{CT00,
    author = "Cheriyan, Joseph and Thurimella, Ramakrishna",
    title = "Approximating Minimum-Size k-Connected Spanning Subgraphs via Matching",
    journal = "{SIAM} J. Comput.",
    volume = "30",
    number = "2",
    pages = "528--560",
    year = "2000",
    url = "https://doi.org/10.1137/S009753979833920X",
    doi = "10.1137/S009753979833920X"
}

@article{N20waoa,
    author = "Nutov, Zeev",
    title = "2-node-connectivity network design",
    journal = "Theor. Comput. Sci.",
    volume = "987",
    pages = "114367",
    year = "2024",
    url = "https://doi.org/10.1016/j.tcs.2023.114367",
    doi = "10.1016/J.TCS.2023.114367"
}

@article{EFKN09,
    author = "Even, Guy and Feldman, Jon and Kortsarz, Guy and Nutov, Zeev",
    title = "A 1.8 approximation algorithm for augmenting edge-connectivity of a graph from 1 to 2",
    journal = "{ACM} Trans. Algorithms",
    volume = "5",
    number = "2",
    pages = "21:1--21:17",
    year = "2009",
    url = "https://doi.org/10.1145/1497290.1497297",
    doi = "10.1145/1497290.1497297"
}

@inproceedings{GKZ18,
    author = "Grandoni, Fabrizio and Kalaitzis, Christos and Zenklusen, Rico",
    title = "Improved approximation for tree augmentation: saving by rewiring",
    booktitle = "{STOC}",
    pages = "632--645",
    publisher = "{ACM}",
    year = "2018",
    url = "https://doi.org/10.1145/3188745.3188898",
    doi = "10.1145/3188745.3188898"
}

@article{KN16,
    author = "Kortsarz, Guy and Nutov, Zeev",
    title = "LP-relaxations for tree augmentation",
    journal = "Discret. Appl. Math.",
    volume = "239",
    pages = "94--105",
    year = "2018",
    url = "https://doi.org/10.1016/j.dam.2017.12.033",
    doi = "10.1016/J.DAM.2017.12.033"
}

@article{KN16b,
    author = "Kortsarz, Guy and Nutov, Zeev",
    title = "A Simplified 1.5-Approximation Algorithm for Augmenting Edge-Connectivity of a Graph from 1 to 2",
    journal = "{ACM} Trans. Algorithms",
    volume = "12",
    number = "2",
    pages = "23:1--23:20",
    year = "2016",
    url = "https://doi.org/10.1145/2786981",
    doi = "10.1145/2786981"
}

@article{N17,
    author = "Nutov, Zeev",
    title = "On the Tree Augmentation Problem",
    journal = "Algorithmica",
    volume = "83",
    number = "2",
    pages = "553--575",
    year = "2021",
    url = "https://doi.org/10.1007/s00453-020-00765-9",
    doi = "10.1007/S00453-020-00765-9"
}

@inproceedings{GHM23,
    author = "Garg, Mohit and Hommelsheim, Felix and Megow, Nicole",
    title = "Matching Augmentation via Simultaneous Contractions",
    booktitle = "{ICALP}",
    series = "LIPIcs",
    volume = "261",
    pages = "65:1--65:17",
    publisher = {Schloss Dagstuhl - Leibniz-Zentrum f{\"{u}}r Informatik},
    year = "2023",
    url = "https://doi.org/10.4230/LIPIcs.ICALP.2023.65",
    doi = "10.4230/LIPICS.ICALP.2023.65"
}

@inproceedings{GGJ23soda,
    author = "Garg, Mohit and Grandoni, Fabrizio and Ameli, Afrouz Jabal",
    title = "Improved Approximation for Two-Edge-Connectivity",
    booktitle = "{SODA}",
    pages = "2368--2410",
    publisher = "{SIAM}",
    year = "2023",
    url = "https://doi.org/10.1137/1.9781611977554.ch92",
    doi = "10.1137/1.9781611977554.CH92"
}

@inproceedings{BGJ23,
    author = "Bosch{-}Calvo, Miguel and Grandoni, Fabrizio and Ameli, Afrouz Jabal",
    title = "A 4/3 Approximation for 2-Vertex-Connectivity",
    booktitle = "{ICALP}",
    series = "LIPIcs",
    volume = "261",
    pages = "29:1--29:13",
    publisher = {Schloss Dagstuhl - Leibniz-Zentrum f{\"{u}}r Informatik},
    year = "2023",
    url = "https://doi.org/10.4230/LIPIcs.ICALP.2023.29",
    doi = "10.4230/LIPICS.ICALP.2023.29"
}

@inproceedings{KN23,
    author = "Kobayashi, Yusuke and Noguchi, Takashi",
    title = "An Approximation Algorithm for Two-Edge-Connected Subgraph Problem via Triangle-Free Two-Edge-Cover",
    booktitle = "{ISAAC}",
    series = "LIPIcs",
    volume = "283",
    pages = "49:1--49:10",
    publisher = {Schloss Dagstuhl - Leibniz-Zentrum f{\"{u}}r Informatik},
    year = "2023",
    url = "https://doi.org/10.4230/LIPIcs.ISAAC.2023.49",
    doi = "10.4230/LIPICS.ISAAC.2023.49"
}

@article{K23,
    author = "Paluch, Katarzyna",
    title = "Triangle-free 2-matchings",
    journal = "CoRR",
    volume = "abs/2311.13590",
    year = "2023",
    url = "https://doi.org/10.48550/arXiv.2311.13590",
    doi = "10.48550/ARXIV.2311.13590"
}

@article{SV14,
    author = {Seb{\"{o}}, Andr{\'{a}}s and Vygen, Jens},
    title = "Shorter tours by nicer ears: 7/5-Approximation for the graph-TSP, 3/2 for the path version, and 4/3 for two-edge-connected subgraphs",
    journal = "Comb.",
    volume = "34",
    number = "5",
    pages = "597--629",
    year = "2014",
    url = "https://doi.org/10.1007/s00493-014-2960-3",
    doi = "10.1007/S00493-014-2960-3"
}

@article{N03,
    author = "Nagamochi, Hiroshi",
    title = "An approximation for finding a smallest 2-edge-connected subgraph containing a specified spanning tree",
    journal = "Discret. Appl. Math.",
    volume = "126",
    number = "1",
    pages = "83--113",
    year = "2003",
    url = "https://doi.org/10.1016/S0166-218X(02)00218-4",
    doi = "10.1016/S0166-218X(02)00218-4"
}

@inproceedings{CHNSS22,
    author = "Chalermsook, Parinya and Huang, Chien{-}Chung and Nanongkai, Danupon and Saranurak, Thatchaphol and Sukprasert, Pattara and Yingchareonthawornchai, Sorrachai",
    title = "Approximating k-Edge-Connected Spanning Subgraphs via a Near-Linear Time {LP} Solver",
    booktitle = "{ICALP}",
    series = "LIPIcs",
    volume = "229",
    pages = "37:1--37:20",
    publisher = {Schloss Dagstuhl - Leibniz-Zentrum f{\"{u}}r Informatik},
    year = "2022",
    url = "https://doi.org/10.4230/LIPIcs.ICALP.2022.37",
    doi = "10.4230/LIPICS.ICALP.2022.37"
}

@article{CG18,
    author = "Cheriyan, Joseph and Gao, Zhihan",
    title = "Approximating (Unweighted) Tree Augmentation via Lift-and-Project, Part {I:} Stemless {TAP}",
    journal = "Algorithmica",
    volume = "80",
    number = "2",
    pages = "530--559",
    year = "2018",
    url = "https://doi.org/10.1007/s00453-016-0270-4",
    doi = "10.1007/S00453-016-0270-4"
}

@article{HV17,
    author = "Heeger, Klaus and Vygen, Jens",
    title = "Two-Connected Spanning Subgraphs with at Most $\frac{10}{7}OPT$ Edges",
    journal = "{SIAM} J. Discret. Math.",
    volume = "31",
    number = "3",
    pages = "1820--1835",
    year = "2017",
    url = "https://doi.org/10.1137/16M1091587",
    doi = "10.1137/16M1091587"
}

@article{CN13,
    author = "Cohen, Nachshon and Nutov, Zeev",
    title = "A (1+ln2)-approximation algorithm for minimum-cost 2-edge-connectivity augmentation of trees with constant radius",
    journal = "Theor. Comput. Sci.",
    volume = "489-490",
    pages = "67--74",
    year = "2013",
    url = "https://doi.org/10.1016/j.tcs.2013.04.004",
    doi = "10.1016/J.TCS.2013.04.004"
}

@article{H24,
    author = "Hartvigsen, David",
    doi = "https://doi.org/10.1002/jgt.23089",
    journal = "Journal of Graph Theory",
    number = "3",
    pages = "581-662",
    title = "Finding triangle-free 2-factors in general graphs",
    url = "https://onlinelibrary.wiley.com/doi/abs/10.1002/jgt.23089",
    volume = "106",
    year = "2024",
}

@article{GGTW09,
  author       = {Harold N. Gabow and
                  Michel X. Goemans and
                  {\'{E}}va Tardos and
                  David P. Williamson},
  title        = {Approximating the smallest \emph{k}-edge connected spanning subgraph
                  by LP-rounding},
  journal      = {Networks},
  volume       = {53},
  number       = {4},
      url = "https://doi.org/10.1002/net.20289",
    doi = "10.1002/NET.20289",
  pages        = {345--357},
  year         = {2009}
}

@phdthesis{Hartvigsen1984extensions,
    author = "Hartvigsen, David",
    title = "Extensions of matching theory",
    year = "1984",
    school = "Carnegie-Mellon University, Pittsburgh"
}

@article{GG12,
  author       = {Harold N. Gabow and
                  Suzanne Gallagher},
  title        = {Iterated Rounding Algorithms for the Smallest \emph{k}-Edge Connected
                  Spanning Subgraph},
  journal      = {{SIAM} J. Comput.},
  volume       = {41},
  number       = {1},
  url          = {https://doi.org/10.1137/080732572},
  doi          = {10.1137/080732572},
  pages        = {61--103},
  year         = {2012}
}

@inproceedings{FGKS18,
  author       = {Samuel Fiorini and
                  Martin Gro{\ss} and
                  Jochen K{\"{o}}nemann and
                  Laura Sanit{\`{a}}},
  title        = {Approximating Weighted Tree Augmentation via Chv{\'{a}}tal-Gomory
                  Cuts},
  booktitle    = {{SODA}},
  pages        = {817--831},
   url          = {https://doi.org/10.1137/1.9781611975031.53},
  doi          = {10.1137/1.9781611975031.53},
  publisher    = {{SIAM}},
  year         = {2018}
}

@article{DKL76,
    author = "Dinits, E. A. and Karzanov, A. V. and Lomonosov, M. V.",
    title = "On the structure of a family of minimal weighted cuts in a graph",
    journal = "Studies in Discrete Optimization",
    pages = "290--306",
    year = "1976"
}

@inproceedings{TZ23,
  author       = {Vera Traub and
                  Rico Zenklusen},
  title        = {A (1.5+{\(\epsilon\)})-Approximation Algorithm for Weighted Connectivity
                  Augmentation},
  booktitle    = {{STOC}},
  pages        = {1820--1833},
   url          = {https://doi.org/10.1145/3564246.3585122},
  doi          = {10.1145/3564246.3585122},
  publisher    = {{ACM}},
  year         = {2023}
}

@article{F98,
  author       = {Cristina G. Fernandes},
  title        = {A Better Approximation Ratio for the Minimum Size \emph{k}-Edge-Connected
                  Spanning Subgraph Problem},
  journal      = {J. Algorithms},
  volume       = {28},
  number       = {1},
    url          = {https://doi.org/10.1006/jagm.1998.0931},
  doi          = {10.1006/JAGM.1998.0931},
  pages        = {105--124},
  year         = {1998}
}

@inproceedings{CL99,
  author       = {Artur Czumaj and
                  Andrzej Lingas},
  title        = {On Approximability of the Minimum-Cost \emph{k}-Connected Spanning
                  Subgraph Problem},
  booktitle    = {{SODA}},
  url          = {http://dl.acm.org/citation.cfm?id=314500.314573},
  pages        = {281--290},
  publisher    = {{ACM/SIAM}},
  year         = {1999}
}

\end{document}